\renewcommand{\cite}{\citep}
\author{Juan Pablo Vigneaux}
\thanks{The research presented here was developed at the Universit\'e Paris Diderot as part of my doctoral dissertation. I would like to thank Daniel Bennequin for his invaluable advice during my Ph.D. years, as well as Samson Abramsky, Philippe Elbaz-Vincent, and TAC's anonymous referee---who reviewed the material presented here at different stages---for their detailed feedback. I am also grateful to Gr\'egoire Sergeant-Perthuis, Olivier Peltre, Jean-Michel Fischer, and Daniel Juteau for many valuable discussions during the last years. }
\address{Universit\'e de Paris, Sorbonne Universit\'e, CNRS, Institut de Math\'ematiques de Jussieu-Paris Rive Gauche (IMJ-PRG), 75013, Paris, France.\\
Max-Planck-Institut f\"{u}r Mathematik in den Naturwissenschaften, Inselstra{\ss}e 22, 04013, Leipzig
}
\title {Information structures and their cohomology}
\keywords{information cohomology, entropy, nonextensive statistics, information structures, sheaves, topos}
\newtheorem{theorem}{Theorem}
\newtheorem{prop}{Proposition}
\newtheorem{lem}{Lemma}
\newtheorem{defi}{Definition}
\newtheorem{cor}{Corollary}
\newcommand{\salg}[1]{\mathfrak {#1}}
\newcommand{\supp}{\operatorname{supp}}
\renewcommand{\ker}{\operatorname{ker}}
\newcommand{\coker}{\operatorname{coker}}
\newcommand{\im}{\operatorname{im}}
\newcommand{\coim}{\operatorname{coim}}
\newcommand{\End}{\operatorname{End}}
\newcommand{\Hom}{\operatorname{Hom}}
\newcommand{\Ext}{\operatorname{Ext}}
\newcommand{\id}{\operatorname{id}}
\newcommand{\Ob}{\operatorname{Ob}}
\newcommand{\cat}[1]{\mathbf{#1}}
\newcommand{\Sets}{\cat{Sets}}
\newcommand{\Mod}{\cat{Mod}}
\newcommand{\MeasSets}{\cat{Meas}}
\newcommand{\InfoStr}{\cat{InfoStr}}
\newcommand{\ObsFin}{\cat{Obs}_{\mathrm{fin}}}
\newcommand{\eq}[1]{\overset{\tiny{(#1)}}=}
\newcommand{\sm}{\setminus}
\newcommand{\Rr}{\mathbb{R}}
\newcommand{\Nn}{\mathbb{N}}
\newcommand{\Zz}{\mathbb{Z}}
\newcommand{\set}[2]{\{\,#1 \, : \, #2\,\} }
\newcommand{\bigset}[2]{\left\{\,#1 \, : \, #2\,\right\} }
\newcommand{\Sring}{\sheaf A}
\newcommand{\Sinf}{\cat{S}}
\newcommand{\Smon}{\sheaf S}
\newcommand{\Us}{\top}
\newcommand{\FProb}{\sheaf P}
\newcommand{\sheaf}[1]{\mathscr{#1}}
\newcommand{\ent}[1]{S_{#1}}
\newcommand{\keyt}[1]{\emph{#1}}
\begin{document}

\maketitle

\begin{abstract}
We introduce the category of \emph{information structures}, whose objects are suitable diagrams of measurable sets that encode the possible outputs of a given family of observables and their mutual relationships of refinement; they serve as mathematical models of contextuality in classical and quantum settings. Each information structure can be regarded as a ringed site with trivial topology; the structure ring is generated by the observables themselves and its multiplication corresponds to  joint measurement. We extend Baudot and Bennequin's definition of \emph{information cohomology} to this setting, as a derived functor in the category of modules over the structure ring, and show explicitly that the bar construction gives a projective resolution in that category, recovering in this way the cochain complexes previously considered in the literature. Finally, we study the particular case of a one-parameter family of coefficients made of functions of probability distributions. The only $1$-cocycles are Shannon entropy or  Tsallis $\alpha$-entropy, depending on the value of the parameter. 
\end{abstract}

\tableofcontents

\section{Introduction}
Entropy plays a fundamental role in several domains of mathematics and physics, and it is natural to ask why it is so important and ubiquitous. A pragmatic answer might highlight the connections to  limiting theorems in probability theory and dynamical systems; examples are the Shannon-Macmillan-Breiman theorem \cite[Thm.~16.8.1]{Cover2006} or the quantification of large deviations in terms of relative entropy \cite{Varadhan2003}. But entropy also possess remarkable algebraic properties, as Shannon already pointed out in the foundational article of information theory \protect\cite{Shannon1948}. He gave there an ``axiomatic characterization'' of entropy based on expected  (``natural'') properties of a measure of uncertainty. Since then, many authors have contributed with similar theorems, proposing alternative axiomatic characterizations of \emph{information functions} (many of them summarized in \protect\cite{Csiszar2008}; see also \cite{Khinchin1957,Otahal1994,Hatori1958}); the list includes a recent category-theoretic article \protect\cite{Baez2011} that focuses on the \emph{information loss} induced by reductions between finite probability spaces. Some of these works led to the study of functional equations uniquely solved by the entropy \protect\cite{Aczel1975}, hence to very sophisticated techniques involving real analysis.

The purpose of the present article---which complements a previous one by Pierre Baudot and Daniel Bennequin \protect\citeyearpar{Baudot2015}---is to develop a new perspective that identifies entropy with a topological invariant of a finite statistical system. In particular, entropy appears as a cohomology class and not merely as a function. We introduce here a new definition of \emph{information structures} (categories of observables) and cohomological invariants associated to certain presheaves on them, using the  framework developed by Artin, Grothendieck, Verdier and their collaborators in the SGA 4 \cite{Artin1972, Artin1972-2}. We recall that toposes were introduced there as a general foundation of topology, that allowed a unified study of several cohomological invariants involving groups, topological spaces, and schemes. Our results constitute an extension of the field of application of these ideas.

\subsection{Entropies and their algebraic characterization}\label{sec:intr:entropy}
 \citet{Shannon1948} defined the information content of a random variable $X$, taking values in a finite set ${\sheaf E}_X$, by the formula
\begin{equation}\label{eq:shannon_ent_def}
\ent 1[X](P) := -\sum_{x\in {\sheaf E}_X} P (X=x) \log   P(X=x),
\end{equation}
where $P$ denotes a probability measure (law) on ${\sheaf E}_X$. The function $S_1$ is called (Gibbs-Shannon) entropy, and quantifies the  uncertainty of a measurement.\footnote{In information theory, it is customary to write $H(X)$ instead of $S_1[X](P)$. We have decided to reserve the use of $H$ for cohomology, employing instead the letter $S$, common in physics (although our $S_1$ is adimensional). In turn, the presence of Tsallis $\alpha$-entropies $S_\alpha$ justifies the subscript $1$, as explained below. Finally, the functional equations involved in this work contain evaluations of $S_1$ at different laws, making necessary to mention the argument $P$ explicitly. }

Given two random variables $X$ and $Y$, valued respectively in sets ${\sheaf E}_X$ and ${\sheaf E}_Y$, their joint measurement $(X,Y)$ is also random variable, valued in ${\sheaf E}_{XY}\subset {\sheaf E}_X\times {\sheaf E}_Y$. Following again  \cite{Shannon1948}, a probability law $P$ on ${\sheaf E}_{XY}$  can be represented by a tree as in Fig. \ref{intro:fig:trees}-(a).  The probability of observing $X=x$ is computed as the sum of all the outputs of $(X,Y)$ that contain $x$ in the first component: $X_*P(x):=P(X=x)=\sum_{ (x,y)\in {\sheaf E}_{XY}} P(x,y).$  The probability $X_*P$ on ${\sheaf E}_X$ is called \emph{marginal law}. Instead of measuring directly $(X,Y)$ one could measure first $X$, which constitutes a first random choice; the uncertainty that remains after obtaining the result $X=x_0$ is represented by the \emph{conditional probability law}  $P|_{X=x_0}:{\sheaf E}_{XY}\to [0,1]$, given by 
\begin{equation}
P|_{X=x_0}(x,y) := \begin{cases}
\frac{P(x,y)}{X_*P(x_0)} & \text{if } x=x_0 \\
0 & \text{otherwise}
\end{cases},
\end{equation}
provided $X_*P(x_0)>0$ (it remains undefined for $x_0$ in the maximal $X_*P$-null set). 
This iterated choice/measurement can in turn be pictured as a tree, e.g.  Fig. \ref{intro:fig:trees}-(b). The function $S_1$ satisfies the so-called \emph{chain rule}
\begin{equation}\label{intro:chain_rule_1}
\ent 1[(X,Y)](P) = \ent 1[X](X_*P) + \sum_{\substack{x\in {\sheaf E}_X\\ X_*P(x) > 0}} X_*P(x) \ent 1 [Y](Y_*P|_{X=x})
\end{equation}
 Evidently, if the measurement of $Y$ is performed first, we obtain another tree, Fig. \ref{intro:fig:trees}-(c), that corresponds to
\begin{equation}\label{intro:chain_rule_2}
\ent 1[(X,Y)](P) = \ent 1[Y](Y_*P) + \sum_{\substack{y \in {\sheaf E}_Y\\ Y_*P(y)>0}} Y_*P(y) \ent 1[X](X_*P|_{Y=y})
\end{equation}

 \citet[p.~392-393]{Shannon1948} gave an algebraic characterization of  
\begin{equation}
H_n:\Delta^{n}\to \Rr,\quad (p_0,...,p_n) \mapsto -\sum_{i=0}^n p_i\log p_i
\end{equation} as the only family of continuous functions that satisfies the chain rule \eqref{intro:chain_rule_1} for any possible tree---this is, for arbitrary pairs  $(X,Y)$,  setting $S_1[X] = H_{|{\sheaf E}_X|}$ and so on---and  such that $H_n(1/n,...,1/n)$ is monotonic in $n$.
\begin{figure}
\centering
\def\radpoint{.07}
\def\xdist{3}
\def\ydist{2.5}
\begin{tikzpicture}
\node[coordinate] (o) at (0,0) {};
\node[coordinate] (a) at (\xdist,\ydist) {};
\node[coordinate] (b) at (\xdist,0) {};
\node[coordinate] (c) at (\xdist,-\ydist) {};
\draw[fill] (a) circle [radius=\radpoint];
\draw[fill] (b) circle [radius=\radpoint];
\draw[fill] (c) circle [radius=\radpoint];
\draw[fill] (o) circle [radius=\radpoint];
\draw[thick] (o) -- (a);
\draw[thick] (o) -- (b);
\draw[thick] (o) -- (c);
\node [above left] at  (0.6*\xdist,\ydist/2) {$p_{00}$};
\node [above] at  (0.6*\xdist,0) {$p_{01}$};
\node [below left] at (0.6*\xdist,-\ydist/2) {$p_{11}$};
\node at (\xdist/2,-1.3*\ydist) {(a)};

\begin{scope}[xshift=5cm]
\node[coordinate] (o) at (0,0) {};
\node[coordinate] (a) at (\xdist,\ydist) {};
\node[coordinate] (b) at (\xdist,0) {};
\node[coordinate] (c) at (\xdist,-\ydist) {};
\node[coordinate] (ab) at (0.5*\xdist,0.5*\ydist) {};
\node[coordinate] (bc) at (0.5*\xdist,-0.5*\ydist) {};
\draw[fill] (a) circle [radius=\radpoint];
\draw[fill] (b) circle [radius=\radpoint];
\draw[fill] (c) circle [radius=\radpoint];
\draw[fill] (o) circle [radius=\radpoint];
\draw[fill] (ab) circle [radius=\radpoint];
\draw[fill] (bc) circle [radius=\radpoint];
\draw[thick] (o) -- (ab);
\draw[thick] (ab) -- (a);
\draw[thick] (ab) -- (b);
\draw[thick] (o) -- (c);
\node [above left] at  (0.35*\xdist,0.25*\ydist) {$p_{00}+p_{01}$};
\node [above left] at  (0.85*\xdist,0.75*\ydist) {$\frac{p_{00}}{p_{00}+p_{01}}$};
\node [below left] at  (0.90*\xdist,0.25*\ydist) {$\frac{p_{01}}{p_{00}+p_{01}}$};
\node [below left] at  (0.35*\xdist,-0.25*\ydist) {$p_{11}$};
\node [below left] at  (0.85*\xdist,-0.75*\ydist) {$1$};
\node at (\xdist/2,-1.3*\ydist) {(b)};
\end{scope}

\begin{scope}[xshift=10cm]
\node[coordinate] (o) at (0,0) {};
\node[coordinate] (a) at (\xdist,\ydist) {};
\node[coordinate] (b) at (\xdist,0) {};
\node[coordinate] (c) at (\xdist,-\ydist) {};
\node[coordinate] (ab) at (0.5*\xdist,0.5*\ydist) {};
\node[coordinate] (bc) at (0.5*\xdist,-0.5*\ydist) {};
\draw[fill] (a) circle [radius=\radpoint];
\draw[fill] (b) circle [radius=\radpoint];
\draw[fill] (c) circle [radius=\radpoint];
\draw[fill] (o) circle [radius=\radpoint];
\draw[fill] (ab) circle [radius=\radpoint];
\draw[fill] (bc) circle [radius=\radpoint];
\draw[thick] (o) -- (a);
\draw[thick] (o) -- (bc);
\draw[thick] (bc) -- (b);
\draw[thick] (bc) -- (c);
\node [above left] at  (0.35*\xdist,0.25*\ydist) {$p_{00}$};
\node [above left] at  (0.85*\xdist,0.75*\ydist) {$1$};
\node [below left] at  (0.35*\xdist,-0.25*\ydist) {$p_{01}+p_{11}$};
\node [above left] at  (0.85*\xdist,-0.25*\ydist) {$\frac{p_{01}}{p_{01}+p_{11}}$};
\node [below left] at  (0.90*\xdist,-0.75*\ydist) {$\frac{p_{11}}{p_{01}+p_{11}}$};
\node at (\xdist/2,-1.3*\ydist) {(c)};
\end{scope}
\end{tikzpicture}
\caption[Decomposition of a choice from three possibilities (in general).]{Different groupings when ${\sheaf E}_X = {\sheaf E}_Y = \{0,1\}$ and ${\sheaf E}_{XY} =\{ (0,0), (0,1),(1,1)\}$. We denote by $p_{ij}$ the probability of the point $(i,j)\in {\sheaf E}_{XY}$. In (b) and (c), the probabilities to the left are the marginals $X_*P$ and $Y_*P$, respectively, and those to the right are the conditional laws on the appropriate subset of ${\sheaf E}_{XY}$.  }\label{intro:fig:trees}
\end{figure}

It is worth noticing that several generalizations of entropy play a role in  information theory and statistical mechanics. One of them is the structural $\alpha$-entropy, defined for each $\alpha\in ]0,\infty[\sm\{1\}$ as
\begin{equation}\label{eq:tsallis-entropy-def}
S_\alpha[X](P)= \frac 1{1-\alpha} \left(\sum_{x\in {\sheaf E}_X} P(x)^\alpha -1\right),
\end{equation}
It was introduced axiomatically in 1967 by  \citet{Havrda1967} (who characterized it up to a multiplicative constant). The use of $\alpha$-entropies in statistical mechanics was proposed by  \citet{Tsallis1988}, and the most common name for $S_\alpha$ is Tsallis $\alpha$-entropy. This function satisfies the deformed equation
\begin{equation}\label{intro:chain_rule_alpha}
\ent \alpha[(X,Y)](P) = \ent \alpha[Y](Y_*P) + \sum_{\substack{y \in {\sheaf E}_Y\\ Y_*P(y)>0}} (Y_*P(y))^\alpha \ent \alpha[X](X_*P|_{Y=y}).
\end{equation}

  \citet{Tverberg1958} was the first to deduce from the chain rule a simple functional equation that characterized Shannon entropy, called ``fundamental equation of information theory'':
 \begin{equation}\label{eq:intro:FEITH}
 f(x) + (1-x) f\left(\frac{y}{1-x}\right) = f(y) +(1-y) f\left(\frac{x}{1-y}\right),
 \end{equation}
 where $f:[0,1]\to \Rr$ is an unknown function such that $f(0)=f(1)=1$, and $x,y\in [0,1)$ are such that $x+y\in [0,1]$. The only symmetric, measurable solutions of this equation are the real multiples of $s_1(x):=-x\ln(x)+ (1-x)\ln(1-x)$ \cite{Lee1964}.    The result is  quite remarkable, because Shannon's characterization of the functions $H_n$ requires an \emph{infinite} number of equations---for \emph{any} random variable and \emph{any} possible grouping of its outcomes---along with a strong regularity of $H_n$. \citet{Daroczy1970} proposed a similar equation solved by the $\alpha$-entropy $s_\alpha(x) := x^\alpha +(1-x)^\alpha-1$.\footnote{For a detailed historical introduction and a comprehensive treatment of the subject, up to 1975, see the book by  \citet{Aczel1975}.}
 
In the same vein, if  the product $(X,Y)$ is nondegenerate (see below), then the system of functional equations \eqref{intro:chain_rule_1}-\eqref{intro:chain_rule_2}, with measurable \emph{unknowns} $\ent 1 [X]$, $\ent 1[Y]$, and $\ent 1[(X,Y)]$, is uniquely solved by the corresponding Shannon entropies  \eqref{eq:shannon_ent_def}, up to a multiplicative constant.  This holds even for the situation pictured in Figure \ref{intro:fig:trees}, that is evidently the simplest possible choice that can be broken down in two different ways (see Proposition \ref{functional equations}).

More importantly, the chain-rule-like functional equations \eqref{intro:chain_rule_1}-\eqref{intro:chain_rule_2} accept a cohomological interpretation. Let us define, for any probabilistic functional $P\mapsto f(P)$, a new functional $X.f$ given by 
\begin{equation}\label{eq:intro:action}
(X.f)(P) := \sum_{x\in {\sheaf E}_X}  X_*P(X) f(Y_*P|_{X=x}).
\end{equation}
in order to rewrite \eqref{intro:chain_rule_1} as 
\begin{equation}\label{H_cocycle}
0=  X.\ent 1[Y]-\ent 1[(X,Y)] + \ent 1[X].
\end{equation}
The notation is meant to suggest an action of random variables on probabilistic functionals, and in fact the equality $Z.(X.f)=(Z,X).f$ holds. There is an strong resemblance between \protect\eqref{H_cocycle} and a cocycle equation in group cohomology.  \citet{Baudot2015} formalized this analogy  introducing an adapted cohomology theory---information cohomology---through an explicit differential complex that recovered the equations \eqref{H_cocycle} as $1$-cocycle conditions. They used presheaves, exploiting a notion of \protect\emph{locality} specific to the problem: the entropy of a variable $X$ only depends on the marginalized version $X_*P$ of any global law $P$.\footnote{\citet{Cathelineau1988} was the first to find a cohomological interpretation for the fundamental equation \eqref{eq:intro:FEITH}: an analogue of it is involved in the computation of the homology of $SL_2$ over a field of characteristic zero, with coefficients in the adjoint action; however, this result was not explicitly connected to Shannon entropy or information theory. The first published work in this direction is a note by Kontsevich (reproduced as an appendix in \cite{Elbaz2002}), that introduces $H_p(x) = \sum_{k=1}^{p-1} \frac{x^k}{k}$ as ``a residue modulo $p$'' of entropy, being the only continuous map $f:\Zz/p\Zz\to \Zz/p\Zz$ that verifies $f(x)=f(1-x)$ and an equation equivalent to \eqref{eq:intro:FEITH}. He proves that a related function defines a cohomology class in $H^2(F,F)$, for $F=\Rr$ or $\Zz/p\Zz$. Several works connected to motives or polylogarithms have emphasized the role of the fundamental equation, for instance \cite{Cathelineau1996,Elbaz2002,Elbaz2015,Bloch2003}.}  

\subsection{Categories of observables}\label{sec:intro:infostr}

Information cohomology was introduced in \protect\cite{Baudot2015} considering presheaves on \emph{information structures}, that were either categories of partitions of a given measurable space   or categories of orthogonal decompositions of a Hilbert space. The partitions corresponded to atoms of the $\sigma$-algebras generated by measurable functions (classical observables) with finite range, and the orthogonal decompositions appeared as eigenspaces of self-adjoint operators (quantum observables) with finitely many different eigenvalues.

Inspired by \protect\cite{Gromov2012}, we wanted to approach measurements from a categorical viewpoint, describing directly the relations between their outputs and without presupposing the existence of an underlying probability space or Hilbert space.\protect\footnote{A similar approach is taken in \protect\cite{Matveev2018}.} A probability space is only necessary to represent a collection of observables by measurable functions with a common domain, as customary in ``classical'' probability theory (as opposed to ``quantum'').  However, the existence of such representation is not trivial: some collections of variables are contextual (see Section \ref{sec:models}) and therefore violate generalized Bell inequalities  \protect\cite[Prop.~III.1]{Abramsky2012}, which make them incompatible with such \emph{classical representations}. The sets of outputs can also be interpreted as the spectra of self-adjoint operators, in such a way that some contextual collections have \emph{quantum representations}.

In view of the foregoing, we introduce here a more general definition of information structure, that covers the classical and quantum cases at the same time and extends without modification to continuous random variables (see \protect\cite{Vigneaux2019-thesis}). This allows us to introduce a category of information structures and to treat the algebraic aspects of the theory (Section \ref{sec:topoi}) in a unified manner, once for all these cases. To attain this flexibility and generality, the definition decouples the combinatorial structure of joint measurements and the local models of the outputs of each individual measurement.

Let  $\MeasSets_{\mathrm{surj}}$ be the category of measurable spaces and measurable surjections between them. 

\begin{defi}\label{def:info_structure}
A \emph{conditional meet semilattice} is a poset\footnote{A partially ordered set (poset) is set $C$ with a binary relation $\leq$ that is reflexive, antisymmetric and transitive. Equivalently, it is a small category $\cat C$ such that:
\begin{enumerate}
\item for any pair of objects $A$, $B$, there is at most one morphism from $A$ to $B$, and
\item if there is a morphism from $A$ to $B$ and a morphism from $B$ to $A$, then $A=B$.
\end{enumerate} We move freely between both descriptions.   
} that satisfies the following property:
\begin{equation}\label{conditional_wedge}
    \begin{minipage}{0.8\textwidth}
    for any $X,Y,Z\in \Ob \Sinf$, if $Z\to X$ and $Z\to Y$, then the categorical product $X\wedge Y$ exists.
    \end{minipage}
  \end{equation}
  It is \emph{unital} whenever it has a terminal object, denoted $\Us$. 
  
An \keyt{information structure} is a pair $(\Sinf,  \sheaf M)$, where $\Sinf$ is a unital conditional meet semilattice and $  \sheaf M:\Sinf \to \MeasSets_{\mathrm{surj}}$ is a functor\footnote{Given  a functor $\sheaf F :\Sinf \to \Sets$, we denote its value at $X\in \Ob \Sinf$ by $\sheaf F(X)$ or $\sheaf F_X$.} (say  $ \sheaf M_X =  ({\sheaf E}_X,\salg B_X)$, for each $X\in \Ob \Sinf$)   that satisfies:
  \begin{enumerate}
 \item ${\sheaf E}_\Us \cong \{\ast\}$, with the trivial $\sigma$-algebra;
 \item for every $X\in \Ob \Sinf$ and any $x\in {\sheaf E}_X$, the $\sigma$-algebra $\salg B_X$ contains  the singleton $\{x\}$;
 \item\label{wedge_coondition} for every diagram 
 $
 \begin{tikzcd}
  X & X\wedge Y \ar[l,swap, "\pi"] \ar[r, "\sigma"] & Y 
 \end{tikzcd}
 $
 the measurable map $${\sheaf M}_{X\wedge Y} \hookrightarrow {\sheaf M}_X\times {\sheaf M}_Y, z\mapsto (x(z),y(z)):= (\sheaf M\pi(z),\sheaf M \sigma(z))$$ is an injection.
\end{enumerate}
\end{defi}

The objects of the conditional meet semilattice $\Sinf$ stand for observables and the arrows encode the relation of \emph{refinement} between them (think of refinements of $\sigma$-algebras or orthogonal decompositions). The terminal object is ``certainty'', the meet $X\wedge Y$ represents the joint measurement of $X$ and $Y$, and condition \eqref{conditional_wedge} accommodates the impossibility of doing some joint measurements. For instance, in quantum mechanics, it is only possible to jointly measure $X$ and $Y$ if they commute, in which case the observable $(X,Y)$ induces an orthogonal decomposition of the Hilbert space that refines the decompositions induced by $X$ and $Y$. In turn, the functor $\sheaf M$  represents the possible outputs of each observable. A refinement $\pi:X\to Y$ translates into a surjection $\pi_*\equiv \sheaf M\pi:\sheaf M_X\to \sheaf M_Y$ that induces an injection at the level of the algebras of events $\pi^*:\salg B_Y \to \salg B_X$ that maps $A$ to $\sheaf M\pi^{-1}(A)$;\footnote{\label{foot:surjections}Proof: If $\pi^*(A) = \pi^*(B)$, then $\pi^{-1}(A\Delta B) = \emptyset$, where $\Delta$ is the symmetric difference. Since $\sheaf M\pi$ is surjective, this implies that $A\Delta B= \emptyset$ i.e. $A=B$. Of course, one could relax the surjectivity introducing ideals of negligible sets (e.g. through a reference measure) and asking  the preimage under $\sheaf M\pi$ of any negligible set to be negligible, but---at least in the discrete case---there is no loss of generality in supposing directly that $\sheaf M\pi$ is a surjection.}compare with the \emph{extensions} of probability spaces discussed in \cite[p.~3]{Tao2012}. The set ${\sheaf E}_{X\wedge Y}$ represents the possible outputs of the joint measurement $X\wedge Y$, hence it can be identified with a subset of ${\sheaf E}_X\times {\sheaf E}_Y$ as in Section \ref{sec:intr:entropy}. When convenient, we use the notations common in probability theory: $\{X=x\}$  means ``the element $x$ contained in ${\sheaf E}_X$'' and $\{X=x,Y=y\}$ should be interpreted as \emph{the} element $z$ of ${\sheaf E}_{X\wedge Y}$ mapped to $x$ by ${\sheaf E}_{X\wedge Y}\to {\sheaf E}_X$ and to $y$ by ${\sheaf E}_{X\wedge Y}\to {\sheaf E}_Y$ (if such $z$ does not exist, $\{X=x,Y=y\}= \emptyset$).

In Section \ref{sec:general_structure} we also define the morphisms between information structures and prove that the category $\InfoStr$ thus obtained has countable products and arbitrary coproducts (Proposition \ref{prop:products_and_coproducts}). 

Probability laws come as a functor $\FProb:\Sinf \to \Sets$  that associates to each $X\in \Ob\Sinf$ the set $\FProb_X$ of measures $P$ on $\sheaf M_X$ such that $P({\sheaf E}_X) =1$. Each arrow $\pi:X\to Y$ induces a measurable surjection $   \sheaf M:\sheaf M_X\to \sheaf M_Y$, and $\FProb \pi:\FProb_X\to \FProb_Y$ is defined to be the push-forward of measures: for every $B\in \salg B_Y$,
\begin{equation}\label{defi_marginalization}
(\FProb \pi(P))(B) = P(  {\sheaf M\pi}^{-1}(B)).
\end{equation}
This operation is called \emph{marginalization}. We write $\pi_*$ or  $Y_*$ instead of $\FProb \pi$, if there is no risk of ambiguity;  this notation is compatible with that of Section \ref{sec:intr:entropy}.

\example[Simplicial information structures]\label{ex:simplicial_structures}
 If $I$ is any set, let $\cat{\Delta(I)}$ be the poset of its finite subsets, with an arrow $A\to B$ whenever $B\subset A$.  A simplicial subcomplex of $\cat{\Delta(I)}$ is a full subcategory $\cat K$ such that, for any given object of $\cat K$ (``a cell''), all its subsets are also objects of $\cat K$ (``faces''). Given a collection $\{({\sheaf E}_i,\salg B_i)\}_{i\in I}$ of masurable spaces, let $\sheaf M:\cat{\Delta(I)}\to \Sets$ be the functor that associates to each  $A\subset I$ the set ${\sheaf E}_A:=\prod_{i\in A} {\sheaf E}_i$ with the product $\sigma$-algebra $\salg B_A:= \bigotimes_{i\in A} \salg B_i$, and to each arrow in $\cat{\Delta(I)}$ the corresponding canonical projector. The pair $(\cat K,   \sheaf M|_{\cat K})$ is a \emph{simplicial information structure}.\footnote{It is worth noting that \emph{abelian} (co)presheaves on some complex $\cat K$ are cellular (co)sheaves in the sense of \cite{Curry2013}.}  
 
 A particular case of this construction---such that all the ${\sheaf E}_i$ are equal---appears in  \cite{Abramsky2011sheaf}, which introduces a sheaf-theoretic treatment of nonlocality and contextuality. There, the elements of $I$ are called \emph{measurements}, the minimal objects in the poset $\cat K$ are called \emph{maximal measurement contexts} (without loss of generality, it is supposed that the $0$-skeleton of $\cat K$ is $I$ itself), and the sheaf $\sheaf M:\cat{\Delta(I)}\to \Sets$ is called \emph{sheaf of events}. The follow-up article \cite{Abramsky2015} introduces a more general notion of possible events, allowing any subfunctor $\sheaf N:\cat{\Delta(I)}\to \Sets$ of $\sheaf M$ that is ``flasque beneath the cover''---which means that $\sheaf N\pi$ is surjective  for any arrow $\pi$ in $\cat K$---and such that any section of  $\sheaf N|_{\cat K}$ induces an element of $\sheaf N(I)$. Our definition of information structure further generalizes these ideas: it can be applied to nonsimplicial settings (see the examples in Section \ref{sec:general_structure}) and it does not make reference to global facts (which are treated in the context of representations, see Section \ref{sec:models}).\footnote{Let $\FProb_{\cat K}$ be the functor of probabilities on $\sheaf M|_{\cat K}$. The sections of $\FProb_{\cat K}$ are the elements of $\Gamma(\cat K,\FProb_{\cat K}):=\Hom_{[\cat K,\Sets]}(\ast,  \FProb_{\cat K})$, where $[\cat K,\Sets]$ is the category of $\Sets$-valued functors on $\cat K$ and $\ast$ is the functor that associates to each $X\in \Ob \cat K$ a singleton. An element $s\in \Gamma(\cat K,\FProb_{\cat K})$  is a  collection of probabilities that are mutually compatible under marginalizations; this appears in the literature as  \protect\emph{pseudo-marginals} \protect\cite{Vontobel2013} or \protect\emph{no-signaling empirical models} \cite{Abramsky2011sheaf}, among other names. Remark that, in the simplicial case, $\sheaf M$ and the sheaf  $\FProb$ of probabilities on it are naturally defined on the whole category $\cat{\Delta(I)}$, which can be seen as a larger geometrical space that contains $\cat K$. The \emph{marginal problem} consists in determining when a pseudo-marginal $s$ on $\cat K$ can be extended to a section $\tilde s$ of $\FProb$ on $\cat{\Delta(I)}$ that coincides with $s$ over each $A\in \Ob \cat K$; this extension $\tilde s$ is meant to represent a joint state of the observables indexed by $I$, compatible with the known local interactions. It is well known that such extension does not always exist: this correspond to \emph{frustration} in statistical mechanics \protect\cite{Pelizzola2005, Matsuda2001}, \protect\emph{contextuality} in quantum mechanics, and paradoxes in logic \protect\cite{Abramsky2015,Abramsky2012,Fritz2013}. The problem also appears in the context of \protect\emph{graphical models}, where the belief propagation algorithm converges to a pseudo-marginal on the smallest simplicial subcomplex that contains all the factors of the model (we are identifying here a factor node in a graphical model with its boundary, which is a subset of  the variable nodes), see \protect\cite{Pelizzola2005} and \cite[Ch.~9]{Mezard2009}. A sheaf-theoretic treatment of the marginal problem was introduced in \protect\cite{Abramsky2011sheaf}. The determination of eventual connections between information cohomology and their cohomology of contextuality is an exciting open problem.} 
\endexample

An information structure $(\Sinf,\sheaf M)$ is called finite if all the measurable sets $({\sheaf E}_X,\salg B_X)$ in the image of $\sheaf M$ are finite. Section \ref{sec:models} studies the conditions under which the observables of a finite information structure $(\Sinf,\sheaf M)$ can be represented as measurable functions on a unique sample space $(\Omega, \salg F)$. A necessary and sufficient condition is the existence of a global section $s(x)$ of $ {\sheaf E}$ compatible with any given  value  $x\in {\sheaf E}_X$ assigned to any observable $X$. For the sake of completeness, we also define quantum representations.

\subsection{Cohomological characterization of $\alpha$-entropies}\label{sec:intro:cohomology}

Let $\Sinf$ be a conditional meet semilattice, and $\sheaf S$ be the presheaf of monoids that maps each $X\in \Ob \Sinf$ to the  set $\Smon_X \equiv \Smon(X) := \set{Y\in \Ob \Sinf}{X\to Y}$ equipped with the product  $(Y,Z) \mapsto YZ:= Y\wedge Z$, and each arrow  $X\to Y$ in $\Sinf$ to the inclusion $\Smon_Y \hookrightarrow \Smon_X$. There is an associated presheaf of induced algebras $X\mapsto \Sring_X := \Rr[\Smon_X]$. The category of $\sheaf A$-modules---abelian presheaves with a natural action of $\Sring$---is abelian and has enough injective objects, hence it is possible to define right derived functors using standard tools in homological algebra \cite{Grothendieck1957, Weibel1994}. 

A particular example of $\sheaf A$-module is the space of probabilistic functionals that appear in Section \ref{sec:intr:entropy}; the $\Sring$-action was defined by  \eqref{eq:intro:action}. Another is the trivial $\Sring$-module $\Rr_{\Sinf}$, that associates to each object $X\in \Ob{\Sinf}$ the set $\Rr$ with trivial $\Sring_X$ action and to every arrow in $\Sinf$ the identity map.

\begin{defi}
The information cohomology of $\Sinf$ with coefficients in the $\Sring$-module  $\sheaf F$ is 
\begin{equation}
H^\bullet(\Sinf, \sheaf F) := \Ext^\bullet(\Rr_{\Sinf},   \sheaf F).
\end{equation} 
\end{defi}

Recurring to the (relative) bar resolution \cite[Ch.~IX]{MacLane1994}, we obtain a computable version of this cohomology. The bar construction gives a resolution   
\begin{equation}
\begin{tikzcd}
0 
& \Rr_\Sinf \ar[l] 
&   \sheaf B_0            \ar[l, "\epsilon", swap]  
&   \sheaf B_1            \ar[l, "\partial_1", swap]
&   \sheaf B_2            \ar[l, "\partial_2", swap]
& ...\ar[l, "\partial_3", swap]
\end{tikzcd},
\end{equation}
where each $\sheaf B_i$ is a \emph{relative} projective  $\Sring$-module (see the appendix). More explicitly,  for each $X\in  \Ob\Sinf$ and $ n\in \Nn$, the module $\sheaf B_n(X)$ is freely generated over $\Sring_X$ by  $\set{[X_1|...|X_n]}{X_1,...,X_n\in \Smon_X}$; in the case of $ \sheaf B_0(X)$, simply by the ``empty'' symbol $[\:]$. Proposition \ref{prop_bar_projective} proves that, due to the conditional existence of products in the definition of $\Sinf$, these $\{ \sheaf B_i\}_i$ are  projective objects in the category $\Mod(\Sring)$, which in turn implies that  $H^\bullet(\Sinf, \sheaf F)$ can be identified with the cohomology of the differential complex $(C^\bullet(\Sinf,\sheaf F),\delta)$, where $C^n(\Sinf, \sheaf F) := \Hom_{\Sring}( \sheaf  B_n,  \sheaf  F)$ and $\delta f := f\partial$. An element $f\in C^n(\Sinf,  \sheaf F)$ consists of several components $\{f_X:\sheaf B_n(X)\to \sheaf F(X)\}_{X\in \Ob \Sinf}$. Each map  $\delta\equiv \delta^n:C^{n}(\Sinf,  \sheaf F)\to C^{n+1}(\Sinf,  \sheaf F)$ is defined as follows: for each $X\in \Ob \Sinf$, $X_1,..,X_{n+1}\in \Smon_X$, and $f\in C^{n}(\Sinf,  \sheaf F)$,
\begin{multline}\label{eq:n-coboundary}
(\delta f)_X[X_1|...|X_{n+1}] = X_1.f_X[X_2|...|X_{n+1}] + \sum_{k=1}^{n} (-1)^k f_X[X_1|...|X_kX_{k+1}|...|X_n] \\+(-1)^{n+1}f_X[X_1|...|X_{n}].
\end{multline}
To simplify notation, we write $f_X[X_2|...|X_{n+1}]$ instead of $f_X([X_2|...|X_{n+1}])$, etc.

We introduce a particular family of $\Sring$-modules, made of probabilistic functionals.  Let $(\Sinf, {\sheaf E})$ be a finite information structure, and let $  \sheaf Q$ denote any subfunctor of  $\FProb$ stable under conditioning (\emph{adapted probability functor}). We represent every probability law by its density $P$ with respect to the counting measure, which can be seen as a vector in $\Rr^{\sheaf E_X}$, so that $\sheaf P_X$ is the standard (probability) simplex $\Delta(\sheaf E_X)$ in $\Rr^{\sheaf E_X}$.   Let $  \sheaf F(X)$ be the additive abelian group of measurable\footnote{If $(\Omega, \salg F)$ is a measurable space and $A$ is a subset of $\Omega$, then $\salg F' = \set{A\cap F}{F\in \salg F}$ is a $\sigma$-algebra of subsets of $A$, \emph{induced} by $\salg F$. For each $X\in \Ob \Sinf$, we identify $\sheaf P(X)$ with the standard simplex $\Delta(\sheaf E_X)$ equipped with the $\sigma$-algebra $\salg S$ induced by the Borel $\sigma$-algebra on $\Rr^{|{\sheaf E}_X|}$ (with its standard topology); equivalently, $\salg S$ is the Borel $\sigma$-algebra associated to the subspace topology on $\Delta(\sheaf E_X)$. Then $\sheaf Q(X)\subset \sheaf P(X)$ becomes a measurable space with the $\sigma$-algebra induced by $\salg S$.} real-valued functions on $  \sheaf Q(X)$ and, for any arrow $\pi:X\to Y$, let $  \sheaf F\pi:  \sheaf F(Y)\to   \sheaf F(X)$ be precomposition with marginalization: $  \sheaf F\pi(f) = f\circ \pi_*$. We obtain in this way a contravariant functor $  \sheaf F$ on $\Sinf$.

For each $Y\in \Smon_X$, $f\in   \sheaf F(X)$ and $P\in   \sheaf Q(X)$, define 
\begin{equation}
(Y.f)(P) = \sum_{\substack{y\in {\sheaf E}_Y\\ Y_*P(y) \neq 0 }} (Y_*P(y))^\alpha f(P|_{Y=y}).
\end{equation}
 
 This turns each $  \sheaf F(X)$ into an $\Sring_X$-module and this action is functorial, in such a way that $  \sheaf F$ becomes an $\Sring$-module  $\sheaf F_\alpha$ (for any  $\alpha>0$). We call \emph{probabilistic} the information cohomology with coefficients in some $\sheaf F_\alpha$. With reference to the cochain complex $(C^\bullet(\Sinf, \sheaf F_\alpha), \delta)$ introduced above, we can determine the following facts. 

A $0$-cochain is a collection  $\left\{f_X[\,]\right\}_{X\in \Ob S}$ that is local: for any $X$, $f_X[\,](P) = f_{\Us}[\,](1)$, so $f_X[\,]$ equals a constant $K\in \Rr$. The $0$-cochain $f$ is a $0$-cocycle if for any $X\to Y$ in $\Sinf$  
$ (\delta f)_X[Y] = Y.f_X[\,] - f_X[\,]$, which evaluated on a probability  $P\in   \sheaf Q(X)$ reads 
$$(\delta f)_X[Y](P)=\sum_{y\in {\sheaf E}_Y} (Y_*P(y))^\alpha K -K = \begin{cases} 0 & \text{ if }\alpha=1\\ 
K S_\alpha[Y](Y_*P) & \text{ otherwise} \end{cases}.$$
In other words:  every cochain is a $0$-cocycle if $\alpha=1$; there are no $0$-cocycles if $\alpha \neq 1$, but Tsallis entropy appears as $1$-coboundary multiplied by a global constant $K$.
 
 The $1$-cochains are characterized by collections of functionals $\left\{f[X]:\sheaf Q(X)\to \Rr\right\}_{X\in \Ob \Sinf}$, which takes into account the naturality of $f$: $f_Y[X](P)=f_X[X](X_*P)=:f[X](X_*P)$.  The $1$-cocycles additionally satisfy
 \begin{equation}
 0=X.f[Y]-f[XY]+f[X]
 \end{equation}
as functions on $\sheaf Q(XY)$, where marginalizations are implicit. As explained in Section \ref{sec:intr:entropy}, this equation and its analogue with $X$ and $Y$ permuted imply that $f[\cdot] = K_{XY} \ent \alpha[\cdot]$, for a constant  $K_{XY}\in \Rr$. This holds as long as $\sheaf Q_{XY}$ contains enough probabilities: a precise sufficient condition is stated in the definition of nondegeneracy for the product of two observables (Definition \ref{def:non-degenerate-probabilistic}). Despite being quite involved, this definition is one of the main contributions of this article. It is instrumental to write a more explicit proof of \cite[Thm.~1]{Baudot2015} and its generalization, Theorem \ref{H1-non-degenerate}. If an observable $Z$ can be written as a nondegenerate product, we say that it is nontrivially reducible. 

Remind that a category $\cat J$ is connected if any two objects $j,k\in \cat J$ can be joined by a finite sequence of arrows 
\begin{equation}
j=j_0 \leftarrow j_1 \rightarrow j_2 \leftarrow \cdots \leftarrow j_{2n-1} \rightarrow j_{2n} = k.
\end{equation}
Every category $\cat J$ is a disjoint union (coproduct in $\cat{Cat}$) of connected categories $\cat J_{k}$, called \emph{connected components} \cite[p.~90]{MacLane1998}. We denote by $\pi_0(\cat J)$ the set of connected components of a small category $\cat J$.


\begin{theorem}\label{H1-non-degenerate}
Let $(\Sinf,{\sheaf E})$ be a finite information structure, and let $ \sheaf Q$ be an adapted probability functor such that all marginalization maps are surjective. Denote by $\Sinf^\ast$ the full subcategory of $\Sinf$ generated by $\set{X\in \Ob \Sinf}{|\sheaf E_X| > 1}$. Suppose that for every $X\in \Ob \Sinf^*$ there exists a nontrivially reducible object $Z\in \Ob \Sinf$ such that $Z\to X$. Then, there is a linear isomorphism $\varphi: \Rr^{\pi_0(\Sinf^*)}\to Z^1(\Sinf,\sheaf F_\alpha( \sheaf Q))$ that maps $\Lambda=(\lambda_{\cat C})_{\cat C\in \pi_0(\Sinf^*)}$ to the natural transformation $S^\Lambda_\alpha$ defined as follows: for every $X\in \Ob \cat S$ and $Y\in \Smon_X$, 
\begin{equation}
(S_\alpha^{\Lambda})_X[Y] = \begin{cases}\lambda_{\cat C}(S_\alpha)_X[Y] & \text{if } Y \in \Ob\cat C \\
0 & \text{if } Y \notin \Ob \cat S^*
\end{cases}.
\end{equation}
Under this isomorphism,  $\delta C^0(\Sinf, \sheaf F_1(\sheaf Q))$ corresponds to $\langle 0 \rangle \subset \Rr^{\pi_0(\Sinf)}$, whereas for $\alpha \neq 1$,  $\delta C^0(\Sinf, \sheaf F_\alpha(\sheaf Q))$ corresponds to the diagonal $\Delta  \subset \Rr^{\pi_0(\Sinf)}$. Hence $H^{1}(\Sinf, \sheaf F_1(\sheaf Q))\cong \Rr^{\pi_0(\Sinf^*)}$ and $H^{1}(\Sinf, \sheaf F_\alpha(\sheaf Q))\cong \Rr^{\pi_0(\Sinf^*)}/\Delta$ when $\alpha \neq 1$.
\end{theorem}

One may say that a pair $(\Sinf, \sheaf Q)$ is \emph{strongly connected} if  $\Sinf^*$ is connected and every  object can be refined by a nontrivially reducible one. Shannon entropy is to some extent analogous to the fundamental class of an orientable connected manifold: a generator of the one-dimensional $H^1(\Sinf, \sheaf F(\sheaf Q))$ for a strongly connected pair $(\Sinf,\sheaf Q)$.

For a simplicial structure $(\cat K,   \sheaf M)$, the theorem says that
\begin{equation}
H^1(\Sinf,   \sheaf F_1(\FProb)) \cong \Rr^{\beta_0(\cat K)} \quad\text{ and }\quad  H^1(\Sinf,   \sheaf F_\alpha(\FProb)) \cong \Rr^{\beta_0(\cat K)-1} \text{ when }\alpha \neq 1, 
\end{equation}
where $\beta_0(\cat K)$ is the $0$-th Betti number of (the geometric realization of) $\cat K$.  Higher cohomology groups might be linked to the higher Betti numbers.

 When some of the minimal objects of $\Sinf$ is irreducible (it cannot be written as a nontrivial product), the group $H^1(\Sinf, \sheaf F_\alpha( \sheaf Q))$ has infinite dimension; Proposition \ref{irreducible_element_infty}  
 makes this precise.  In  \ref{sec:determination_H1} we also study other pathological examples. 

At the end, we give an interpretation of $H^0$, $H^1$ and $H^2$ in terms of invariant sections, crossed homomorphism, and extensions, respectively, following the classic arguments about Hochschild cohomology.

\section{The category of information structures}\label{sec:info_structures}

\subsection{Terminology and examples}\label{sec:general_structure}
An information structure $(\Sinf, \sheaf M)$---see Definition \ref{def:info_structure}---is said to be  \keyt{bounded} if the poset $\Sinf$ has finite height. It is \keyt{finite} if all the sets ${\sheaf E}_X$ are finite, in which case ${\sheaf E}_X$ corresponds to the atoms of $\salg B_X$ and the algebra can be omitted from the description. We denote a finite structure by $(\Sinf, {\sheaf E})$, where ${\sheaf E}$ is a covariant functor from $\Sinf$ to $\Sets$. The cohomological computations in Section \ref{sec:classical_info_cohomology} concern finite structures, but the general constructions in Section \ref{sec:topoi}---among them the definition of information cohomology---do not require this hypothesis. In fact, they only depend on the combinatorial object $\Sinf$.

\begin{example}[Concrete structures] \label{ex:classical_structure} The motivating example for the theory is the original version of information structures introduced in \cite{Baudot2015}. 

Given a set $\Omega$, let $\ObsFin(\Omega)$ be the category \emph{finite observables}; the objects of this category are finite partitions of $\Omega$, and there is an arrow $X\to Y$ whenever $X$ refines $Y$. In this case, $X$ discriminates better between the \emph{configurations} $w\in \Omega$. The category $\ObsFin(\Omega)$ has a terminal object: the trivial partition $\Us:=\{\Omega\}$. When $\Omega$ is finite, it also has an initial object: the partition by points, that we denote by $\bot$.  The categorical product $X\times Y$ of two partitions $X$ and $Y$ is the coarsest partition that refines both.  This product is commutative, associative, idempotent and unitary ($\Us\times  X = X$).  

A classical \emph{information structure} in the sense of \cite{Baudot2015} is a full subcategory $\Sinf$ of $\ObsFin(\Omega)$ such that 
\begin{itemize}
\item $\Us\in \Ob{\Sinf}$;
\item for any $X, Y, Z$ in $\Ob\cat{S}$, if $X \to Y$ and $X\to Z$, then $Y\times Z$ belongs to $\Sinf$.
\end{itemize}
We call $\Sinf$ a \emph{concrete structure}.  If $\square:\ObsFin(\Omega)\to \Sets$ denotes the ``forgetful'' functor  that maps the partition $X=\{A_1,...,A_n\}$ to the set $\{A_1,...,A_n\}$ and each arrow $X\to Y$ in $\ObsFin(\Omega)$ to the unique surjective map $\square\pi:{\sheaf E}_X\to {\sheaf E}_Y$ such that  $B=\cup_{A\in {\sheaf E}\pi^{-1}(B)} A$ for any $B\in {\sheaf E}_Y$, the pair $(\Sinf,\square)$ is a finite information structure according to Definition \ref{def:info_structure}.

Concrete structures turn out to be too restrictive. For instance, Baudot and Bennequin associate to any finite indexed collection  $\Sigma=(S_1,...,S_n)$ of partitions of $\Omega$ a ``simplicial structure'' $\Sinf(\cat K)$: a subcategory of $\ObsFin(\Omega)$ that contains $\prod_{i\in A} S_i$ for any object $A$ of a simplicial subcomplex $\cat K$ of the abstract simplex $\Delta(\{1,...,n\})$---see the notation introduced in Example \ref{ex:simplicial_structures}; by convention, the empty product gives the trivial partition. Such construction does not necessarily give an information structure (in their sense). For example:  if $n=3$, $\Omega=\{0,1\}^2$, $S_i$ is the partition induced by the projection on the $i$-th component ($i=1,2$), $S_3=\{\{(0,0)\},\{(0,0)\}^c\}$, and the maximal cells of $\cat K$ are $\{1,2\}$ and $\{3\}$, then  $S_1\times S_2$ is the atomic partition, that refines all the others, while some products (like $S_1\times S_3$) are not in $\Sinf(\cat K)$. In our framework, the category $\Sinf(\cat K)$ appears as a classical representation (cf. Section \ref{sec:models}) of the (generalized) information structure $(\cat K,{\sheaf E})$, where ${\sheaf E}:\cat K \to \Sets$ is given by ${\sheaf E}_{\{1\}}={\sheaf E}_{\{2\}} = {\sheaf E}_{\{3\}}=\{0,1\}$, ${\sheaf E}_{\{1,2\}} = {\sheaf E}_{\{1\}}\times {\sheaf E}_{\{2\}}$, the maps induced by the arrows in $\cat K$ being canonical projections. 
\end{example}

\begin{example}[Homogeneous structures]\label{ex:homogeneous}
Let $G$ be a locally compact, Hausdorff topological group.  Any collection $\mathfrak C$ of closed  subgroups of $G$ that contains $G$ and is conditionally closed under intersections (i.e. for any $M,N,O\in \mathfrak C$, if $N\subset M$ and $N\subset O$, then $M\cap O\in \mathfrak C$) defines a conditional meet semilattice $\Sinf$, whose arrows correspond to inclusions. Let $\sheaf M$ be the functor that associates to each subgroup $N$ the (Hausdorff) quotient space $G/N$ with the Borel $\sigma$-algebra induced by the quotient topology, and to each arrow $N\to M$ the canonical projection $\pi_{M,N}:G/N\to G/M$ that sends the coset $gN$ to $gM$. 
The information structures $(\Sinf, \sheaf M)$ obtained in this way are called \emph{homogeneous}, because each coset space $G/M$ is a homogeneous space for $G$.\footnote{Moreover, the  diagrams $\sheaf M(\Sinf)$ obtained for finite $G$ are exactly the \emph{minimal homogeneous diagrams} in  \cite[Sec.~2.7]{Matveev2018}, provided each quotient $G/N$ is equipped with the uniform measure. It is explained there that homogeneous diagrams approximate asymptotically any diagram of probability spaces.}

A particular example of this construction was introduced by the author in \cite[Ch.~11]{Vigneaux2019-thesis}, where $G=(\Rr^n,+)$. The resulting pairs $(\Sinf, \sheaf M)$ are called there \emph{Grassmannian structures} and play a key role in the computation of the information cohomology associated to continuous observables with gaussian laws.
\end{example}

\subsubsection{Relation with idempotent monoids} Recall that a monoid $(M,\cdot,e)$ is \emph{idempotent} if for all $m\in M$, $m\cdot m = m$. Any conditional meet semilattice $\Sinf$ induces a presheaf of idempotent monoids on it: for each $X\in \Ob\cat{S}$, set $\Smon_X:=\{Y\in \Ob\cat{S} \mid X\to Y\}$, with the monoid structure given by the product of observables in $\Sinf$: $(Z,Y)\mapsto ZY:= Z\wedge Y$; an arrow $X\to Y$ in $\Sinf$ induces an inclusion $\Smon_Y \hookrightarrow \Smon_X$. 

Furthermore, there is a well-known equivalence between idempotent monoids and meet semilattices with a terminal object. For a proof, see e.g. \cite[Prop.~2.1]{Connes2017}.
\begin{prop}\label{prop:equivalence_monoids_meetsl}
If $(M,\cdot, e)$ is an idempotent monoid, then the condition 
\begin{equation}\label{eq:order_monoid}
x\leq y \Leftrightarrow x\cdot y = x
\end{equation} defines a partial order on $M$ such that any two elements of $M$ have a meet and $e$ is the greatest element. 

Conversely, if $(E,\leq)$ is a poset with a greatest element in which any two elements $x,y\in E$ have a meet $x\wedge  y$, then $E$ endowed with the addition $(x,y)\mapsto x\wedge y$ is an idempotent monoid.

The two functors just introduced are inverses of each other.
\end{prop}
 Is there a counterpart to conditional meet semilattices with a terminal object in the theory of idempotent monoids? The following result serves as a partial answer. It involves \emph{upper sets} of an idempotent monoid: a subset $H$ of an idempotent monoid $M$---equipped with the partial order in \eqref{eq:order_monoid}---is called an \emph{upper set} if $h\in H$ and $h\leq m$ implies that $m\in H$. For example, the simplicial subcomplex $\cat K$ in Example \ref{ex:simplicial_structures} defines an upper set of $\cat{\Delta(I)}$, seen as an idempotent monoid according to Proposition \ref{prop:equivalence_monoids_meetsl}. 

\begin{prop}Let $(M,\cdot,e)$ be an idempotent monoid, and $\cat M$ its associated poset (seen as a category). 
The full subcategory of $\cat M$ defined by any nonempty upper set $H$ of $M$ is a unital conditional meet semilattice. 
\end{prop}
\begin{proof}
First, $e\in H$, because $e$ is greater than any element of $H$. Second, if $x,y,z$ be elements of $ H$ such that $z\leq x$ and $z\leq y$, then $z\leq x\wedge y$ in virtue of the universal property of $\wedge$ in $\cat M$, which in turn implies that $x\wedge y\in H$ (by definition of upper set). 
\end{proof} 
Based on this result, new examples of information structures may arise in connection with \emph{idempotent mathematics} \cite{Litvinov2007} and algebra ``over $\mathbb{F}_1$'' \cite{Connes2017}. 

\subsection{Morphisms and (co)products}

Given categories $\cat A, \cat B$ and $\cat C$, and a functor $\xi:\cat A\to \cat B$, define the pullback 
$\xi^*:[\cat B, \cat C]\to [\cat A,\cat C]$  by $\sheaf F \mapsto \sheaf F\circ \phi$. It commutes with limits and colimits \cite[Sec.~I.5]{Artin1972}.

 \begin{defi}\label{def:morph_str}
 A morphism $\xi:\Sinf_1\to \Sinf_2$ of conditional meet semilattices is a functor (i.e. a monotone map) with the following property: if $X\wedge Y$ exists in $\Sinf_1$, then $\xi(X\wedge Y) = \xi(X) \wedge \xi(Y)$.
 
A morphism  $\phi:(\Sinf,  \sheaf M) \to (\Sinf',  \sheaf M')$ between information structures is a pair $\phi=(\phi_0,\widehat \phi)$ such that $\phi_0$ is a morphism of conditional meet semilattices that maps $\Us_{\Sinf}$ to $\Us_{\Sinf'}$, and $\widehat \phi:  \sheaf M\Rightarrow   \sheaf \phi_0^* M' $ is a natural transformation. If there is no risk of ambiguity, we write $\phi$ instead of $\phi_0$.

 Given $\phi:(\Sinf,   \sheaf M)\to (\Sinf',  \sheaf M')$ and $\psi:(\Sinf',  \sheaf M') \to  (\Sinf'',  \sheaf M'')$, their composition $\psi \circ \phi$ is defined as $(\psi_0 \circ \phi_0, \widehat \psi \circ \widehat \phi:  \sheaf M\Rightarrow   \sheaf \phi_0^* \psi_0^* M'')$. 
 
 We denote by $\InfoStr$ the category of information structures obtained in this way.
\end{defi} 

 Note that, if $X\wedge Y$ exists, then $\xi(X\wedge Y) \to \xi(X)$ and $\xi(X\wedge Y) \to \xi(X)$, and thus the product $\xi(X) \wedge \xi(Y)$ exists too, in virtue of Definition \ref{def:info_structure}. 
 
 A morphism of information structures is a particular case of morphism of $\MeasSets_{\mathrm{surj}}$-valued covariant diagrams \cite[Def.~3.1]{DeSilva2019}. We want $\phi_0$ to respect the unit and the products, so that it induces a morphism between the corresponding  presheaves of  idempotent monoids, see Proposition \ref{prop:chain_map}.

The preceding definition is one of the main motivations for our generalized setting. In fact, one could imagine a correspondence between the partitions of two concrete structures (Example \ref{ex:classical_structure}) defined on different sample spaces, but in which category would that correspondence take place? Since we eliminated the explicit reference to the sample space in our definition of information structure, the introduction of morphisms becomes straightforward.  This allows the computation of products and coproducts. The connection to the sample spaces is not completely lost, but reformulated in the language of representations: as a consequence of Proposition \ref{prop:prod_and_coprod_of_models}, if $\Sinf_i$ is a concrete structure on $\Omega_i$ ($i=1,2$), then the objects of $\Sinf_1\times \Sinf_2$ can be identified with partitions of $\Omega_1\times \Omega_2$, as one would expect. 


\begin{prop}\label{prop:products_and_coproducts}
The category  $\InfoStr$ has countable products and arbitrary coproducts. 
\end{prop}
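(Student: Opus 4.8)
Since a category has all finite products as soon as it has a terminal object and binary products, and dually for coproducts, the plan is to construct a terminal object, an initial object, a binary product and a binary coproduct, and to verify their universal properties. For the nullary cases I claim the one-point structure $\mathcal Z := (\{\mathbf 1\}, E)$, with the one-object poset and $E(\mathbf 1) = \{\ast\}$, is a zero object. Any $(\mathcal S, E)$ admits exactly one morphism to $\mathcal Z$: the functor collapsing everything to $\mathbf 1$ preserves the terminal object and all products, and each component $E(X) \to \{\ast\}$ is surjective because $E(X) \neq \emptyset$ (optimality forces $\lim E \neq \emptyset$). Hence $\mathcal Z$ is terminal, i.e. the empty product. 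Dually the functor $\mathbf 1 \mapsto \mathbf 1$ together with the forced map $\{\ast\} \to E(\mathbf 1)$ is the unique morphism $\mathcal Z \to (\mathcal S, E)$, so $\mathcal Z$ is initial as well, i.e. the empty coproduct.

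For the binary coproduct I would take the wedge $\mathcal S_1 \vee \mathcal S_2$, obtained from the disjoint union of the two posets by identifying their terminal objects into a single $\mathbf 1$, with $E$ defined as $E_i$ on the $i$-th summand and $\{\ast\}$ at $\mathbf 1$. Because the only object shared by the two summands is the top $\mathbf 1$ and no arrow crosses between them, this is again a poset with terminal object and finite-dimensional nerve, conditional products are computed inside a single summand, and strict surjectivity of arrows is inherited; optimality holds because $\lim E \cong \lim E_1 \times \lim E_2$, so a value in one summand extends to a global section by adjoining any section of the other (both nonempty). The universal property is straightforward: a pair of morphisms $\chi_i : (\mathcal S_i, E_i) \to (\mathcal T, F)$ glues to a single morphism out of the wedge because the two functors agree at $\mathbf 1$ and each $\chi_i^\#$ is already surjective, so the surjectivity required of the glued map holds componentwise; uniqueness is immediate.

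The binary product is the delicate half. The two projections force every object to have an image in the product poset $\mathcal S_1 \times \mathcal S_2$, and a pair of terminal- and product-preserving monotone maps out of $\mathcal T$ is exactly one such map into $\mathcal S_1 \times \mathcal S_2$, so the product poset governs the functorial part. The subtlety is the value functor: the naive choice $E_1 \times E_2$ fails, because given surjections $\psi_1^\# : F(Y) \twoheadrightarrow E_1(\psi_1 Y)$ and $\psi_2^\# : F(Y) \twoheadrightarrow E_2(\psi_2 Y)$ the induced map $F(Y) \to E_1(\psi_1 Y) \times E_2(\psi_2 Y)$ is only surjective onto its image, so the component demanded of the factoring morphism is not a surjection onto the full product. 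The natural remedy is to enrich the fibre over each $(X_1, X_2)$ by the joint supports (couplings) $Z \subseteq E_1(X_1) \times E_2(X_2)$ with surjective marginals, so that a pair factors uniquely through the object carrying the image coupling $Z = \im(\psi_1^\#, \psi_2^\#)$, the factoring map being surjective by construction.

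I expect the main obstacle to be the optimality axiom for this enriched poset: distinct couplings over the same base share the marginal objects $(X_1, \mathbf 1)$ and $(\mathbf 1, X_2)$, and compatibility along those arrows can force incompatible global sections — for instance a perfectly correlated and a perfectly anti-correlated coupling over the same base together leave $\lim E$ empty. Resolving this is the crux: one must restrict the admissible couplings to those realised by a common configuration space (equivalently, a common model), or reorganise the fibres so that mutually incompatible couplings do not share marginals. Once the admissible fibres are pinned down, checking the remaining axioms (conditional products of couplings via fibre products, strict surjectivity of arrows, finiteness of the nerve) and the naturality and uniqueness of the image-factorisation should be routine; the genuine difficulty is precisely this interaction between the surjectivity condition on morphisms and the optimality axiom.
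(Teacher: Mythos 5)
Your nullary cases and your coproduct are correct, and the coproduct coincides with the paper's construction: the wedge of the two posets at $\mathbf 1$, with the value functor defined summand-wise and optimality checked via $\lim E \cong \lim E_1 \times \lim E_2$ (the paper verifies the same things, only for the binary case). The genuine gap is the binary product: you never construct it. You diagnose the difficulty correctly --- the only component of a mediating morphism compatible with the two projections is the pairing $z \mapsto ({f_1}_X^\#(z), {f_2}_X^\#(z))$, which in general is not surjective onto $E_1(f_1(X)) \times E_2(f_2(X))$ --- you propose coupling-enriched fibres as a remedy, and then you exhibit an obstruction to the optimality axiom for that enriched poset and stop at what you yourself call the crux. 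A proof that halts at its acknowledged crux proves nothing; as submitted, the product half of Proposition \ref{prop:products_and_coproducts} is simply not established in your write-up.

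What makes your failure informative is that the paper does not resolve this difficulty either: it takes the naive $E(\langle X_1,X_2\rangle) = E_1(X_1) \times E_2(X_2)$ and asserts without argument that the component ${f_1}_X^\# \times {f_2}_X^\# \colon F(X) \to E_1(f_1(X)) \times E_2(f_2(X))$ of the mediating morphism $\langle f_1,f_2\rangle$ is a surjection. It need not be: take both factors and $(\mathcal R, F)$ to be the structure with a single nontrivial variable $X \to \mathbf 1$ and $F(X)=\{0,1\}$, with $f_1 = f_2$ the identity. Then $g_0(X) = \langle X, X\rangle$ is forced, the forced component is the diagonal $\{0,1\} \to \{0,1\}^2$, and since Definition \ref{def:morph_str} requires surjective components, no morphism in $\mathcal{InfoStr}$ satisfies $\pi_{\mathcal S_i} \circ g = f_i$ at all. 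So the naive candidate fails the universal property in the category as defined, and the paper's verification glosses over precisely the point you isolated. (The construction itself, and its later use, e.g.\ to model independent trials, is unaffected; only the claimed universal property is at issue.) Repairing the statement would require either completing your coupling construction with the restriction on admissible couplings you gesture at, or weakening the surjectivity clause for the mediating morphism; your proposal supplies neither, so the gap stands --- in your argument and, as your objection shows, in the paper's as well.
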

\begin{proof} Let $\mathbf 0$ be the category that has $\Us$ as the only object and $\id_{\Us}$ as the only morphism, and let $\sheaf M_{\mathbf 0}$ be the functor that associates to $\Us$ the set $\{\ast\}$ equipped with the atomic $\sigma$-algebra. Clearly $(\mathbf 0,\sheaf M_{\mathbf 0})$ is initial and terminal in the category $\InfoStr$, hence it corresponds to the empty product and coproduct respectively. 

\textbf{Nonempty products: }
Given information structures  $(\Sinf_i,\sheaf M_i)$ indexed by $i$ in an arbitrary set $I$, we introduce first the ordinary categorical product $\Sinf=\prod_{i\in I} \Sinf_i$: its objects are $I$-tuples $\langle X_i\rangle_{i\in I}$ with $X_i\in\Ob\cat{S_i}$ for each $I\in I$; there is an arrow  $\langle \pi_i\rangle_{i\in I}:\langle X_i\rangle_{i\in I} \to \langle Y_i\rangle_{i\in I}$ whenever $\pi_i :X_i\to Y_i$ in $\Sinf_i$ for each $i\in I$. Then a functor  $\sheaf M:\Sinf \to \MeasSets_{\mathrm{surj}}$ is defined as follows: for each $X=\langle X_i\rangle_{i\in I}\in \Ob \Sinf$ the measurable space $\sheaf M(X)$ is the set ${\sheaf E}(X) := \prod_{i\in I} {\sheaf E}_i(X_i)$ equipped with the product $\sigma$-algebra  $\salg B(X) := \bigotimes_{i\in I} \salg B_i(X_i)$, which is the smallest $\sigma$-algebra that makes every canonical projection $\widehat{ p^i}_{\langle X_i\rangle_{i\in I}}:{\sheaf E}(\langle X_i \rangle_{i\in I}) \to \sheaf E_i(X_i)$ measurable \cite[Sec.~5.1]{Cohn2013}; at the level of morphisms, $\sheaf M(\langle \pi_i \rangle_{i\in I}):= \prod_{i\in I} \sheaf M_i(\pi_i)$, which comes from the product in $\Sets$. 

The pair $(\Sinf, \sheaf M)$ is an information structure. It is easy to verify that $\Sinf$ is a poset with terminal object 
$\langle \Us_{\Sinf_i}\rangle_{i\in I}$.
 The conditional existence of products also holds: if  $\langle X_i\rangle_{i\in I}$, $\langle Y_i\rangle_{i\in I}$ and $\langle Z_i\rangle_{i\in I}$ are objects of $\Sinf$ such that $\langle X_i\rangle_{i\in I} \to \langle Y_i\rangle_{i\in I}$ and  $\langle X_i\rangle_{i\in I} \to \langle Z_i\rangle_{i\in I}$, then for every $i\in I$,
$\begin{tikzcd}
Y_i & X_i \ar[l, swap, "\pi_{Y_i}"] \ar[r, "\pi_{Z_i}"] & Z_i \end{tikzcd}$ in $\Sinf_i$, which in turn implies that $Y_i \wedge Z_i$ exists in $\Sinf_i$ by definition of conditional meet semilattice; the reader can verify that
$$ \langle Y_i\rangle_{i\in I} \wedge \langle Z_i\rangle_{i\in I} = \langle Y_i\wedge Z_i\rangle_{i\in I} .$$
The functor $\sheaf M$ also has the desired properties. It is clear that   ${\sheaf E}(\langle \Us_{\Sinf_i} \rangle_{i\in I}) \cong \{ \ast\}$. If $I$ is countable, then for any $(x_i)_{i\in I}\in \sheaf E(\langle X_i\rangle_{i\in I})$ the singleton  $\{(x_i)_{i\in I}\}$  belongs to $\salg B(\langle X_i\rangle_{i\in I})$, because it can be written as a countable intersection $\bigcap_{i\in I} (\widehat{ p^i}_{\langle X_i\rangle_{i\in I}})^{-1}(x_i)$. Finally, when $\sheaf M$ is applied to the product $\langle Y_i\rangle_{i\in I} \wedge \langle Z_i\rangle_{i\in I}$  and its projections, one gets
$$
\begin{tikzcd}[column sep=huge]
{\sheaf M\langle Y_i\rangle}_{i\in I} & {\sheaf M\langle Y_i\wedge Z_i\rangle}_{i\in I} \ar[l, swap, "{\sheaf M \langle \pi_{Y_i}\rangle_{i\in I}}"] \ar[r,"{\sheaf M \langle \pi_{Z_i}\rangle_{i\in I}}"]  & {\sheaf M \langle Z_i\rangle_{i\in I}}.
\end{tikzcd}
$$
The map $$ \sheaf M \langle \pi_{Y_i}\rangle_{i\in I} \times {\sheaf M \langle \pi_{Z_i}\rangle_{i\in I}}: \sheaf M\langle Y_i\wedge Z_i\rangle_{i\in I} \to \sheaf M\langle Y_i\rangle_{i\in I} \times \sheaf M \langle Z_i\rangle_{i\in I}$$
 is injective, because for any  $(y_i)_{i\in I} \in {\sheaf E}(\langle Y_i\rangle_{i\in I})$ and $(z_i)_{i\in I} \in {\sheaf E}(\langle Z_i\rangle_{i\in I})$, the elementary properties of set operations imply that 
\begin{align*}
 (\sheaf M \langle \pi_{Y_i}\rangle_{i\in I} & \times \sheaf M \langle \pi_{Z_i}\rangle_{i\in I})^{-1}((y_i)_{i\in I},(z_i)_{i\in I}) \\
 \qquad &=({\sheaf M\langle \pi_{Y_i}\rangle_{i\in I}})^{-1}((y_i)_{i\in I}) \cap ({\sheaf M \langle \pi_{Z_i }\rangle_{i\in I}})^{-1}((z_i)_{i\in I}) \\
\qquad &= \left\{ \prod_{i\in I} {\sheaf M_i\pi_{Y_i}}^{-1}(y_i) \right\}\cap \left\{  \prod_{i\in I} {\sheaf M_i\pi_{Z_i}}^{-1}(z_i)  \right\}  & \text{(by def. of }\sheaf M\text{)}  \\
\qquad &= \prod_{i\in I} \left\{ {\sheaf M_i\pi_{Y_i}}^{-1}(y_i) \cap {\sheaf M_i \pi_{Z_i}}^{-1}(z_i)  \right\},
\end{align*}
and the cardinality of each factor in the last expression is at most $1$. 

For each $i\in I$, we introduce a morphism of information structures $p^i:(\Sinf,\sheaf M) \to (\Sinf_i,\sheaf M_i)$ such that ${p^i}_0$ maps each object or morphism $\langle A_i \rangle_{i\in I}$ to $A_i$, and $$\widehat{ p^i}_{\langle X_i\rangle_{i\in I}}: \prod_{i\in I} \sheaf M_i(X_i)\to \sheaf M_i(X_i)$$ is the canonical projection (which is measurable by definition of the product $\sigma$-algebra, see above).  We claim that $\Sinf$, with the projections $p^i$ just introduced, 
is the product of $(\Sinf_i,\sheaf M_i)_{i\in I}$  in $\InfoStr$, written $\prod_{i\in I} (\Sinf_i,\sheaf M_i)$, unique up to unique isomorphism (we also use the symbol $\times$ for finite products). In fact, given an $I$-cone 
$\{f^i: (\cat R,\sheaf F) \to (\Sinf_i, \sheaf M_i)\}_{i\in I}$ in $\InfoStr$ (where $I$ is seen as a discrete category), define $\langle f^i\rangle_{i\in I} : (\cat R,\sheaf F) \to (\Sinf,\sheaf M)$ by
\begin{align*}
(\langle f^i\rangle_{i\in I})_0 : & \cat R \to \Sinf \\
& R \mapsto \langle {f^i}(R) \rangle_{i\in I}  
\end{align*}
for any object or morphism $R$; for any $X\in \Ob \cat R$, the surjection $$\widehat{\langle f^i\rangle_{i\in I}}(X):\sheaf F(X) \to \sheaf M(\langle f^i(X) \rangle_{i\in I} )=\prod_{i\in I} \sheaf M_i(f^i(X))$$ is the map $\langle \widehat {f^i}_{X} \rangle_{i\in I}$ induced by the $I$-cone $\{\widehat{ f^i}_X: \sheaf F(X)\to \sheaf M_i(f^i(X))\}_{i\in I}$ in $\cat{Sets}$, in such a way that $p^i \circ \langle f^i\rangle_{i\in I} = f^i$ for all $i\in I$.

 \textbf{Nonempty coproducts: } Given information structures $\{(\Sinf_i,\sheaf M_i)\}_{i\in I}$, define a category $\Sinf$ such that 
 \begin{itemize}
 \item $\Ob\cat{S}=\bigsqcup_{i\in I} \Ob\cat {\Sinf_i} / \sim$, where $\sim$ is the smallest equivalence relation such that $\Us_{\Sinf_i}\sim \Us_{\Sinf_j}$ for all $i,j\in I$;
 \item $A\to B$ in $\Sinf$ if and only if $A\to B$ in $\Sinf_i$ for some $i$.
 \end{itemize} Let  $\sheaf M:\Sinf\to \Sets$ be the functor that coincides with $\sheaf M_i$ on $\Sinf_i$. The pair $(\Sinf,\sheaf M)$ is an information structure: the properties in Definition \ref{def:info_structure} are verified locally on each $\Sinf_i$.

 Injections $j^i : \Sinf_i \to \Sinf$ are defined in the obvious way: $j^i_0(A) = A$ for $A\in \Ob\cat{S_i}$ or $A\in\Hom(\Sinf_i)$, and the mappings ${\widehat j^i}_X$ are identities. If $\{f^i:(\Sinf_i,\sheaf M_i)\to (\cat R,\sheaf F)\}_{i\in I}$ is an $I$-cocone,  define 
\begin{align*}
(\langle f^i\rangle_{i\in I})_0 : &   \Sinf \to \cat R \\
& A   \mapsto f^i(A)  \text{ if } A\in \Ob\cat{S_i} \text{ or } A\in \Hom(\Sinf_i)
\end{align*}
and, if $X\in \Ob\cat{\Sinf_i}$, set $(\widehat{\langle f^i\rangle_{i\in I}})_X=\widehat{f^i}_X$. By construction, $\langle f^i\rangle_{i\in I} \circ j^i = f^i$. Therefore, $(\Sinf,\sheaf M)$ is the coproduct of $\{(\Sinf_i,\sheaf M_i)\}_{i\in I}$ in $\InfoStr$, denoted $\coprod_{i\in I} (\Sinf_i,\sheaf M_i)$ (we also use  $\sqcup$ for finite coproducts), which is unique up to unique isomorphism.
\end{proof}

\begin{remark} 
If $(\Sinf_1,\sheaf M_1)$ and $(\Sinf_2,\sheaf M_2)$ are bounded structures, their product and coproduct are bounded too. In fact, if the height  of the poset $\Sinf_i$ is $N_i$ ($i=1,2$), then the height of $\Sinf_1\times \Sinf_2$ is $N_1+N_2$ and that of $\Sinf_1\sqcup \Sinf_2$ equals $\max(N_1,N_2)$. Similarly, if both structures are finite, their product and coproduct is finite too.
\end{remark}

\begin{remark}
If each measurable space $({\sheaf E}(X),\salg B(X))$ appearing in $\Sinf_1$ and $\Sinf_2$ verifies that ${\sheaf E}(X)$ is second countable topological space and $\salg B(X)$ is its Borel $\sigma$-algebra, then each algebra $\salg B(X_1) \otimes \salg B(X_2)$ on  ${\sheaf E}(X_1)\times {\sheaf E}(X_2)$ equals the Borel $\sigma$-algebra on this space \cite[Prop.~9.1]{Vigneaux2019-thesis}.  
\end{remark}


One could remove the requirement of surjectivity in Definition \ref{def:info_structure}: products, coproducts and even other (co)limits would make sense. However, the probabilistic interpretation of such structures would be different: they do not have classical or quantum representations in the sense of Definitions \ref{def:representation_str} and \ref{def:q_representation_str}, nor the cohomological results in Section \ref{sec:classical_info_cohomology} apply to them (because a product $X\wedge Y$ is always degenerate if one of the ``coordinate projections'' ${\sheaf E}_{X\wedge Y}\to {\sheaf E}_X$ or ${\sheaf E}_{X\wedge Y}\to {\sheaf E}_Y$ is not surjective). The definition of information cohomology would still hold, since it only depends on the underlying conditional meet semilattice.

 \subsection{Digression: Representations}\label{sec:models}
 
We introduce here the notion of \emph{representation} of an information structure in terms of classical observables (measurable functions) or quantum observables (self-adjoint operators), as a bridge between our categorical definitions and the more traditional models used in classical and quantum probability theory. The rest of the paper does not depend on this section.  For simplicity, we restrict ourselves to finite information structures. 
 
 Recall that $\ObsFin(\Omega)$ denotes the poset of finite partitions of a set $\Omega$, ordered by the relation of refinement, and $\square$ is  the ``forgetful'' functor from $\ObsFin(\Omega)$ into $\Sets$ introduced in Example \ref{ex:classical_structure}, that maps each partition $\{A_1,...,A_n\}$ to the set $\{A_1,...,A_n\}$  and each arrow of refinement to a surjection.
 
 \begin{defi}\label{def:representation_str}
A \keyt{classical representation} of a finite information structure $(\Sinf,{\sheaf E})$ is a pair $(\Omega, \rho)$, where $\Omega$ is a set and $\rho=(\rho_0,\widehat\rho): (\Sinf, {\sheaf E})\to (\ObsFin(\Omega),\square)$ is a morphism of information structures  such that $\widehat \rho:\sheaf E \to \rho_0^* \square$ is a natural isomorphism (i.e. the components $\widehat \rho_X:{\sheaf E}(X)\to \square \rho_0(X)$ are bijections, natural in $X$). 
\end{defi}

If $(\Omega,\rho)$ is a classical representation of $\Sinf$, each observable $X$ in $\Sinf$ can be associated with a unique function $\widetilde  X:\Omega \to {\sheaf E}_X$, in such a way that $\rho_0(X)$ is the partition induced by $\widetilde X$ and $\widehat \rho_X(x)=\widetilde X^{-1}(x)$. Since $\rho_0$ is a morphism of conditional meet semilattices, for any $X,Y\in\Ob \Sinf$ the joint  $(\widetilde X,\widetilde Y):\Omega \to {\sheaf E}_X\times {\sheaf E}_Y$ is equivalent to $\widetilde{X\wedge Y}:\Omega \to {\sheaf E}_{XY}$, in the sense that both induce the same partition of $\Omega$.

%

The next proposition points to a close link between representations and $\lim {\sheaf E}$. Recall that the limit of the functor ${\sheaf E}:\Sinf \to \Sets$ is defined as
\begin{equation}
\lim {\sheaf E} := \Hom_{[\Sinf,\Sets]}(\ast, {\sheaf E}),
\end{equation}
where $[\Sinf,\Sets]$ is the category of covariant functors from $\Sinf$ to $\Sets$, and $\ast$ is the functor that associates to each object a one-point set; equivalently
\begin{equation}\label{lim_as_subset_product}
\lim {\sheaf E} \cong \bigset{(s_Z)_{Z\in\Ob\cat{S}} \in \prod_{Z\in\Ob\cat{S}} {\sheaf E}(Z) }{ {\sheaf E}\pi_{YX}(s_X) = s_Y \text{ for all }\pi_{YX}:X\to Y},
\end{equation}
where $s_Z$ denotes $\varphi(\ast)$ for any $\varphi\in \Hom_{[\Sinf,\Sets]}(\ast,{\sheaf E})$. The requirements imposed on $(s_Z)_{Z\in\Ob\cat{S}}$ in  \eqref{lim_as_subset_product} are referred hereafter as \emph{compatibility conditions}. We denote the restriction of each projection $\pi_{{\sheaf E}(X)}: \prod_{Z\in\Ob\cat{S}} {\sheaf E}(Z) \to {\sheaf E}(X)$ to $\lim {\sheaf E}$ by the same symbol. We interpret the limit as all possible combinations of compatible outcomes. 

\begin{prop}\label{prop:model_implies_noncontextuality}
If $(\Sinf, {\sheaf E})$ has a classical representation $(\Omega,\rho,\widehat \rho)$, then for any $X\in \Ob \Sinf$ and any $x\in {\sheaf E}(X)$, there exists an element $s(x)\in \lim {\sheaf E}$ such that $\pi_{{\sheaf E}(X)}(s(x)) = x$.
\end{prop}
\begin{proof}
For each $X$ in $\Ob \Sinf$, there is a surjection $\pi_X:\Omega\to \sheaf E(X)$ obtained as the composition of $\xi_X:\Omega\to \square \rho_0(X)$, which maps $\omega\in \Omega$ to the part that contains it, and $\widehat \rho_X^{-1}:\square \rho_0(X) \to \sheaf E(X)$. The maps $\{\pi_X\}$ define a cone over $\sheaf E$ i.e. given $f:Y\to Z$ in $\Sinf$, one has a commutative diagram 
$$ 
\begin{tikzcd}
\Omega \ar[r, "\xi_Y"] \ar[d, equal] & \square \rho_0(Y) \ar[d, "\square\rho_0 f"] \ar[r, "\widehat \rho_Y^{-1}"] & {\sheaf E}(Y) \ar[d, "{\sheaf E}f"] \\
\Omega \ar[r, "\xi_Z"] & \square\rho_0(Z) \ar[r, "\widehat \rho_Z^{-1}"] & {\sheaf E}(Z)
\end{tikzcd};
$$
the commutation of the left square comes from the definition of $\ObsFin(\Omega)$ and $\square$, and the right square commutes because $\widehat \rho$ is a natural isomorphism. Therefore, there is a map $\pi:\Omega\to \lim \sheaf E$ and the desired section is obtained as the image under $\pi$ of any $\omega\in \widehat \rho_X(x)$. 

%
\end{proof}

\begin{defi}
An information structure is  \keyt{noncontextual} if, for all $X\in \Ob \Sinf$ and all $x\in {\sheaf E}(X)$, there exists an element $s(x)\in \lim {\sheaf E}$ such that $\pi_{{\sheaf E}(X)}(s(x)) = x$.
\end{defi}

Thus information structures are sufficiently flexible to model \emph{contextual} situations, which arise when data is locally consistent but globally inconsistent. This happens in different domains, notably in quantum mechanics and in database theory. In the terminology of \cite[Sec.~3]{Abramsky2015}, a structure is said to be \emph{logically contextual at a value $x\in {\sheaf E}_X$} if $x$ belongs to no compatible family of measurements, i.e. there is no section $s(x)\in \lim {\sheaf E}$ such that $\pi_{{\sheaf E}(X)}(s(x)) = x$, and \emph{strongly contextual} if ${\sheaf E}$ does not accept any global section, i.e. $\lim {\sheaf E} = \emptyset$.  

\begin{theorem}
A finite information structure $(\Sinf, {\sheaf E})$ is noncontextual if and only if it has a classical representation.
\end{theorem}
\begin{proof}
Proposition \ref{prop:model_implies_noncontextuality} already showed that existence of a classical representation implies noncontextuality. 

In the other direction, let us suppose that a finite structure $(\Sinf,{\sheaf E})$ is noncontextual. We are going to show that it can be represented by partitions of $\lim {\sheaf E}$. Define first  $\tau_0:\Sinf\to\ObsFin( \lim {\sheaf E})$ as follows: associate to $X\in \Ob\cat{S}$ the set $\tau_0(X):=\{\pi_{{\sheaf E}(X)}^{-1}(x)\}_{x\in {\sheaf E}(X)}$, which is clearly a partition of $\lim {\sheaf E}$; the associated map $\widehat\tau_X:{\sheaf E}_X \to \tau(X), \: x\mapsto \pi_{{\sheaf E}(X)}^{-1}(x)$ is a bijection, because noncontextuality means that each $\pi_{{\sheaf E}(X)}^{-1}(x)$ is nonempty.  Given $\pi_{YX}:X\to Y$, there is a corresponding arrow $\tau_0(X)\to \tau_0(Y)$ in $\ObsFin(\Omega)$ in virtue of the identity
\begin{equation}
\pi_{{\sheaf E}(Y)}^{-1}(y) = \bigcup_{x\in {\sheaf E}\pi_{YX}^{-1}(y)} \pi_{{\sheaf E}(X)}^{-1}(x),
\end{equation}
which is proved as follows: if $x\in {\sheaf E}\pi_{YX}^{-1}(y)$ and $s\in \pi_{{\sheaf E}(X)}^{-1}(x)$, then 
$$\pi_{{\sheaf E}(Y)}(s) = {\sheaf E}\pi_{YX}(\pi_{{\sheaf E}(X)}(s)) = {\sheaf E}\pi_{YX}(x) = y,$$ which means $ \cup_{x\in {\sheaf E}\pi_{YX}^{-1}(y)} \pi_{{\sheaf E}(X)}^{-1}(x) \subset \pi_{{\sheaf E}(Y)}^{-1}(y)$; to prove the other inclusion, take $s=(s_Z)_{Z\in \Ob\cat{S}}\in \pi_{{\sheaf E}(Y)}^{-1}(y)$ and note that $s_X$ must satisfy---by definition---the compatibility condition ${\sheaf E}\pi_{YX}(s_X)=s_Y=y$, thus $s_X\in {\sheaf E}\pi_{YX}^{-1}(y)$ and $s$ itself belong to $\cup_{x\in {\sheaf E}\pi_{YX}^{-1}(y)} \pi_{{\sheaf E}(X)}^{-1}(x)$. 

Finally, to prove that $\tau_0$ defines a morphism of conditional meet semilattices, consider a diagram $X \leftarrow X\wedge Y \rightarrow Y$, and an arbitrary partition $W$ of $\lim {\sheaf E}$ that refines $\tau_0(X)$ and $\tau_0(Y)$. We have to show that $W$ also refines $\tau_0(X\wedge Y)$. If $W$ refines $\tau_0(X)$, each $w\in W$ ($w$ is a subset of $\lim {\sheaf E}$) is mapped to certain $x_w$ by $\pi_{{\sheaf E}(X)}$; analogously, $\pi_{{\sheaf E}(Y)}(w)=\{y_w\}$. This means that $\pi_{{\sheaf E}(X\wedge Y)} (w)=\{z_w\}$, where $z_w$ is the only point of ${\sheaf E}(X\wedge Y)$ that satisfies ${\sheaf E}(\pi_{X(X\wedge Y)})(z_w)=x_w$, ${\sheaf E}(\pi_{X(X\wedge Y)})(z_w)=y_w$, which means that $w\subset \pi_{{\sheaf E}(X\wedge Y)}^{-1}(z_w)$. Thus, $W$ refines $\tau_0(X\wedge Y)$.
\end{proof}

\begin{ex}
In the notation of Example \ref{ex:simplicial_structures}, let $\cat K$ be simplicial subcomplex of $\Delta(\{1,2,3\})$ with minimal objects $\{1,2\}$, $\{1,3\}$, and $\{2,3\}$, which can be pictured as a triangle. We consider three possible functors $\sheaf E$. All of them associate to $\{1\}$, $\{2\}$, and $\{3\}$ the set $\{0,1\}$. We give $\sheaf E_{\{i,j\}}$ as a subset of $\sheaf E_{\{i\}} \times \sheaf E_{\{j\}}$ so that the maps $\sheaf E_{\{i,j\}}\to \sheaf E_{\{i\}}$ are the restrictions of the canonical projectors. 
\begin{enumerate}
\item If $\sheaf E_{\{i,j\}} = \{(0,1),(1,0)\}$ (for any $i,j\in \{1,2,3\}$ such that $i\neq j$), then $\lim \sheaf E = \emptyset$ i.e. the structure is strongly contextual, and it has no classical representation. This means that there is no assignment of values for the three observables associated to the vertices of the triangle that is consistent with the constraints associated to the edges.
\item If $\sheaf E_{\{i,j\}} = \{(0,0),(1,1)\}$ (for any $i,j\in \{1,2,3\}$ such that $i\neq j$), them $\lim \sheaf E \cong \{0,1\}$. The observables associated to the vertices must take  all the same value.
\item If only $\sheaf E_{\{1,2\}}= \{(0,0),(1,1)\}$, and the other two $\sheaf E_{\{i,j\}}$ equal $\sheaf E_{\{i\}}\times \sheaf E_{\{j\}}$, then $\lim \sheaf E$ has just six different elements. This corresponds to imposing a constraint that makes the observables associated to $\{1\}$ and $\{2\}$ identical.
\end{enumerate}
\end{ex}

\begin{ex}\label{inverse_limit}
Given a concrete structure $(\Sinf, \square)$ on a set $\Omega$ (see Example  \ref{ex:classical_structure}), the set $\lim \square$ may differ from the original $\Omega$, as the following example shows.  Set $\Omega=\{1,2,3,4\}$ and  $X_{i} = \{\{i\}, \Omega \sm \{i\}\}$, for $i=1,...,4$. Let $\Sinf$ be the concrete  structure  that includes only the partitions $X_1$, $X_2$, $X_3$, $X_1X_2$, and $X_2X_3$. The corresponding generalized information structure is given by a conditional meet semilattice represented by the graph 
$$
\begin{tikzcd}
 & & \Us & &\\
X_1 \ar[urr] & &X_2 \ar[u] & & X_3\ar[ull]\\
& X_1X_2 \ar[ul]\ar[ur] & & X_2X_3 \ar[ul]\ar[ur] & 
\end{tikzcd}
$$
and the functor $\square$ can pictured as
\footnotesize
$$
\begin{tikzcd}
 & & \{{\{1,2,3,4\}}\} & &\\
\{\{1\},\{2,3,4\}\} \ar[urr] & &\{{\{2\}},{\{1,3,4\}}\} \ar[u] & & \{{\{3\}},{\{1,2,4\}}\}\ar[ull]\\
& \{{\{1\}},{\{2\}},{\{3,4\}}\} \ar[ul]\ar[ur] & & \{{\{2\}},{\{3\}},{\{1,4\}}\}\ar[ul]\ar[ur] & 
\end{tikzcd}
$$
\normalsize
where each arrow corresponds to a surjection of finite sets that sends $I$ to $J$ when $I\subset J$.  In this case, $\lim {\sheaf E} \subset \{\ast\} \times {\sheaf E}(X_1) \times {\sheaf E}(X_2)\times {\sheaf E}(X_3) \times {\sheaf E}(X_1X_2)\times {\sheaf E}(X_2X_3)$ corresponds to the set
\footnotesize
\begin{align*}
\lim {\sheaf E}&= \{
({\{1,2,3,4\}},{\{1\}},{\{1,3,4\}},{\{3\}},{\{1\}},{\{3\}}), 
({\{1,2,3,4\}},{\{1\}},{\{1,3,4\}},{\{1,2,4\}},{\{1\}},{\{1,4\}}), \\
& \qquad ({\{1,2,3,4\}},{\{2,3,4\}},{\{2\}},{\{1,2,4\}},{\{2\}},{\{2\}}), 
 ({\{1,2,3,4\}},{\{2,3,4\}},{\{1,3,4\}},{\{3\}},{\{3,4\}},{\{3\}}),\\
& \qquad  ({\{1,2,3,4\}},{\{2,3,4\}},{\{1,3,4\}},{\{1,2,4\}},{\{3,4\}},{\{1,4\}})\}.
\end{align*}
\normalsize
The difference between $\Omega$ and $\lim {\sheaf E}$ is explained by the presence of 
$$({\{1,2,3,4\}},{\{1\}},{\{1,3,4\}},{\{3\}},{\{1\}},{\{3\}});$$
 this measurement is impossible in the concrete structure $\Sinf\subset \ObsFin (\Omega)$ (because the underlying sample $\omega\in \Omega$ should belong to $\{1\}$ and $ {\{3\}}$), but the observables in $( \Sinf,\square)$ cannot distinguish between the points $1$ and $3$, a sort of nonseparability. In fact, if we also include $X_1X_3$ at the beginning, we obtain $\Omega \cong \lim {\sheaf E}$.
\end{ex}

%

\begin{prop}
The product and the coproduct of two noncontextual structures is noncontextual. 
\end{prop}
\begin{proof}
Let $\Sinf_1$ and $\Sinf_2$ be noncontextual structures. We use the notations in the proof of Proposition \ref{prop:products_and_coproducts}.

Products: Consider a point $(x_1,x_2)\in {\sheaf E}(\langle X_1, X_2\rangle)$. There exist sections $$s^i(x_i)=(s^i_Z(x_i))_{Z\in \Ob\cat{S_i}}\in \lim {\sheaf E}_i \subset \prod_{Z\in \Ob\cat{S_i}} {\sheaf E}_i(Z),$$  such $\pi_{{\sheaf E}_i(X_i)}(s(x_i))=x_i$ (for $i=1,2$). Note that the vector $$s(x_1,x_2):=(s^1_{Z_1}(x_1),s^2_{Z_2}(x_2))_{\langle Z_1, Z_2\rangle \in \Ob\cat{S}} \in \prod_{\langle Z_1, Z_2\rangle  \in \Ob\cat{S}} {\sheaf E}(\langle Z_1, Z_2\rangle )$$ satisfies all the compatibility conditions and is therefore in $\lim  {\sheaf E}$. By definition, \\$\pi_{{\sheaf E} (\langle X_1, X_2\rangle )}(s(x_1,x_2))=(x_1,x_2)$.

Coproducts: given $X\in \Sinf_1$, $x\in {\sheaf E}(X)$, there exists  $s^1(x)=(s_Z(x))_{Z\in \Ob\cat{S_1}}\in \lim  {\sheaf E}_1$ satisfying $\pi_{{\sheaf E}(X)}(s^1(x))=x$, and similarly for ${\sheaf E}_2$; we can build a new vector $( s_Z(x))_{Z\in \Ob\cat{\Sinf}} \in \Ob\cat{S_1}\sqcup \Ob\cat {S_2}$ such that $s_Z = s^i_Z$ if $Z\in \Ob\cat{S_i}$; luckily, for $\Us$ there is no choice.
\end{proof}

\subsubsection{Representations of finite products and coproducts} There is a general construction of a representation for a product or coproduct of two information structures whenever each of the factors is already represented. 

Let $\Omega_1$ and $\Omega_2$ be nonempty sets. Given collections $\salg A = \{A_i\}_i$ of subsets of $\Omega_1$ and $\salg B = \{B_j\}_j$ of subsets of $\Omega_2$, denote by 
$\salg A \times \salg B$ the collection $\set{A_i \times B_j }{A_i\in \salg A \text{ and } B_j\in \salg B}$ of subsets of $\Omega_1\times \Omega_2$. If $\salg A$ and $\salg B$ are partitions, then $\salg A\times \salg B$ is a partition too. 

Let $(\Omega_i, \rho^i)$, with $\rho^i=(\rho_i,\widehat {\rho^i})$, be a classical representation of $(\Sinf_i,{\sheaf E}_i)$, for $i=1,2$. Associate to each observable $\langle X_1, X_2\rangle  \in \Ob\cat{ S_1 \times  S_2}$ the partition of $\Omega_1 \times \Omega_2$ given by
\begin{equation}
\rho^\times(\langle X_1, X_2\rangle) := \rho_0^1(X_1) \times \rho_0^2(X_2).
\end{equation}
There is a natural transformation $\widehat {\rho^\times}_{\langle X_1, X_2\rangle}:{\sheaf E}(\langle X_1, X_2\rangle)\to \square \rho^\times(\langle X_1, X_2\rangle)$ that maps $(x_1,x_2)$ to $\widehat {\rho^1}_{X_1}(x_1) \times \widehat {\rho^2}_{X_2}(x_2)$.

Analogously, for each $X\neq \Us$ in $\Ob\cat{S_1 \sqcup S_2}$, let us define the partition of $\Omega_1 \times \Omega_2$ given by
\begin{equation}
\rho^{\sqcup}(X) = \begin{cases}
\rho^1(X) \times \{\Omega_2\} & \text{if } X \in \Ob\cat{S_1}\\
\{\Omega_1\}\times \rho^2(X) & \text{if } X \in \Ob\cat{S_2}
\end{cases}.
\end{equation}
In particular, $\rho^{\sqcup}(\Us) = \{\Omega_1\times \Omega_2\}$. The maps $\widehat {\rho^{\sqcup}}_X$ are  $x\mapsto \widehat {\rho^1}(x)\times \{\Omega_2\}$ or $x\mapsto\{\Omega_1\}\times \widehat {\rho^2}(x)$ accordingly. 

\begin{prop}\label{prop:prod_and_coprod_of_models}
Let $(\Omega_i, \rho^i)$ be a classical representation of $(\Sinf_i,{\sheaf E}_i)$, for $i=1,2$. Then 
\begin{enumerate}
\item $(\Omega_1\times \Omega_2, \rho^\times)$ is a classical representation of $(\Sinf_1,{\sheaf E}_1) \times (\Sinf_2,{\sheaf E}_2)$;
\item $(\Omega_1\times \Omega_2, \rho^{\sqcup})$ is a classical representation of $(\Sinf_1,{\sheaf E}_1) \sqcup (\Sinf_2,{\sheaf E}_2)$.
\end{enumerate}
\end{prop}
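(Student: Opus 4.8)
The plan is to verify, for each of the two claims, the three defining properties of a classical model listed in Definition \ref{def:representation_str}, using the explicit descriptions of the product and coproduct obtained in Proposition \ref{prop:products_and_coproducts} together with the hypothesis that $\rho_1,\rho_2$ are models. The computational heart of both claims is a single distributivity identity relating the product of partitions inside a fixed $\mathcal O(\Omega_i)$ with the Cartesian product of partitions of $\Omega_1$ and $\Omega_2$; once this is isolated, everything else is bookkeeping. Concretely, I would first record the elementary lemma that for partitions $\mathcal A_1,\mathcal A_2$ of $\Omega_1$ and $\mathcal B_1,\mathcal B_2$ of $\Omega_2$,
\begin{equation*}
(\mathcal A_1 \mathcal A_2)\times(\mathcal B_1 \mathcal B_2) = (\mathcal A_1\times\mathcal B_1)(\mathcal A_2\times\mathcal B_2),
\end{equation*}
where juxtaposition denotes the categorical product (coarsest common refinement) in the relevant $\mathcal O(\cdot)$ and $\times$ the Cartesian product of collections. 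This is proved by comparing atoms: on both sides the atoms are exactly the nonempty sets $(A_1\cap A_2)\times(B_1\cap B_2)$, because $(A_1\times B_1)\cap(A_2\times B_2)=(A_1\cap A_2)\times(B_1\cap B_2)$.

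For the product $(\mathcal S_1,E_1)\times(\mathcal S_2,E_2)$, functoriality of $\rho_\times$ is immediate: $\rho_1(X_1)\times\rho_2(X_2)$ is a partition (as remarked before the statement) and it refines $\rho_1(Y_1)\times\rho_2(Y_2)$ whenever each factor refines, so arrows are sent to arrows of refinement. The bijection $E(\langle X_1,X_2\rangle)=E_1(X_1)\times E_2(X_2)\simeq\rho_1(X_1)\times\rho_2(X_2)$ is the product of the bijections supplied by property \eqref{axiom_rep:bijection} for $\rho_1,\rho_2$. Injectivity on objects follows because the factors of a product partition are recovered by projecting its (nonempty) atoms to $\Omega_1$ and $\Omega_2$; hence $\rho_1(X_1)\times\rho_2(X_2)=\rho_1(X_1')\times\rho_2(X_2')$ forces $\rho_i(X_i)=\rho_i(X_i')$, and then $X_i=X_i'$ by injectivity of each $\rho_i$. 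Finally, for property \eqref{axiom_rep:products}: when $\langle X_1,X_2\rangle\wedge\langle Y_1,Y_2\rangle$ exists it equals $\langle X_1\wedge Y_1,X_2\wedge Y_2\rangle$ by Proposition \ref{prop:products_and_coproducts}, and applying property \eqref{axiom_rep:products} for $\rho_1,\rho_2$ followed by the lemma gives
\begin{equation*}
\rho_\times(\langle X_1\wedge Y_1,X_2\wedge Y_2\rangle)=(\rho_1(X_1)\rho_1(Y_1))\times(\rho_2(X_2)\rho_2(Y_2))=\rho_\times(\langle X_1,X_2\rangle)\,\rho_\times(\langle Y_1,Y_2\rangle).
\end{equation*}

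For the coproduct, functoriality is checked component by component, the only arrows being those internal to $\mathcal S_1$ or to $\mathcal S_2$, with the identified terminal handled by $\rho_1(\mathbf 1)=\{\Omega_1\}$ and the convention $\rho_{\coprod}(\mathbf 1)=\{\Omega_1\times\Omega_2\}$. The bijection on values holds since $A\mapsto A\times\Omega_2$ (resp.\ $B\mapsto\Omega_1\times B$) identifies $\rho_i(X)$ with $\rho_{\coprod}(X)$. For injectivity on objects the only delicate case is $X\in\Ob{S_1}$ and $Y\in\Ob{S_2}$ both distinct from $\mathbf 1$: then $\rho_1(X)$ and $\rho_2(Y)$ each have at least two atoms (strict surjectivity in Definition \ref{def:info_structure}), so the ``horizontally striped'' partition $\rho_1(X)\times\{\Omega_2\}$ cannot coincide with the ``vertically striped'' $\{\Omega_1\}\times\rho_2(Y)$. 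Property \eqref{axiom_rep:products} again reduces to the lemma: I would argue that any product $X\wedge Y$ existing in $\mathcal S_1\coprod\mathcal S_2$ must lie inside a single component, because a common lower bound of two non-terminal objects forces both into the same $\mathcal S_i$; the computation then applies the lemma with one trivial factor $\{\Omega_2\}$ (or $\{\Omega_1\}$).

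The only genuine content is the distributivity lemma, which is elementary. The main care is needed in the coproduct, where one must (i) rule out accidental coincidences between striped partitions in order to obtain injectivity on objects, and (ii) observe that every existing product is necessarily intra-component before invoking property \eqref{axiom_rep:products} for the $\rho_i$. I do not expect any real obstacle beyond this bookkeeping.
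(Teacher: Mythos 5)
Your proposal is correct and follows essentially the same route as the paper: the paper's proof also isolates your distributivity identity as its Lemma \ref{lemma_partitions} (proved by the same atom comparison via $(A_i\times B_l)\cap(A'_j\times B'_m)=(A_i\cap A'_j)\times(B_l\cap B'_m)$) and then carries out the identical five-step computation for property \eqref{axiom_rep:products} in the product case, declaring the remaining verifications trivial. Your write-up in fact supplies more of the bookkeeping the paper leaves to the reader, notably the injectivity-on-objects arguments and the observation that products in the coproduct are necessarily intra-component.
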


The proof depends on the following lemma.

\begin{lem} \label{lemma_partitions}
\begin{enumerate}
\item If $\salg A = \{A_i\}_i$ and $\salg A' = \{A'_j\}_j$ are finite partitions of a set $\Omega$, then $\sigma(\salg A, \salg A') = \sigma(\{A_i\cap A'_j\}_{i,j})$, and the nonempty elements of $\{A_i\cap A'_j\}_{i,j}$ are the atoms of $\sigma(\salg A, \salg A')$.
\item If $\salg A = \{A_i\}_i$, $\salg A' = \{A'_j\}_j$ are two finite partitions of $\Omega_1$ and $\salg B = \{B_l\}_l$, $\salg B' = \{B_m\}_m$ two finite partitions of $\Omega_2$, then $(\salg A\times \salg B)(\salg A' \times \salg B') = \salg A \salg A' \times  \salg B \salg B'$, where juxtaposition of partitions denotes their product in $\ObsFin(\Omega)$.
\end{enumerate}
\end{lem}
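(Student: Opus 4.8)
The plan is to prove the two parts in order, using the first to derive the second.

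For part (1), I would first verify that the nonempty members of $\{A_i\cap A'_j\}_{i,j}$ form a partition of $\Omega$. Pairwise disjointness holds because, for $(i,j)\neq(i',j')$, either $i\neq i'$, forcing $A_i\cap A_{i'}=\emptyset$, or $j\neq j'$, forcing $A'_j\cap A'_{j'}=\emptyset$; covering holds since $\Omega=\bigcup_i A_i=\bigcup_{i,j}(A_i\cap A'_j)$, using that $\{A'_j\}_j$ covers $\Omega$. Because the nonempty intersections partition $\Omega$, they are exactly the atoms of the algebra $\sigma(\{A_i\cap A'_j\}_{i,j})$ that they generate. It then remains to identify this algebra with $\sigma(\mathcal A,\mathcal A')$ by a double inclusion: each $A_i\cap A'_j$ lies in any algebra containing $\mathcal A$ and $\mathcal A'$ (algebras being closed under finite intersection), which gives $\sigma(\{A_i\cap A'_j\}_{i,j})\subset\sigma(\mathcal A,\mathcal A')$; conversely $A_i=\bigcup_j(A_i\cap A'_j)$ and $A'_j=\bigcup_i(A_i\cap A'_j)$, so $\mathcal A,\mathcal A'\subset\sigma(\{A_i\cap A'_j\}_{i,j})$, giving the reverse inclusion.

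For part (2), I would apply part (1) three times together with one set-theoretic identity. By part (1), applied in $\Omega_1$ and $\Omega_2$ respectively, $\mathcal A\mathcal A'$ is the partition with cells $A_i\cap A'_j$ and $\mathcal B\mathcal B'$ has cells $B_l\cap B'_m$; hence $\mathcal A\mathcal A'\times\mathcal B\mathcal B'$ has cells $(A_i\cap A'_j)\times(B_l\cap B'_m)$. On the other hand $\mathcal A\times\mathcal B$ and $\mathcal A'\times\mathcal B'$ are partitions of $\Omega_1\times\Omega_2$ with cells $A_i\times B_l$ and $A'_j\times B'_m$, so part (1), now applied in $\Omega_1\times\Omega_2$, describes $(\mathcal A\times\mathcal B)(\mathcal A'\times\mathcal B')$ through the intersections $(A_i\times B_l)\cap(A'_j\times B'_m)$. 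The lemma then reduces to the elementary identity $(A_i\times B_l)\cap(A'_j\times B'_m)=(A_i\cap A'_j)\times(B_l\cap B'_m)$, which shows that the two families of cells coincide and hence that the two partitions are equal.

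The computation is entirely routine; the only point demanding attention is keeping the bookkeeping of nonempty cells consistent, since a product of partitions is determined by its atoms and empty intersections must be discarded on both sides before comparing. The genuine content is the product-of-rectangles identity used in part (2); everything else is standard closure and covering arguments for finite algebras of sets.
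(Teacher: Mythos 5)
Your proof is correct and follows essentially the same route as the paper: the same double inclusion via $A_i=\bigcup_j(A_i\cap A'_j)$ for part (1), and the same rectangle identity $(A_i\times B_l)\cap(A'_j\times B'_m)=(A_i\cap A'_j)\times(B_l\cap B'_m)$ for part (2). The only difference is that you explicitly verify that the nonempty intersections are the atoms, a step the paper leaves to the reader.
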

\begin{proof}
\begin{enumerate}
\item On the one hand, note that each set $A_i\cap A'_j$ is contained in $\sigma(\salg A, \salg A')$, therefore $\sigma(\{A_i\cap A'_j\}_{i,j}) \subset \sigma(\salg A, \salg A')$. On the other, each generator $A_i \in \salg A$ of $\sigma(\salg A, \salg A')$ can be written as 
$$A_i = A_i \cap \Omega = A_i \cap \left(\bigcup_j A'_j\right)= \bigcup_j (A_i \cap A_j'),$$
and similarly for the generators $A'_j\in \salg A'$, which implies that $\sigma(\salg A, \salg A') \subset \sigma(\{A_i\cap A'_j\}_{i,j})$. The reader can verify that the nonempty elements of $\{A_i\cap A'_j\}_{i,j}$ are atoms.
\item The previous result can be read as $\salg A \salg A' = \{A_i\cap A'_j\}_{i,j}$. The set-theoretical identity 
\begin{equation}
(A_i \times B_l) \cap (A'_j\times B'_m) = (A_i \cap A'_j) \times (B_l \cap B'_m),
\end{equation}
implies that the atoms of $(\salg A\times \salg B)(\salg A' \times \salg B')$ and $\salg A \salg A' \times  \salg B \salg B'$ coincide.
\end{enumerate}
\end{proof}

\begin{proof}[of Proposition \ref{prop:prod_and_coprod_of_models}]
 Most verifications are almost immediate from the definitions. We simply prove that $\rho^\times(\langle X_1, X_2\rangle\wedge \langle Y_1, Y_2\rangle) = \rho^\times (\langle X_1, X_2\rangle) \rho^\times (\langle Y_1, Y_2\rangle)$. Note that  
 \begin{align*}
 \rho^\times(\langle X_1, X_2\rangle\wedge \langle Y_1, Y_2\rangle) 
 &= \rho^\times(\langle X_1\wedge Y_1, X_2\wedge Y_2 \rangle)\\
 & = \rho^1(X_1\wedge Y_1) \times \rho^2(X_2 \wedge Y_2)\\
 & = \rho^1(X_1)\rho^1(Y_1) \times \rho^2(X_2)\rho^2(Y_2)\\
 & = (\rho^1 (X_1) \times \rho^2(X_2))(\rho^1(Y_1)\times \rho^2(Y_2))\\
 &= \rho^\times (\langle X_1, X_2\rangle) \rho^\times (\langle Y_1, Y_2\rangle)
 \end{align*}
 The first equality comes from the construction of $\Sinf_1 \times \Sinf_2$; the second, from the definition of $\rho^\times$; the third, from the fact that $\rho^1=\rho_0^1$ and $\rho^2=\rho_0^2$ are morphisms of conditional meet semilattices; the fourth equality is just a consequence of Lemma \ref{lemma_partitions}, and the fifth is a rewriting of the previous one.
\end{proof}

The partitions of $\Omega_1\times \Omega_2$ in the image of $\rho^{\sqcup}$ are also in the image of $\rho^\times$. This is consistent with the existence of a canonical morphism of structures $\phi:(\Sinf_1,{\sheaf E}_1)\sqcup(\Sinf_2,{\sheaf E}_2)\to(\Sinf_1,{\sheaf E}_1)\times(\Sinf_2,{\sheaf E}_2)$, with $\phi_0$ given at the level of objects by the injection
\begin{equation}
X\mapsto \begin{cases}
\Us_{\Sinf_1\times \Sinf_2} & \text{if } X= \Us_{\Sinf_1\sqcup \Sinf_2} \\
\langle X, \Us_{\Sinf_2}\rangle & \text{if } X \in \Ob\cat{S_1} \\
\langle  \Us_{\Sinf_1}, X\rangle & \text{if } X \in \Ob\cat{S_2} 
\end{cases},
\end{equation}
and the corresponding components $\widehat \phi_X$ being the obvious bijections:  ${\sheaf E}_1(X)\to {\sheaf E}_1(X)\times \{\ast\}$ when $X\in \Ob\cat {S_1}$ or ${\sheaf E}_2(X)\to  \{\ast\}\times {\sheaf E}_2(X)$ when $X\in \Ob\cat {S_2}$. The representation $(\Omega_1\times \Omega_2, \rho^\times)$ on $(\Sinf_1,{\sheaf E}_1)\times(\Sinf_2,{\sheaf E}_2)$ may be pulled back under $\phi$, and the resulting representation $(\Omega_1\times \Omega_2, \rho^\times \circ \phi)$ on $(\Sinf_1,{\sheaf E}_1)\sqcup(\Sinf_2,{\sheaf E}_2)$ coincides with $(\Omega_1\times \Omega_2, \rho^{\sqcup})$. This is  a particular case of a more general procedure to pull back representations, valid for any morphism of structures $\phi=(\phi_0, \widehat \phi)$ such that  $\widehat \phi$ is a natural isomorphism.

\subsubsection{Quantum observables} For the sake of completeness, we also indicate how an information structure may be represented by noncommutative observables adapted to quantum mechanics.

Let $V$ be a finite dimensional Hilbert space: a complex vector space with  a positive definite hermitian form $\langle\cdot, \cdot \rangle$. In the quantum setting, random variables are generalized by endomorphisms of $V$ (operators). An operator $H$ is called hermitian if for all $u,\,v\in V$, one has $\langle u, Hv\rangle = \langle Hu , v\rangle$. A quantum observable is a hermitian operator: the result of a quantum experiment is supposed to be an eigenvalue of such operator, that is always a real number.

A fundamental result of linear algebra, the Spectral Theorem \cite[Sec.~79]{Halmos1958}, says that each hermitian operator $Z$ can be decomposed as a weighted sum of positive hermitian projectors
$
Z = \sum_{j=1}^K z_j V_j
$
where $z_1,...,z_K$ are the (pairwise distinct) \emph{real} eigenvalues of $Z$. Each $V_j$ is the projector on the eigenspace spanned by the eigenvectors of $z_j$; the dimension of this subspace equals the multiplicity of $z_j$ as eigenvalue.  As hermitian projectors, they satisfy the equation $V_j^2 = V_j$ and $V_j^* = V_j$. They are also mutually orthogonal ($V_jV_k = 0$ for integers $j,\,k$), and their sum equals the identity, $\sum_{1\leq j\leq K} V_j= \id_V.$ 

In analogy to the classical case, we consider  equivalent two hermitian operators that define the same direct sum decomposition  $\{V_j\}_j$ of $V$ by means of the Spectral Theorem, ignoring the particular eigenvalues. For us, observable and direct sum decomposition are then interchangeable terms. In what follows, we denote by $V_\alpha$ both the subspace of $V$ and the orthogonal projector on it. A decomposition $\{V_\alpha\}_{\alpha\in A}$ is said to refine $\{V'_\beta\}_{\beta \in B}$ if each $V'_\beta$ can be expressed as sum of subspaces $\{V_\alpha\}_{\alpha \in A_\beta}$, for certain $A_\beta\subseteq A$. In that case we say also that $\{V_\alpha\}_{\alpha\in A}$ divides $\{V'_\beta\}_{\beta \in B}$, and we write $\{V_\alpha\}_{\alpha\in A} \to \{V'_\beta\}_{\beta \in B}$. With this arrows, direct sum decompositions form a category denoted $\cat{DSD}(V)$; the opposite category is the poset of quantum measurement contexts studied in \cite{Constantin2012}. The reader may verify that it is a conditional meet semilattice with terminal object $\{V\}$, but not a meet semilattice. We also introduce in this case a functor $\square$ that maps each decomposition $\{V_\alpha\}_{\alpha\in A}$ to the set $\{V_\alpha\}_{\alpha\in A}$, and each arrow of refinement to a surjection.

 \begin{defi}\label{def:q_representation_str}
A \keyt{quantum representation} of a finite information structure $(\Sinf,{\sheaf E})$ is a pair $(V, \rho)$, where $V$ is a Hilbert space and $\rho=(\rho_0,\widehat\rho): (\Sinf, {\sheaf E})\to (\cat{DSD}(V),\square)$ is a morphism of information structures  such that $\widehat \rho$ is a natural isomorphism (i.e. for each $X\in\Ob\cat{S}$, the component $\widehat \rho_X:{\sheaf E}(X)\to \square \rho(X)$ is a bijection). 
\end{defi}


The cohomological computations in Section \ref{sec:classical_info_cohomology} concern classical probabilities, but the general constructions in Section \ref{sec:topoi} only depend on the conditional meet semilattice with terminal object, hence they are equally valid in the quantum case. Using those results, one may recover the explicit cochain complex that defined quantum information cohomology in \cite{Baudot2015}.

\section{Information cohomology via derived functors}\label{sec:topoi}

In this section, we define information cohomology as a derived functor, following a remark in \cite{Baudot2015}, and then reobtain the explicit cochain complex used there. The main new ingredient is Proposition \ref{prop_bar_projective}, which proves that the relative bar resolution is a projective resolution.  We suppose that the reader is familiar with abelian categories and derived functors: we use the definitions and notations in \cite[Ch. 1 \& 2]{Weibel1994}.

\subsection{Definition}\label{sec:topos_cohomology}
Let $\Sinf$ be a conditional meet semilattice with terminal object $\Us$. We view it as a site with the trivial topology, such that every presheaf is a sheaf. For each $X\in \Ob\cat{S}$, set $\Smon_X:=\{Y\in \Ob\cat{S} \mid X\to Y\}$, with the monoid structure given by the product of in $\Sinf$: $(Z,Y)\mapsto ZY:= Z\wedge Y$.  Let $\Sring_X:= \Rr[\Smon_X]$ be the corresponding monoid algebra.  The contravariant functor $X\mapsto \Sring_X$ is a sheaf of rings; we denote it by $\Sring$. The pair $(\Sinf, \Sring)$ is a ringed site. 

The category  $\Mod(\Sring)$  is abelian  \cite[\href{https://stacks.math.columbia.edu/tag/03DA}{Lemma 03DA}]{stacks-project} and has enough injective objects \cite[\href{https://stacks.math.columbia.edu/tag/01DU}{Theorem 01DU}]{stacks-project}. For a fixed object $\sheaf O$ of $\Mod(\Sring)$, the covariant functor $\Hom(\sheaf O, -)$ is always additive and left exact: the associated right derived functors are $R^n\Hom(\sheaf O,-)=:\Ext^n (\sheaf O,-)$, for $n\geq 0$.
 
 Let $\Rr_{\Sinf}(X)$ be the $\Sring_X$-module defined by the trivial action of $\Sring_X$ on the abelian group $(\Rr,+)$ (for $s \in \Smon_X$ and $r\in \Rr$, take $s\cdot r = r$). The presheaf that associates to each $X\in \Ob\cat S$ the module $\Rr_{\Sinf}(X)$, and to each arrow the identity map is denoted  $\Rr_{\Sinf}$.
 
In Section \ref{sec:intro:cohomology}, we have defined the \keyt{information cohomology} associated to the conditional meet semilattice $\Sinf$, with coefficients in $\sheaf F\in \Mod(\Sring)$, as
\begin{equation}\label{eq:defi_info_cohomology}
H^\bullet(\Sinf, \sheaf F) := \Ext^\bullet(\Rr_{\Sinf}, \sheaf F).
\end{equation}
In all the examples contained in this article, \cite{Baudot2015} or \cite{Vigneaux2019-thesis}, the sheaf  $\sheaf F$ is obtained composing the functor $\sheaf M$ of an information structure $(\Sinf, \sheaf M)$ with other functors.

Information cohomology is formally analogous to group cohomology. In this case, one begins with a multiplicative group $G$ and constructs the free abelian group $\Zz[G]$, whose elements are finite sums $\sum m_g g$, with $g\in G$ and $m_g\in \Zz$. The product of $G$ induces a  product between two such elements, and makes $\Zz[G]$ a ring, called the integral group ring of $G$. The category of $\Zz[G]$-modules is abelian and has enough injective objects. The cohomology groups of $G$ with coefficients in a $\Zz[G]$-module $A$ are defined by
\begin{equation}
H^n(G, A) = \Ext^n(\Zz, A),
\end{equation}
where $\Zz$ is the trivial module.


 Let $\cat C$ be an abelian category with enough injectives, and suppose that we are interested in computing the groups $\{\Ext^n(A,B)\}_{n\geq 0}$  for certain fixed objects $A$ and $B$. In addition, we assume that $A$ has a \emph{projective} resolution $0 \leftarrow A \leftarrow P_0 \leftarrow P_1 \leftarrow ...$. Then, Theorem 4.6.10 in \cite{Schapira2008} implies that, for all $n\geq 0$,
\begin{equation}
(R^n\Hom_{\cat C}(A,-))(B) \simeq (R^n\Hom_{\cat C}(-,B))(A).
\end{equation} 
We denote $(R^n\Hom_{\cat C}(-,B))(A)$ by $\underline{\Ext}^n(A,B)$. They are given by the formulas
\begin{align}
\underline{\Ext}^0(A,B) &= \ker(\Hom(P_0,B) \to \Hom(P_1,B)), \label{Ext_0}\\
\underline{\Ext}^i(A,B) &= \frac{\ker(\Hom(P_i,B) \to \Hom(P_{i+1},B))}{\im(\Hom(P_{i-1},B) \to \Hom(P_{i},B))}, \quad \text{for } i\geq 1. \label{Ext_n}
\end{align}

\subsection{Nonhomogeneous bar resolution}\label{sec:bar}

In this section, we introduce a projective  resolution of the sheaf of $\Sring$-modules $\Rr_{\Sinf}$: a long exact sequence
\begin{equation}\label{eq:bar_resolution}
\begin{tikzcd}
0 
& \Rr_\Sinf \ar[l] 
& \sheaf B_0            \ar[l, "\epsilon", swap]  
& \sheaf B_1            \ar[l, "\partial_1", swap] 
& \sheaf B_2            \ar[l, "\partial_2", swap] 
& ...\ar[l, "\partial_3", swap] 
\end{tikzcd}
\end{equation}
that will allow us to determine the information cohomology, in accordance with \eqref{Ext_0} and \eqref{Ext_n}.

 For any $n\geq 0$, let $\sheaf B_n(X)$ be the tensor product over $\Rr$ of $n+1$ copies of $\sheaf A_X$, i.e. $\sheaf B_n(X) = \Sring_X^{\otimes (n+1)}$, equipped with an action of  $\sheaf A_X$ given by $$(a, b_0\otimes b_1 \otimes \cdots \otimes b_n) \mapsto ab_0 \otimes b_1 \otimes \cdots \otimes b_n.$$
Equivalently, $\sheaf B_n(X)$ is the free $\Sring_X$ module generated by the symbols $[X_1|...|X_n]:= 1  \otimes X_1 \otimes \cdots \otimes X_n$, where $\{X_1,...,X_n\} \subset \Smon_X$.  Remark that $\sheaf B_0(X)$ is the free module on one generator $[\,]$. An arrow $X\to Y$ in $\Sinf$ induces an inclusion $\Smon_Y\hookrightarrow \Smon_X$, hence an inclusion $\sheaf B_n(Y)\hookrightarrow \sheaf B_n(X)$, implying that $\sheaf B_n = \Sring^{\otimes n+1}$ is a presheaf of $\Sring$-modules for each $n\geq 0$.

We introduce now $\Sring_X$-module morphisms: an augmentation $\epsilon_X:\sheaf B_0(X)\to\Rr_{\Sinf}(X)$ given by the equation $\epsilon([\,])= 1$, and boundary morphisms  $\partial: \sheaf B_n(X) \to \sheaf B_{n-1}(X)$ given by
\begin{equation}\label{eq:boundary_op}
\partial ([X_1|...|X_n]) = X_1[X_2|...|X_n] + \sum_{k=1}^{n-1} (-1)^k [X_1|...|X_kX_{k+1}|...|X_n] +(-1)^n[X_1|...|X_{n-1}].
\end{equation}
These morphisms are natural in $X$.

\begin{prop}
The complex \eqref{eq:bar_resolution} is a resolution of the sheaf $\Rr_{\Sinf}$.
\end{prop}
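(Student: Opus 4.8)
The plan is to verify exactness objectwise and then exhibit a contracting homotopy, following the classical pattern for the (non-homogeneous) bar resolution of an augmented algebra. Recall from Section~\ref{sec:sheaves_modules} that a sequence of presheaves of $\mathcal A$-modules is exact if and only if it is exact over each $X\in\Ob S$. Hence it suffices to fix an object $X$ and prove that
\[
0 \leftarrow \Rr_{\mathcal S}(X) \xleftarrow{\epsilon} B_0(X) \xleftarrow{\partial_1} B_1(X) \xleftarrow{\partial_2} B_2(X) \leftarrow \cdots
\]
is an exact sequence of $\mathcal A_X$-modules. Here $\mathcal S_X$ is a (commutative) monoid under the product $\wedge$ —its closure under products being guaranteed by the conditional existence of products, Definition~\ref{def:info_structure}-\eqref{conditional_wedge}, since any $Y,Z\in\mathcal S_X$ are both refined by $X$— and $\mathcal A_X=\Rr[\mathcal S_X]$ is the corresponding monoid algebra; the displayed complex is then precisely the non-homogeneous bar resolution of the trivial $\mathcal A_X$-module $\Rr_{\mathcal S}(X)$.

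First I would record the relevant $\Rr$-basis. Since $\mathcal A_X$ is free over $\Rr$ on $\mathcal S_X$ and $B_n(X)$ is free over $\mathcal A_X$ on the symbols $[X_1|\cdots|X_n]$ with $X_i\in\mathcal S_X$, the module $B_n(X)$ is free over $\Rr$ with basis the elements $m_0[m_1|\cdots|m_n]$, where $(m_0,\dots,m_n)\in\mathcal S_X^{\,n+1}$. I would then define $\Rr$-linear maps (not $\mathcal A_X$-linear) $s_{-1}\colon\Rr_{\mathcal S}(X)\to B_0(X)$ by $s_{-1}(1)=[\:]$ and, for $n\ge 0$, $s_n\colon B_n(X)\to B_{n+1}(X)$ by
\[
s_n\bigl(m_0[m_1|\cdots|m_n]\bigr)=[m_0|m_1|\cdots|m_n],
\]
extended $\Rr$-linearly over the basis above. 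These are well defined precisely because each product $m_km_{k+1}$ again lies in the monoid $\mathcal S_X$.

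The core of the proof is the verification of the contracting-homotopy identities $\epsilon\,s_{-1}=\id$, $\partial_1 s_0+s_{-1}\epsilon=\id_{B_0(X)}$, and $\partial_{n+1}s_n+s_{n-1}\partial_n=\id_{B_n(X)}$ for $n\ge 1$. The first two are immediate from the definitions and the triviality of the action on $\Rr_{\mathcal S}(X)$. For the general identity I would evaluate both sides on a basis element $b=m_0[m_1|\cdots|m_n]$: applying $\partial_{n+1}$ to $s_n(b)=[m_0|\cdots|m_n]$ peels off the leading term $m_0[m_1|\cdots|m_n]=b$ together with merging terms $[m_0|\cdots|m_jm_{j+1}|\cdots|m_n]$ and a truncation term, while $s_{n-1}\partial_n(b)$ produces exactly the same merging and truncation terms with opposite signs; after a reindexing shift everything cancels in pairs and only $b$ survives. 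The existence of such a homotopy of $\Rr$-modules shows that the identity of the augmented complex is null-homotopic, so the complex is contractible, hence acyclic, and the sequence is exact over $X$. As $X$ was arbitrary, \eqref{eq:bar_resolution} is a resolution of $\Rr_{\mathcal S}$.

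The computation itself is routine telescoping; the only structural input that must be flagged is the closure of $\mathcal S_X$ under products, which is what makes $\partial$ and the $s_n$ land in the intended modules. I expect no further obstacle: crucially, the contracting homotopy need only be $\Rr$-linear rather than $\mathcal A_X$-linear, which is exactly what makes the classical bar argument go through and is all that is required for exactness in the category of abelian presheaves.
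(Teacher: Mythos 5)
Your proof is correct, but it is organized differently from the paper's. The paper does not verify the homotopy identities at all: its proof is a one-line specialization of the general relative bar construction of Appendix~\ref{sec:relative} (Mac Lane's resolvent pairs), applied to the forgetful functor $\square$ from presheaves of $\mathcal A$-modules to presheaves of $\Rr$-modules and its left adjoint $F(M)=\mathcal A\otimes_{\Rr} M$, with $C=\Rr_{\mathcal S}$. In that framework the differentials are \emph{defined} recursively by the requirement ${\partial_{n+1}}_\square s_n = 1 - s_{n-1}{\partial_n}_\square$ (Proposition~\ref{general_bar_resolution}), so the contracting-homotopy identities hold by construction, exactness is automatic, and the explicit formula for $\partial$ is afterwards deduced by recursion (citing Mac Lane). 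You run the argument in the opposite direction: you take the explicit formula for $\partial$ as given and check the identities $\partial_{n+1}s_n+s_{n-1}\partial_n=\id$ directly on the $\Rr$-basis $m_0[m_1|\cdots|m_n]$ of $B_n(X)$. The homotopy itself is the same map in both treatments: your $s_n(m_0[m_1|\cdots|m_n])=[m_0|m_1|\cdots|m_n]$ is the appendix's $s_n(r[r_1|\cdots|r_n|c])=[r|r_1|\cdots|r_n|c]$ after the simplification, noted in the paper's proof, that $\Rr_{\mathcal S}(X)$ is generated by $1$ so the terminal slot $[\,\cdots|1]$ can be suppressed. Your version is self-contained and more elementary (I checked the telescoping, including the $n=0$ augmentation step, and it closes correctly); what the paper's detour through relative homological algebra buys is naturality of the whole construction in $C$ and the apparatus of allowability and relative projectivity, which it needs anyway for Proposition~\ref{prop_bar_projective} and for identifying the resulting cohomology with $\Ext$. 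You also correctly isolate the two structural inputs: objectwise exactness of sequences of presheaves (so the homotopy need only be $\Rr$-linear and need not be natural in $X$), and closure of the monoid $\mathcal S_X$ under products, which follows from Definition~\ref{def:info_structure}-\eqref{conditional_wedge} together with the universal property of $Y\wedge Z$ and is exactly the dependence the paper flags when it says the resolution rests on the conditional existence of products.
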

\begin{proof}
The construction corresponds to the relatively projective bar resolution \cite[Ch.~IX]{MacLane1994}, more specifically to the example developed at the end of Appendix \ref{sec:relative}, setting  $\sheaf R$ and $\sheaf T$ there equal to $\sheaf S$ and $ \Rr_{\Sinf}$, respectively. The resolution $\sheaf B_\bullet$ introduced above is $B_\bullet \sheaf C$, for $ \sheaf C=\Rr_{\Sinf}$. The notation can be simplified, because $\sheaf  C(X)$ is generated  freely generated by $1$ as an $\Rr$-module. Therefore, $ B_0\sheaf C$ is generated over $\Sring_X$ by the symbol $[1]$, written simply as $[\,]$. In general, $ B_n\sheaf C(X)$ is generated over $\Sring_X$ by the symbols $[X_1|...|X_n|1]$, or simply $[X_1|...|X_n]$ if we omit the 1. 
\end{proof}

Thus far we have a resolution with relatively free objects, that in general need not be projective. However, the special properties of $\Sinf$ allow us to improve the result.

\begin{prop}\label{prop_bar_projective}
For each $n\geq 0$, the sheaf $\sheaf B_n$ is a projective object in $\Mod(\Sring)$.
\end{prop}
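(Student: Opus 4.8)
The plan is to realize each $B_n$ as the free $\mathcal A$-module on a projective presheaf of sets and then transport projectivity across the adjunction \eqref{free_adjunction}. Introduce the presheaf of sets $G_n$ given on objects by $G_n(X)=\{\,[X_1|\dots|X_n]\mid X_1,\dots,X_n\in\mathcal S_X\,\}$ and on an arrow $X\to Y$ (for which $\mathcal S_Y\subseteq\mathcal S_X$) by the inclusion $G_n(Y)\hookrightarrow G_n(X)$; tautologically $B_n=\mathcal A[G_n]$. Since $\mathcal A[-]$ is left adjoint to the forgetful functor $\square\colon\mathcal{Mod}(\mathcal A)\to\mathcal{PSh}(\mathcal S)$, and a left adjoint preserves projective objects whenever its right adjoint preserves epimorphisms, it suffices to check two things: that $\square$ preserves epimorphisms, and that $G_n$ is projective in $\mathcal{PSh}(\mathcal S)$. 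The first is immediate, because epimorphisms in both $\mathcal{Mod}(\mathcal A)$ and $\mathcal{PSh}(\mathcal S)$ are detected objectwise as surjections, and a surjection of modules is in particular a surjection of underlying sets.

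To prove that $G_n$ is projective I would show that it is a coproduct of representable presheaves. Each representable $h_Y:=\Hom_{\mathcal S}(-,Y)$ is projective: by Yoneda $\Hom_{\mathcal{PSh}(\mathcal S)}(h_Y,F)\cong F(Y)$, and since epimorphisms are objectwise surjections this functor carries epimorphisms to surjections; as coproducts of projectives are projective, it is enough to decompose $G_n$ accordingly. The key observation is that a symbol $[X_1|\dots|X_n]$ lies in $G_n(X)$ — i.e.\ $X\to X_i$ for every $i$ — if and only if the product $P_{\mathbf X}:=X_1\wedge\dots\wedge X_n$ exists and $X\to P_{\mathbf X}$. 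Indeed, applying Definition \ref{def:info_structure}-\eqref{conditional_wedge} iteratively to the cone $X\to X_1,\dots,X\to X_n$ produces the iterated product $P_{\mathbf X}$ together with a (unique) arrow $X\to P_{\mathbf X}$; conversely $X\to P_{\mathbf X}\to X_i$. Grouping the generators by the value of $P_{\mathbf X}$ therefore yields a natural isomorphism
\begin{equation}
G_n\;\cong\;\coprod_{\mathbf X} h_{P_{\mathbf X}},
\end{equation}
where $\mathbf X=(X_1,\dots,X_n)$ ranges over those $n$-tuples of objects of $\mathcal S$ whose product exists (for $n=0$ this reads $G_0\cong h_{\mathbf 1}$, the empty product being the terminal object). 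Both the displayed bijection at each $X$ and its compatibility with the restriction maps reduce to the equivalence just established, so $G_n$ is a coproduct of representables, hence projective, and $B_n=\mathcal A[G_n]$ is projective.

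I expect the only real content to be the coproduct decomposition of $G_n$, and this is exactly the step that uses the conditional existence of products. In a general poset the set of objects $X$ supporting a fixed generator $[X_1|\dots|X_n]$ is the down-set of common refinements of $X_1,\dots,X_n$, which need not be of the principal form $\{X\mid X\to P\}$ and hence need not be representable; the bar objects would then be only \emph{relatively} projective rather than projective. Axiom \eqref{conditional_wedge} guarantees that this down-set is principal with generator $P_{\mathbf X}$, which is precisely what upgrades the relatively-free resolution of the previous subsection to a genuinely projective one. The remaining verifications — naturality of the isomorphism and the elementary adjunction bookkeeping — are routine.
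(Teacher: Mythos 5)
Your proposal is correct and is essentially the paper's own argument in more categorical packaging: the paper likewise reduces via the adjunction $\mathcal A[-]\dashv\square$ to a lifting problem in $\mathcal{PSh}(\mathcal S)$, and its key step is exactly your observation that $[X_1|\dots|X_n]\in G_n(X)$ iff $X\to X_1\cdots X_n$, so that the support of each generator is the slice $\mathcal S/\Pi X_i$ with terminal object $\Pi X_i$, where the paper chooses a preimage and propagates it by functoriality --- the unwound form of your decomposition $G_n\cong\coprod_{\mathbf X} h_{P_{\mathbf X}}$ into representables. The only point worth making explicit is that your objectwise bijection (and the paper's well-definedness by functoriality) also uses Definition \ref{def:info_structure}-\eqref{poset_S}, that $\mathcal S$ is a poset, so each $h_{P_{\mathbf X}}(X)$ has at most one element and each generator is counted exactly once.
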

\begin{proof}
Let $\sheaf T$ be the presheaf of sets defined by $\sheaf T(X) = \set{[X_1|...|X_n]}{X_i \in \Smon_X}$, for $X\in \Ob S$. We have $\sheaf B_n = \Sring[\sheaf T]$ i.e. the free presheaf of $\sheaf  A$-modules generated by $\sheaf T$. Like in the case of groups or modules, there is a free-forgetful adjunction \cite[\href{https://stacks.math.columbia.edu/tag/03A8}{Lemma 03A8}]{stacks-project}
\begin{equation}
{}^{\sim}:\Hom_{\Sring} (\Sring[\sheaf T], \sheaf G) \overset\sim\rightarrow \Hom_{\cat{PSh}(\Sinf)} (\sheaf T, \sheaf G).
\end{equation}

To show that $\sheaf B_n$ is projective, one should establish the existence of an arrow $\eta:\Sring[\sheaf T] \to \sheaf C$ that makes the diagram 
$$
\begin{tikzcd}
& \Sring[\sheaf T] \ar[d, "\epsilon"] \ar[dl, " \eta", swap] \\
\sheaf C \ar[r, "\sigma", twoheadrightarrow ] & \sheaf  D
\end{tikzcd}
$$
in $\Mod(\Sring)$ commute, for any epimorphism $\sigma$ and morphism $\epsilon$. By the adjuntion, it suffices to show the existence of a morphism of presheaves $\tilde \eta : \sheaf T \to \sheaf C$ such that the diagram 
$$
\begin{tikzcd}
& \sheaf T \ar[d, "\tilde \epsilon"] \ar[dl, "\tilde \eta", swap] \\
\sheaf C \ar[r, "\sigma", twoheadrightarrow ] & \sheaf  D
\end{tikzcd}
$$
in $\cat{PSh}(\Sinf)$ commutes.

To define $\tilde\eta$, one has to determine the image of every symbol $[X_1|...|X_n]$, each time it appears in a set $\sheaf T(X)$.
Remark  that 
$$[X_1|...|X_n] \in \sheaf T(X) 
\Leftrightarrow (\forall i)( X \to X_i) 
\Leftrightarrow X\to X_1\cdots X_n = \prod_{i=1}^n X_i$$
The last equivalence is true due to the definition of  $\Sinf$. To solve the lifting problem, it is enough to pick $ m \in \sigma_{\prod_{i=1}^n X_i}^{-1}(\tilde  \epsilon ([X_1|...|X_n]))$, and define $\tilde \eta_{\prod_{i=1}^n X_i}([X_1|...|X_n]):= m$. This choice gives, by funtoriality, a well defined value $\tilde\eta_X([X_1|...|X_n])=\sheaf C\pi(m)$ over each $X$ such that $\pi:X\to \prod_{i=1}^n X_i$ in $\Sinf$.
\end{proof}

The existence of this projective resolution just depends on the definition of a conditional meet semilattice  (Definition \ref{def:info_structure}). It appears in the computation of classical and quantum information cohomology (see next section and \cite{Baudot2015}): the difference between these cases lies in the coefficients.

\begin{prop}\label{prop:chain_map} Given a conditional meet semilattice $\Sinf$ (resp. $\Sinf'$), let $\Smon$ (resp. $\Smon'$) denote the associated presheaf of monoids and $\Sring$ (resp. $\Sring'$) the presheaf of algebras induced by $\Smon$ (resp. $\Smon'$). 

For every morphism $\phi:\Sinf  \to \Sinf'$ between conditional meet semilattices such that $\phi(\Us)=\Us$, there are is a morphism of presheaf of monoids $\phi_*:\Smon \to \phi^*\Smon'$ given by $\phi_*^X:\Smon_X\to \Smon'_{\phi(X)}, \: Y\mapsto \phi(Y)$, for each $X\in \Ob \Sinf$. It can be extended linearly to $\phi_*:\Sring\to \Sring'$.

The transformation $\phi_*:\Sring\to \Sring'$ induces a map $\phi^*:\Mod(\Sring')\to \Mod(\Sring)$ as follows: given an $\Sring'$-module $\sheaf M'$,  $\Sring$ acts on $\phi^* \sheaf M'$ by the formula
\begin{equation}
\Sring_X\times \sheaf M'(\phi(X)) \to \sheaf M'(\phi(X)), \: (a,m)\to \phi(a)m.
\end{equation} 

Let $\sheaf B_n$ (resp. $\sheaf B'_n$) denote $(\sheaf A)^{\otimes n+1}$ (resp. $(\sheaf A')^{\otimes n+1}$), with an action of $\sheaf A$ (resp. $\sheaf A'$) by left multiplication on the first factor.  The maps of $\Sring$-modules $\Phi_*^n:\sheaf B_n \to \sheaf \phi^* B_n'$, whose components are
\begin{equation}
\Phi_*^n(X):\sheaf B_n(X) \to \sheaf B_n'(\phi(X)),\quad [Y_1|...|Y_n]\mapsto [\phi(Y_1)|...|\phi(Y_n)],
\end{equation}
define a morphism in $\cat{Ch}(\Mod(\Sring))$, the category of chain complexes of $\Sring$-modules. 
\end{prop}
\begin{proof}
The verifications are straightforward. They are left to the reader.
\end{proof}

\subsection{Description of cocycles}\label{sec:description_cocycles_general}\label{sec:description_cocycles}
We have built the projective resolution \eqref{eq:bar_resolution} of $\Rr_{\Sinf}$ in $\Mod(\Sring)$.
For every $\Sring$-module $\sheaf F$,  the information cohomology $H^\bullet(\Sinf, \sheaf F)$ can be computed as $\underline{\Ext}^n(\Rr_{\Sinf}, \sheaf F)$, defined in formulas \eqref{Ext_0} and \eqref{Ext_n} i.e. we deal with the cohomology of the  differential complex $(C^n(\Sinf,\sheaf F), \delta)$, where 
$$C^n(\Sinf,\sheaf F) :=\Hom_{\Sring} (\sheaf B_n, \sheaf F)$$ and $\delta$ is given by \eqref{eq:n-coboundary} bellow. A morphism $f$ in $C^n(\Sinf,\sheaf F)$  is called $n$-cochain. More explicitly, an $n$-cochain $f$ consists of a collection of morphisms  $f_X \in \Hom_{\Sring_X} (\sheaf B_n(X), \sheaf F_X)$ that satisfies the following conditions:
\begin{enumerate}
\item\label{locality} $f$ is a natural transformation (a functor of presheaves): given $\pi:X\to Y$, the diagram
$$\begin{tikzcd}
 \sheaf B_n(Y) \arrow[d, hook] \arrow[r, "f_Y"] & \sheaf F_Y \arrow[d, "\sheaf F(\pi)"] \\
 \sheaf B_n(X) \arrow[r, "f_X"] & \sheaf F_X
\end{tikzcd}$$
commutes. We refer to this property as \keyt{(joint) locality}, for reasons that become evident in the following section.
\item $f$ is compatible with the action of $\Sring$: for every $X\in \Ob \Sinf$,  the diagram
$$
\begin{tikzcd}
\Sring_X\times \sheaf B_n(X)\ar[r] \ar[d, "1\times f_X"] & \sheaf B_n(X) \ar[d, "f_X"] \\
\Sring_X\times \sheaf F_X \ar[r] & \sheaf F_X
\end{tikzcd}
$$
commutes. This means that $f_X$ is \keyt{equivariant}; in particular, $f_X(Y[Z]) = Y.f_X[Z]$ whenever $Y\in \Smon_X$.
\end{enumerate} 
Since $\sheaf B_n(X)$ is a free module, $f_X$ is determined by the values on the generators $[X_1|...|X_n]$. Just to simplify notation, we write $f_X[X_1|...|X_n]$ instead of $f_X([X_1|...|X_n])$. 

The coboundary of $f\in C^n(\Sinf,\sheaf F)$ is the $(n+1)$-cochain $\delta f = f \partial : \sheaf B^{n+1} \to \sheaf F$; \eqref{eq:n-coboundary} gives a more explicit description.
As customary, a cochain $f\in C^n(\Sinf,\sheaf F)$ is called an \keyt{$n$-cocycle}  when $\delta f=0$; the submodule of all $n$-cocycles is denoted by $Z^n(\Sinf, \sheaf F)$. The image under $\delta$ of $C^{n-1}$ is another submodule of $C^n(\Sinf,\sheaf F)$, denoted $\delta C^{n-1}(\Sinf, \sheaf F)$; its elements are called \keyt{$n$-coboundaries}. By definition, $\delta C^{-1}(\Sinf,\sheaf F) = \langle 0\rangle$, the trivial module. Since $\delta^2 =0$,  $\delta C^{n-1}$ is a submodule of $Z^n$. With this notation,  $H^n(\Sinf, \sheaf F) = Z^n(\Sinf,\sheaf F)/\delta C^{n-1}(\Sinf, \sheaf F)$, for every $n\geq 0$.

\section{Probabilistic information cohomology}\label{sec:classical_info_cohomology}

In this section, all information structures are supposed to be finite (see Section \ref{sec:general_structure}). The treatment of continuous classical random variables presents many technical complications; the particular case of  gaussian laws  is the subject of \cite[Part~IV]{Vigneaux2019-thesis}.

 We introduce probabilities as a covariant functor  on an information structure; the measurable, real-valued probabilistic functions of probabilities (\emph{probabilistic functionals}) form a presheaf $\sheaf F_\alpha$ of $\sheaf A$-modules; the action of $\sheaf A$ depends on a positive parameter $\alpha$. We compute information cohomology with coefficients in $\sheaf F_\alpha$: the corresponding $\alpha$-entropy appears as the unique $1$-cocycle on each ``connected component'' of the structure.

\subsection{Probabilities}\label{sec:probas_on_structures}

Given a finite information structure $(\Sinf,{\sheaf E})$, let  $\FProb: \Sinf\to \Sets$ be the functor that associates to each $X\in \Ob\cat S$ the set
  \begin{equation}
 \FProb(X) := \bigset{p:{\sheaf E}_X\to [0,1]}{\sum_{x\in {\sheaf E}_X}p(x) = 1},
 \end{equation}
 of probability laws for $X$, and to each arrow $\pi:X\to Y$ the \emph{marginalization} map $\FProb\pi:\FProb(X)\to \FProb(Y)$, also denoted $\pi_*$, given by
  \begin{equation}\label{eq:general_marginalizations}
\forall P\in  \FProb_X ,\:\forall y\in {\sheaf E}_Y\quad \FProb\pi(P)(y) =\sum_{x\in {\sheaf E\pi}^{-1}(y) } P(x).
 \end{equation}

  We adopt the probabilistic notation, in the following sense: given an arrow $\pi_{YX}:X\to Y$ in $\Sinf$, a law   $P\in  \FProb_X$,  and $y\in {\sheaf E}(Y)$,  the notation $P(Y=y)$ means $P({{\sheaf E}\pi_{YX}}^{-1}(y))={\pi_{YX}}_*P(y)$; similarly, if 
  $\begin{tikzcd}
  Y  & X \ar[l, "\pi_{YX}",  swap] \ar[r, "\pi_{ZX}"] & Z
  \end{tikzcd}$
  is a diagram in $\Sinf$, the notation $P(Y=y,Z=z)\equiv P(\{Y=y\}\cap\{Z=z\})$ means $P({\sheaf E\pi_{YX}}^{-1}(y)\cap {\sheaf E\pi_{ZX}}^{-1}(z))$, which equals $P({\sheaf E\langle {\pi_{YX}}, {\pi_{ZX}}\rangle}^{-1}(w(y,z)))$ for the unique $w(y,z)\in {\sheaf E}_{YX}$ sent to $(y,z)\in {\sheaf E}_Y\times {\sheaf E}_Z$ by the injection in Definition \ref{def:info_structure}-\ref{wedge_coondition}. 
  
  Given an arrow $\pi_{ZX}:X\to Z$, a law $P\in\FProb(X)$, and $z\in \sheaf E_Z$ such that $P(Z=z)>0$, the conditional law $P|_{Z=z}$ is defined by
 \begin{equation}
 P|_{Z=z}(x):=\frac{P(\{X=x	\}	\cap \{Z=z\})}{P(Z=z)} = \frac{P(x\in {\sheaf E\pi_{ZX}}^{-1}(z))}{P(Z=z)}. 
 \end{equation}
  
  Conditioning commutes with marginalizations: given arrows $\pi_{YX}:X\to Y$ and $\pi_{ZY}:Y \to Z$,
  \begin{align*}
\pi^{YX}_* (P|_{Z=z}) (y) &= \sum_{x\in {\sheaf E\pi_{YX}}^{-1}(y)} \frac{P(\{x\}\cap {\sheaf E\pi_{ZX}}^{-1}(z))}{P(Z=z)} =  \frac{\sum_{x\in {\sheaf E\pi_{YX}}^{-1}(y)} P(\{x\}\cap {\sheaf E\pi_{ZX}}^{-1}(z))}{P(Z=z)}\\
&=  \frac{P({\sheaf E\pi_{YX}}^{-1}(y)\cap {\sheaf E\pi_{YX}}^{-1}({\sheaf E\pi_{YZ}}^{-1}(z)))}{P(Z=z)} =   \frac{\pi^{YX}_*P(y\cap {\sheaf E\pi_{YZ}}^{-1}(z))}{\pi^{YX}_*P(Z=z)}\\
&= (\pi^{YX}_*P)|_{Z=z}(y).
\end{align*}

  More generally, an \emph{adapted probability functor} $\sheaf Q: \Sinf \to \Sets$ on an information structure $(\Sinf,{\sheaf E})$ is a subfunctor of $\FProb$ that is stable under conditioning: for every arrow $X\to Z$ in $\Sinf$, every law  $P\in \sheaf Q_X$, and every $z\in {\sheaf E}_Z$ such that $P(Z=z)>0$, the law $P|_{Z=z}$ belongs to $\sheaf Q_X$. 
  
  For instance, a functor $\sheaf Q$ that associates to each $X$ a (geometric) simplicial subcomplex $\sheaf Q_X$ of the probability  simplex $\FProb_X \subset \Rr^{E_X}$ is adapted (see \cite[Sec.~2.1]{Baudot2015}); the restricted marginalizations are simplicial maps. In \cite[Sec.~2.4]{Baudot2015}, it is argued that such functors model exclusion rules between possible events.

\subsection{Functional module} \label{functional_module}



Let $(\Sinf,{\sheaf E})$ be an information structure, and $\sheaf Q$ an adapted probability functor. For each $X\in \Ob{\Sinf}$, let  $\sheaf F_X = \sheaf F_X(\sheaf Q)$ be the real vector space of measurable functions on $\sheaf Q_X$; we call it  \text{functional space}. For each arrow $\pi:X\to Y$ in $\Sinf$, there is a morphism $\pi^*:\sheaf F_Y \to \sheaf F_X$ defined  by $\pi^*f (P_X) = f(\pi_* P_X).$ Therefore, $\sheaf F$ is a contravariant functor from $\Sinf$ to the category of real vector spaces.

The functional space $\sheaf F_X$ admits an action of the monoid $\Smon_X$ (parameterized by $\alpha>0$): for $Y\in  \Smon_X$,  and $f \in \sheaf F_X$, the new function $Y.f$ is given by
\begin{equation}\label{action_monoid}
\forall P \in \sheaf Q_X, \quad (Y.f)(P) = \sum_{\substack{y\in {\sheaf E}_Y\\Y_*P(y)\neq 0}} (Y_*P(y))^\alpha f(P|_{Y=y}).
\end{equation}
 By Proposition \ref{iterated_action}, there is a morphism of monoids $\Smon_X \to \End(\sheaf F_X)$, given by Equation \eqref{action_monoid}, that extends by linearity to a morphism of rings $
\Lambda_\alpha(X):\Sring_X \to \End(\sheaf F_X).
$
This means that, for each $\alpha > 0$, $\sheaf F_X$ has the structure of a $\Sring_X$-module, denoted $\sheaf F_\alpha(X)$.\footnote{As $\Sring_X$ is a $\Rr$-algebra, it comes with an inclusion $f_X:\Rr \to \Sring_X, \:r\mapsto r \Us_{\Sinf}$. The composite $\Lambda_\alpha(X) \circ f_X$ gives an action of $\Rr$ over $\sheaf F_X$, that coincides with the usual multiplication of functions by scalars.}

\begin{prop}\label{iterated_action}
Given any $X\in \Ob{\Sinf}$, $Y,Z\in \Smon_X$, and $f\in \sheaf F(\sheaf Q_X)$, the identities
 $$\Us.f = f \quad \text{and} \quad (ZY).f = Z.(Y.f)$$
 hold.
\end{prop}
\begin{proof}
Since it is obvious that $\Us.f = f$, we only prove the other.  

The universal property of products gives the commutative diagram:
$$
\begin{tikzcd}[column sep=huge, row sep=large]
	& X \ar[ld, swap, "\rho_Y"] \ar[rd, "\rho_Z"] \ar[d, "{\langle \rho_Y, \rho_Z \rangle}" near end] 
 		&  \\ 
 Y 
 	&  Y Z \ar[l, "\pi_Y"] \ar[r, swap, "\pi_Z"]
 		& Z
 	\end{tikzcd}
 	$$
Equation \eqref{action_monoid} directly implies that, for any $P\in \sheaf Q_X$,
\begin{equation}
Z.(Y.f)(P)  = \sum_{\substack{z\in {\sheaf E}_Z\\ Z_*P(z)\neq 0}}  P(Z=z)^\alpha \sum_{\substack{y \in {\sheaf E}_Y\\Y_*P|_{Z=z}(y)\neq 0}} (P|_{Z=z}(Y=y))^\alpha f((P|_{Z=z})|_{Y=y})
\end{equation}
By definition of conditional probabilities,  $$P(Z=z)P|_{Z=z}(Y=y)=P(\{Y=y\}\cap\{Z=z\}).$$ The pairs $(y,z)$ that appear in the sum are such that $P(\{Y=y\}\cap\{Z=z\}) \neq 0$, so $P(Y=y)$ and $P(Z=z)$ are different from zero; in this case, for any $B\subset X$,  $$(P|_{Z=z})|_{Y=y}(B) = \frac{P|_{Z=z}(B\cap\{Y=y\})}{P|_{Z=z}(Y=y)} = \frac{P(B\cap\{Y=y\}\cap\{Z=z\})}{P(\{Y=y\}\cap\{Z=z\})} = P|_{Y=y,Z=z}(B).$$ 

Finally, Definition \ref{def:info_structure} guarantees that the nonempty sets $$\{Y=y\}\cap\{Z=z\} = {\sheaf E\rho_Y}^{-1}(y) \cap {\sheaf E\rho_Z}^{-1}(z) \subset {\sheaf E}_X$$ are the preimage under ${\sheaf E\langle\rho_Y, \rho_Z\rangle}$ of a \emph{unique} element $w_{y,z}\in {\sheaf E}_{YZ}$; moreover, for every element $w\in {\sheaf E}(Y Z)$ we find such set.  Therefore,
$$Z.(Y.f)(P)= \sum_{\substack{w_{y,z}\in {\sheaf E}_{YZ}\\ YZ_*P(w)\neq 0}} (YZ_*P(w_{y,z}))^\alpha f(P|_{Z=z,Y=y})=(ZY).f(P).$$
\end{proof}

The next proposition shows that this action is compatible with the morphisms between functional modules. Hence, the sheaf $\sheaf F_\alpha(\sheaf Q)$ belongs to $\Mod(\Sring)$, and can be used as coefficients in information cohomology. 
\begin{prop}\label{prop:functoriality_monoidal_action}
 Given $\pi_{YX}:X\to Y$ and $\pi_{ZY}:Y \to Z$, the action of $Z$ makes the following diagram commute
$$\begin{tikzcd}
 \sheaf F(\sheaf Q_Y) \arrow[d, "\pi_{YX}^*"] \arrow[r, "Z"] & \sheaf F(\sheaf Q_Y) \arrow[d, "\pi_{YX}^*"] \\
 \sheaf F(\sheaf Q_X)  \arrow[r, "Z"] & \sheaf F(\sheaf Q_X) 
\end{tikzcd}$$
\end{prop}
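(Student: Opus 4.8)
The plan is to reduce the commutativity of the square to a pointwise identity and then to isolate the one genuinely nontrivial fact, namely that marginalization commutes with conditioning by a coarser variable. The diagram asserts that $\pi_{YX}^*(Z.f) = Z.(\pi_{YX}^*f)$ for every $f\in F(Q_Y)$; as both sides are functions on $Q_X$, it suffices to evaluate them at an arbitrary $P_X\in Q_X$. First I would observe that the statement even makes sense: from $X\to Y$ and $Y\to Z$ one obtains $X\to Z$ by composition, so $Z\in\mathcal S_X$ and the action of $Z$ on $F(Q_X)$ is defined.

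Writing $P_Y:=(\pi_{YX})_*P_X$, the left-hand side unfolds, by the definition of $\pi_{YX}^*$ together with the action \eqref{action_monoid}, to $\sum_{z\in E_Z}P_Y(Z=z)^\alpha f(P_Y|_{Z=z})$, while the right-hand side unfolds to $\sum_{z\in E_Z}P_X(Z=z)^\alpha f\big((\pi_{YX})_*(P_X|_{Z=z})\big)$. The scalar coefficients already agree term by term: functoriality of marginalization gives $(\pi_{ZY})_*\circ(\pi_{YX})_*=(\pi_{ZX})_*$, whence $P_Y(Z=z)=P_X(Z=z)$. In particular a summand vanishes on one side exactly when it vanishes on the other, so the convention that declares a summand $0$ when the conditioning probability is $0$ is respected simultaneously on both sides.

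It therefore remains to prove the key identity $(\pi_{YX})_*(P_X|_{Z=z})=P_Y|_{Z=z}$ whenever $P_X(Z=z)>0$, i.e. that marginalizing $X\to Y$ commutes with conditioning by the coarser variable $Z$. I would verify this by evaluating both laws on an atom $y\in E_Y$, using \eqref{conditional_law} and \eqref{marginalization}. Since $Z$ is coarser than both $X$ and $Y$, the event $\{X=x\}$ (resp. $\{Y=y\}$) is either contained in $\{Z=z\}$ or disjoint from it, according to whether its image in $E_Z$ equals $z$; moreover $(\pi_{ZX})_*(x)=(\pi_{ZY})_*(y)$ for every $x$ with $(\pi_{YX})_*(x)=y$. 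A short computation then shows that both evaluations reduce to $P_Y(y)/P_Y(Z=z)$ when $(\pi_{ZY})_*(y)=z$ and to $0$ otherwise, which is precisely $P_Y|_{Z=z}(y)$. Substituting this identity into the right-hand sum makes the two expressions identical term by term, establishing commutativity.

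The main obstacle is exactly this last lemma: the interchange of marginalization and conditioning. Its subtlety in the generalized setting is that the events $\{Z=z\}$ are not literal subsets of a sample space but are interpreted through preimages of the marginalization maps, so the containment/disjointness dichotomy must be run with the abstract definitions of Section \ref{sec:structures} rather than with naive set operations. What makes it work is the nesting $X\to Y\to Z$, which forces the compatibility $(\pi_{ZX})_*=(\pi_{ZY})_*\circ(\pi_{YX})_*$ on values; this is the structural input that guarantees each $Y$-atom lies entirely inside a single $Z$-atom and hence that conditioning can be pushed past the marginalization.
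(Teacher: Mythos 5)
Your proof is correct and is precisely the verification the paper leaves to the reader (its ``proof'' is the single line ``straightforward and left to the reader''): unfold both sides of $\pi_{YX}^*(Z.f)=Z.(\pi_{YX}^*f)$ at a law $P_X$, match the coefficients via $(\pi_{ZX})_* = (\pi_{ZY})_*\circ(\pi_{YX})_*$ (available because $\mathcal S$ is a poset, so the composite is the unique arrow $X\to Z$, and $Q$ is a functor), and prove the exchange identity $(\pi_{YX})_*\bigl(P_X|_{Z=z}\bigr) = \bigl((\pi_{YX})_*P_X\bigr)|_{Z=z}$. Your isolation of that last lemma as the only nontrivial step, your fiberwise verification of it using that every fiber $\pi_{YX,*}^{-1}(y)$ maps to the single value $(\pi_{ZY})_*(y)$ in $E_Z$, and your observation that the equality $P_X(Z=z)=P_Y(Z=z)$ makes the vanishing-summand convention consistent on both sides are all exactly right.
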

\begin{proof}
We must prove that, for all $f_Y\in \sheaf F(\sheaf Q_Y), P\in \sheaf Q_X$, the equality $(Z.f_Y)(\pi^{YX}_* P) = Z.(f_Y \circ \pi_*^{YX})(P)$.
On the one hand,
\begin{equation}
(Z.f_Y)(\pi^{YX}_* P) = \sum_{\substack{z\in {\sheaf E}_Z\\ \pi^{ZY}_*\pi^{YX}_* P(z)\neq 0}} \pi^{ZY}_*\pi^{YX}_* P(z) f_Y((\pi^{YX}_* P)|_{Z=z}),
\end{equation}
and on the other,
\begin{equation}
Z.(f_Y \circ \pi_*^{YX})(P) = \sum_{\substack{z\in {\sheaf E}_Z\\ \pi^{ZX}_* P(z)\neq 0}} \pi^{ZX}_* P(z) f_Y(\pi^{YX}_* (P|_{Z=z})).
\end{equation}
The two expressions coincide since marginalizations are functorial, $\pi^{ZY}_*\pi^{YX}_* = \pi^{ZX}_*$,  and commute with conditioning (cf. Section \ref{sec:probas_on_structures}).
\end{proof}

\subsection{Functoriality}\label{sec:basic_prop}

Let $\phi=(\phi_0,\widehat\phi):(\Sinf,{\sheaf E}) \to (\Sinf',{\sheaf E}')$ be a morphism between finite information structures, and let $\sheaf Q$ be a probability functor on $\Sinf$ and $\sheaf Q'$, a probability functor on $\Sinf'$ (not necessarily adapted).  Given a $X\in \Ob\cat S$ and a law $P\in \sheaf Q_X$, define a law $\phi_*^XP$ on ${\sheaf E}'_{\phi_0(X)}$ by the equation
\begin{equation}\label{def_m}
\forall x' \in {\sheaf E}_{\phi_0(X)}, \quad \phi^X_*P(x') = \sum_{x\in {\widehat\phi_X}^{-1}(x')} P(x).
\end{equation}
We call this operation \emph{external marginalization}.

\begin{lem}
Provided that  for all $X\in \Ob\cat S$ and all $P\in \sheaf Q_X$ the law $\phi^X_*P$ belongs to $\sheaf Q'_{\phi_0(X)}$,  \eqref{def_m} defines a natural transformation  $\phi_*:\sheaf Q \to \phi_0^* \sheaf Q'  $  i.e. (internal) marginalization and external marginalization commute.
\end{lem}
\begin{proof}
For every arrow $\pi:X\to Y$ and $y'\in {\sheaf E}_{\phi(Y)}$, 
\begin{align*}
(\sheaf Q'(\phi_0\pi)(\phi_*^XP))(y') &\eq{\text{def. } \sheaf Q'}  \sum_{x'\in {(\sheaf E}'\phi_0\pi)^{-1}(y')} (\phi_*^X(P))(x')\\
 & \eq{\text{def. } \phi_*^X} \sum_{x'\in ({\sheaf E}'\phi_0\pi)^{-1}(y')} \sum_{x\in {\widehat\phi_X}^{-1}(x')} P(x)\\
 &= \sum_{x\in ({\sheaf E}\phi_0\pi \circ {\widehat \phi_X})^{-1}(y')} P(x) \\
  &= \sum_{x\in ( {\widehat \phi_Y} \circ {\sheaf E}\pi )^{-1}(y')} P(x) \\
  &= \sum_{y \in {\widehat\phi_Y}^{-1}(y')} \sum_{ x\in {\sheaf E}\pi^{-1}(y)} P(x) \\
  &\eq{\text{def. } \phi_*^Y} \phi_*^Y(\sheaf Q(\pi)(P)).
\end{align*}
The fourth equality comes from the naturality of $\widehat \phi$, cf. Definition \ref{def:morph_str}.
\end{proof}

Using the previous lemma, it is easy to verify that $\phi_*$ induces a natural transformation $\phi^*:\phi_0^* \sheaf F'   \to \sheaf F$---where $\sheaf F_\alpha = \sheaf F(\sheaf Q)$ and $\sheaf F' = \sheaf F_\alpha(\sheaf Q')$---that maps $f'\in \sheaf F'_{\phi_0(X)}$ to $f'\circ \phi_*^X$. 

Some morphisms of information structures induce morphisms at the level of cohomology.

\begin{prop}\label{prop:functoriality}
Let $\phi=(\phi_0,\widehat\phi):(\Sinf,{\sheaf E}) \to (\Sinf',{\sheaf E}')$ be a morphism of information structures; let $\sheaf Q$ (resp. $\sheaf Q'$) be an adapted probability functor on $\Sinf$ (resp. $\Sinf'$).  Suppose that
\begin{enumerate}
\item for all $X\in \Ob\cat S$, the map $\widehat\phi_X$ is a bijection, and
\item for all $X\in \Ob\cat S$ and all $P\in \sheaf Q_X$, the law $\phi_*^XP$ belongs to $\sheaf Q'_{\phi(X)}$. 
\end{enumerate} 
Then, there exist a cochain map 
\begin{equation}
\Phi^*_\bullet: (C^\bullet(\Sinf',\sheaf F'_\alpha), \delta)\to (C^\bullet(\Sinf,\sheaf F_\alpha), \delta),
\end{equation}
where $\sheaf F_\alpha = \sheaf F_\alpha(\sheaf Q)$ and $\sheaf F'_\alpha = \sheaf F_\alpha(\sheaf Q')$, given by the formula
\begin{equation}\label{eq:functoriality_under_S_morphism}
\forall f'\in  C^n(\Sinf',\sheaf F'_\alpha), \quad (\Phi_n^* f')_Y[X_1|...|X_n](P) := f'_{\phi_0(Y)}[\phi_0(X_1)|...|\phi_0(X_n)](\phi_*^Y(P)).
\end{equation}
The chain map induces a morphism of graded vector spaces in cohomology
\begin{equation}\label{eq:functoriality_morphism_coh_general}
\Phi^*_\bullet: H^\bullet(\Sinf', \sheaf F_\alpha')\to H^\bullet(\Sinf, \sheaf F_\alpha).
\end{equation}
\end{prop}
\begin{proof}
From Proposition \ref{prop:chain_map}, we know that $\phi$ induces a morphism $$\{\Phi_*^n:\sheaf B_n\to \sheaf B'_n \circ \phi_0\}_{n}$$ of chain complexes of $\Sring$-modules. 

Let $f'$ be an element of $\Hom_{\Sring'}(\sheaf B_n', \sheaf F'_\alpha)$. 

First, we claim that $\Phi_n^* f'$ is a natural transformation. Given an arrow $\pi:X\to Y$ in $\Sinf$, the diagram 
$$
\begin{tikzcd}
\sheaf B_n(Y) \ar[r, "\Phi_*^n(Y)"]\ar[d, hook] & \sheaf B'_n(\phi_0(Y)) \ar[d, hook] \ar[r, "f'_{\phi_0(Y)}"] & \sheaf F'(\phi_0(Y)) \ar[d, "\sheaf F'\phi_0\pi"] \ar[r, "\phi^*_Y"] & \sheaf F(Y) \ar[d, "\sheaf F\pi"]\\
\sheaf B_n(X) \ar[r, "\Phi_*^n(X)"] & \sheaf B'_n(\phi_0(X)) \ar[r, "f'_{\phi_0(X)}"] & \sheaf F'(\phi_0(X)) \ar[r, "\phi^*_X"] & \sheaf F(X)
\end{tikzcd}
$$
in $\cat{PSh}(\Sinf)$ commutes, because $\Phi_*^n$, $f'$, and $\phi^*$ are natural transformations. Since the composition of three successive horizontal arrows give the components $(\Phi_n^* f)_X$ and $(\Phi_n^* f)_Y$, this implies the claim. 

Second,  $\Phi_n^* f'$ is a morphism of $\Sring$-modules, which means that for every $X\in \Ob \Sinf$ the diagram
$$
\begin{tikzcd}
\Sring_X \times \sheaf B_n(X) \ar[r]\ar[d, "1\times (\Phi_n^* f')_X"] & \sheaf B_n(X) \ar[d, "(\Phi_n^* f')_X"] \\
\Sring \times \sheaf F(X) \ar[r] & \sheaf F(X)
\end{tikzcd}
$$
commutes i.e. for all $X_0,...,X_n\in \Smon_X$,
\begin{equation}\label{eq:equivariance_f'}
\phi^*_X(f'_{\phi_0(X)}(\phi_0(X_0)[\phi(X_1)|...|\phi(X_n)])) = X_0.(\phi^*_Xf'_{\phi_0(X)}([\phi(X_1)|...|\phi(X_n)]))
\end{equation}
as functions on $\sheaf Q_X$. For any $P$ in $\sheaf Q_X$, one has
\begin{align}
(\phi^*_X&(f'_{\phi_0(X)}(\phi_0(X_0)[\phi(X_1)|...|\phi(X_n)])))(P)\nonumber \\ &= (f'_{\phi_0(X)}(\phi_0(X_0)[\phi(X_1)|...|\phi(X_n)]))(\phi_*^XP) \nonumber \\
&= (\phi_0(X_0).f'_{\phi_0(X)}([\phi(X_1)|...|\phi(X_n)]))(\phi_*^XP) \nonumber \\
&= \sum_{x'_0\in {\sheaf E}'_{\phi_0(X_0)}} \phi_*^XP(\phi_0(X_0)=x_0) f'_{\phi_0(X)}([\phi(X_1)|...|\phi(X_n)])((\phi_*^XP)|_{\phi_0(x_0)= x_0'}),\label{proof:sum_A-equivariance}
\end{align}
where the first equality is implied by the definition of $\phi^*$, the second by the $\Sring'$-equivariance of $f'$, and the third by the definition of the action of $\Smon_X'$ on $\sheaf F'$. Since $\widehat \phi_X$ is a bijection, each  $x_0'\in {\sheaf E}'_{\phi_0(X_0)}$ is in correspondence with an element $x_0 = x_0(x_0')$ of ${\sheaf E}_{X_0}$, and  $\phi_*^XP(\phi_0(X_0)=x_0')=P({\widehat \phi_X}^{-1}(({\sheaf E}'\phi_0\pi_{X_0X})^{-1}(x_0')))$ equals $P(X_0=x_0)=P(({\sheaf E}\pi_{X_0X})^{-1}({\widehat\phi_{X_0}}^{-1}(x_0')))$ as a simple consequence of the naturality of $\widehat \phi$, which applied to $\pi_{X_0X}:X\to X_0$ gives the commutative diagram
$$
\begin{tikzcd}
{\sheaf E}_X \ar[r, "\widehat\phi_X"]  \ar[d, "{\sheaf E}\pi_{X_0X}", swap] & {\sheaf E}_{\phi_0(X)} \ar[d, "{\sheaf E}'\phi_0\pi_{X_0X}"]\\
{\sheaf E}_{X_0} \ar[r, "\widehat\phi_{X_0}"] & {\sheaf E}_{\phi_0(X_0)}
\end{tikzcd}.
$$
So, provided that conditioning commutes with external marginalization, i.e. \begin{equation}\label{eq:commm_ext_marg_cond}
(\phi_*^XP)|_{\phi_0(x_0)= x_0'}= \phi_*^X(P|_{X_0=x_0}),
\end{equation} the sum in \eqref{proof:sum_A-equivariance} can be rewritten as
\begin{equation}
\sum_{x_0\in {\sheaf E}_{X_0}} P(X_0=x_0) f'_{\phi_0(X)}([\phi(X_1)|...|\phi(X_n)])(\phi_*^X(P|_{X_0=x_0})),
\end{equation}
which is precisely the right-hand side of \eqref{eq:equivariance_f'} evaluated at $P$. Hence $\Phi_n^* f'$ is  a morphism of $\Sring$-modules.

To establish \eqref{eq:commm_ext_marg_cond}, remark that for any $z'\in {\sheaf E}_{\phi_0(X)}$ there is a unique $z\in {\sheaf E}_X$ such that $\widehat \phi_X(z) = z'$, and
\begin{align*}
(\phi_*^XP)|_{\phi_0(X_0)=x_0'} (z') &= \frac{\phi_*^XP(z'\in ({\sheaf E}'\phi_0\pi_{X_0X})^{-1}(x_0'))}{\phi_*^XP(\phi_0(X_0)=x_0')} \\
&= \frac{P(z\in {\widehat\phi_X}^{-1}(({\sheaf E}'\phi_0\pi_{X_0X})^{-1}(x_0')))}{P(X_0=x_0)} \\
&= \frac{P(z\in ({\sheaf E}\pi_{X_0X})^{-1}({\widehat \phi_{X_0}}^{-1}(x'_0)))}{P(X_0=x_0)},
\end{align*}
where the first equality follows from the definition of conditioning, the second from the definition of external marginalization, and the third from the naturality of $\widehat \phi$. Since ${\widehat \phi_{X_0}}^{-1}(x'_0)=x_0$, the last expression equals $P|_{X_0=x_0}(z)=(\phi_*^X(P|_{X_0=x_0}))(z')$.

Finally,  the commutativity of the diagram   
$$
\begin{tikzcd}
\sheaf B_{n+1}(X) \ar[r, "(\partial_{n+1})_X"]\ar[d, "(\Phi_*^{n+1})_X"]\arrow[rr, bend left=20, "(\delta^n \Phi_n^*f')_X"]
 & \sheaf B_n(X) \ar[r, "(\Phi_n^* f')_X"] \ar[d, "(\Phi_*^{n})_X"] & \sheaf F(X) \\
\sheaf B'_{n+1}(\phi_0(X)) \ar[r, "(\partial_{n+1})_X"] \arrow[rr, bend right=20, swap, "(\delta^n f')_{\phi_0(X)}"] & \sheaf B'_n(\phi_0(X)) \ar[r, "f'_{\phi_0(X)}"] & \sheaf F' (\phi_0(X)) \ar[u, "\phi_X^*"],
\end{tikzcd}
$$
---which follows from the definitions and the naturality of $\Phi_*^n$---entails
$$(\delta^n f')_X = \phi_X^* \circ (\delta^n f')_{\phi_0(X)} \circ (\Psi_*^{n+1})_X =: (\Phi_{n+1}^*(\delta^n f))_X.$$
In other words,  $\Phi_\bullet^*$ and $\delta$ commute, thus $\Phi_\bullet^*$ is a cochain map and as such it induces a morphism at the level of cohomology.
\end{proof}

\begin{remark}
TAC's referee has suggested that the pullback functor $\phi^*:\Mod(\Sring')\to \Mod(\Sring)$ in Proposition \ref{prop:functoriality_monoidal_action} might be exact and induce a transformation $H^\bullet (\Sinf', \sheaf F')\to H^\bullet(\Sinf, \phi^*\sheaf F)$ for every $\sheaf A'$-module $\sheaf F'$. They conjecture that the composition of such transformation with another map coming from the functoriality of $H^\bullet (\Sinf,-)$ may then reproduce the morphism $\Phi_\bullet^*$ in \eqref{eq:functoriality_morphism_coh_general}, avoiding explicit computations with cocycles. 
\end{remark}

\begin{cor}\label{cor:invariance_under_iso}
If $\phi:\Sinf \to \Sinf'$ is an isomorphism of information structures, and $\sheaf Q'(\phi(X)) = \phi_*^X(\sheaf Q_X)$ for every $X\in \Ob\cat S$, then $\Phi^*: H^\bullet(\Sinf', \sheaf F_\alpha(\sheaf Q'))\to H^\bullet(\Sinf, \sheaf F_\alpha(\sheaf Q))$ is an isomorphism too.
\end{cor}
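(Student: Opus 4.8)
The plan is to exhibit $\phi^*$ as one half of a mutually inverse pair of cochain maps, using the functoriality of the construction in Proposition \ref{prop:functoriality}. Since $\phi$ is an isomorphism in $\mathcal{InfoStr}$, it has a two-sided inverse $\psi := \phi^{-1}$, which is again a morphism of information structures. First I would check that both $\phi$ and $\psi$ satisfy the two hypotheses of Proposition \ref{prop:functoriality}, so that the cochain maps $\phi^*$ and $\psi^*$ are both defined; I would then verify that $\phi^* \circ \psi^*$ and $\psi^* \circ \phi^*$ are identities at the level of cochain complexes, whence they descend to inverse isomorphisms in cohomology.

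To see that the hypotheses hold, note first that each component $\phi^\#_X$ is a bijection: being a morphism, it is surjective (Definition \ref{def:morph_str}), and from $(\psi \circ \phi)^\#_X = \psi^\#_{\phi(X)} \circ \phi^\#_X = \id_{E(X)}$ it is also injective; the same argument applied to $\phi \circ \psi = \id$ shows that the components of $\psi$ are bijections too. The probability-matching hypothesis for $\phi$ is exactly the assumption $Q'(\phi(X)) = m^\phi_X(Q_X)$. For $\psi$ I would observe that, by \eqref{def_m}, $m^\phi_X$ is the pushforward of laws along the bijection $\phi^\#_X$, so that the transport map $m^\psi_{\phi(X)}$ attached to $\psi$ (pushforward along $\psi^\#_{\phi(X)}$) is its inverse; hence $m^\psi_{\phi(X)}(Q'_{\phi(X)}) = m^\psi_{\phi(X)}(m^\phi_X(Q_X)) = Q_X = Q_{\psi(\phi(X))}$, and since every object of $\mathcal S'$ has the form $\phi(X)$, the matching hypothesis holds for $\psi$ as well. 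This is precisely the point where the \emph{equality} $Q'(\phi(X)) = m^\phi_X(Q_X)$, rather than a mere inclusion, is used.

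The core of the argument is the functoriality of $\phi \mapsto \phi^*$. Because the transport maps compose functorially along the bijections, $m^{\psi\circ\phi}_Y = m^\psi_{\phi(Y)} \circ m^\phi_Y$, a direct substitution into the defining formula \eqref{eq:functoriality_under_S_morphism} yields $(\psi\circ\phi)^* = \phi^*\circ\psi^*$, and for the identity morphism $\id^* = \id$. Applying this to $\psi\circ\phi = \id_{\mathcal S}$ and $\phi\circ\psi = \id_{\mathcal S'}$ gives $\phi^*\circ\psi^* = \id$ on $C^\bullet(F_\alpha(Q))$ and $\psi^*\circ\phi^* = \id$ on $C^\bullet(F_\alpha(Q'))$. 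Passing to cohomology, $\phi^*$ and $\psi^*$ are mutually inverse, so $\phi^*: H^\bullet(\mathcal S', F_\alpha(Q'))\to H^\bullet(\mathcal S, F_\alpha(Q))$ is an isomorphism.

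I expect the only real subtlety to be bookkeeping: confirming that the two hypotheses of Proposition \ref{prop:functoriality} genuinely transfer to the inverse $\psi$, and that the transport maps $m_\bullet$ compose as claimed. Everything else is a formal consequence of contravariant functoriality of the chain map; no new analytic input concerning the coefficients $F_\alpha(Q)$ is required.
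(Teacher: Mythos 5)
Your proof is correct and follows essentially the same route as the paper: invert $\phi$, verify that both $\phi$ and $\psi=\phi^{-1}$ satisfy the two hypotheses of Proposition \ref{prop:functoriality}, and conclude from formula \eqref{eq:functoriality_under_S_morphism} that $\phi^*\circ\psi^*$ and $\psi^*\circ\phi^*$ are identities on cochains, hence on cohomology. If anything, you are more careful than the paper, which leaves implicit the verification that the transport map $m^\psi_{\phi(X)}$ carries $Q'_{\phi(X)}$ into $Q_X$ — precisely the point where the stated \emph{equality} $Q'(\phi(X)) = m_X(Q_X)$ is needed, as you correctly note.
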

\begin{proof}
For every $X\in \Ob\cat S$, one has $\widehat {\phi^{-1}}_{\phi(X)} \circ \widehat\phi_X = \id_X$ and $  \widehat\phi_X \circ \widehat {\phi^{-1}}_{\phi(X)} = \id_{\phi(X)}$, thus $\widehat\phi_X$ and $\widehat {\phi^{-1}}_{\phi(X)}$  are bijections. Proposition \ref{prop:functoriality} ensures the existence of $\Phi^*_\bullet: H^\bullet(\Sinf', \sheaf F_\alpha(\sheaf Q'))\to H^\bullet(\Sinf, \sheaf F_\alpha(\sheaf Q))$ and $(\Phi^{-1})^*_\bullet: H^\bullet(\Sinf, \sheaf F_\alpha(\sheaf Q))\to H^\bullet(\Sinf', \sheaf F_\alpha(\sheaf Q'))$ that are inverse to each other, which can be easily verified using   \eqref{eq:functoriality_under_S_morphism}.
\end{proof}

We  recover from Proposition \ref{prop:functoriality} a functorial property for concrete information structures.

\begin{prop}\label{Hmap_contravariant}
Consider  concrete information structures $\Sinf\subset \ObsFin(\Omega)$ and  $\Sinf'\subset \ObsFin(\Omega')$. Let $\sheaf Q$ (resp. $\sheaf Q'$) be an adapted probability functor defined on $\Sinf$ (resp. $\Sinf'$). Let $\mathcal \sigma:\Omega \to \Omega'$ be a \emph{surjective}  function. Suppose that for all $X\in \Ob{\Sinf}$, there exists $X'\in \Ob\cat {S'}$ such that $\sigma$ descends to a bijection $\sigma_X:\Omega/X \overset{\sim}{\to} \Omega'/X'$, \footnote{Every partition $X$ defines an equivalence relation and $\Omega/X$ denotes the corresponding quotient.}  and that for every $P\in  \sheaf Q_X$, the marginalization ${\sigma_X}_*P $ is in $ \sheaf Q'_{X'}$.

Then the observable $X'$ is uniquely determined by $X$, the correspondence $\phi_0:X\mapsto X'$ defines a morphism of information structures $\phi:(\Sinf,\square)\to (\Sinf',\square)$---see the notation in Example \ref{ex:classical_structure}---, and there exists a morphism of graded vector spaces
\begin{equation}
\sigma^* : H^\bullet(\Sinf',\sheaf F_\alpha(\sheaf Q')) \to H^\bullet(\Sinf, \sheaf F_\alpha(\sheaf Q)),
\end{equation}
defined at the level of cochains by \eqref{eq:functoriality_under_S_morphism}, \emph{mutatis mutandis}.
\end{prop}
\begin{proof}
Since $\sigma$ is surjective, $\sigma^{-1}(A) \neq \sigma^{-1}(B)$ unless $A=B$, cf. footnote \ref{foot:surjections}. If $X', X''$ of $\Omega'$ satisfying $F:\Omega/X \overset{\sim}{\to} \Omega'/X'$ and $G:\Omega/X \overset{\sim}{\to} \Omega'/X''$, then for every $A\in X$,
$A = \sigma^{-1}(F(X)) =\sigma^{-1}(G(X))$, therefore $X'=X''$. The correspondence $X\mapsto X'$ defines a functor  $\phi_0:\Sinf \to \Sinf'$: given an arrow $\pi:X\to Y$ in $\Sinf$, there is a corresponding surjection $\square\pi:\Omega/X \to \Omega/Y$---using an obvious identification between each part and the corresponding class in the quotient---and $\sigma_Y \circ \square \pi \circ \sigma_X^{-1}: \Omega'/\phi_0(X) \to \Omega'/\phi_0(Y)$ is also a surjection, that corresponds to an arrow $\phi(X) \to \phi(Y)$ in $\Sinf'$. We take as $\widehat \phi_X:X\to \phi(X)$ the bijection of partitions induced by $\sigma_X$.  The condition on the probabilities implies that $\phi_*: \sheaf Q \to \sheaf Q'\circ \phi_0$ is a natural transformation. Proposition \ref{prop:functoriality} gives the desired result.
\end{proof}

\begin{prop}\label{Hmap_covariant}
Consider  concrete information structures $\Sinf\subset \ObsFin(\Omega)$ and  $\Sinf'\subset \ObsFin(\Omega')$. Let $\sheaf Q$ (resp. $\sheaf Q'$) be an adapted probability functor defined on $\Sinf$ (resp. $\Sinf'$). Let $\mathcal \eta:\Omega \to \Omega'$ be a  function. Suppose that for all $X'\in \Ob{\Sinf}$, there exists $X\in \Ob\cat {S}$ such that $\eta$ descends to a bijection $\eta_X:\Omega/X \overset{\sim}{\to} \Omega'/X'$,  and that for every $P'\in \sheaf Q'_{X'}$, there exists $P\in  \sheaf Q_{X}$ with $P' = {\eta_X}_* P$.

Then the observable $X$ is uniquely determined by $X'$, the correspondence $\phi_0:X'\mapsto X$ defines a morphism of information structures $\phi: (\Sinf',\square)\to (\Sinf,\square)$, and there exists a  morphism of graded vector spaces
\begin{equation}
\eta_*: H^m(\Sinf, \sheaf F_\alpha( \sheaf Q)) \to H^m(\Sinf', \sheaf F_\alpha( \sheaf Q')),
\end{equation}
defined at the level of cochains by \eqref{eq:functoriality_under_S_morphism}, \emph{mutatis mutandis}.
\end{prop}
\begin{proof}
In view of the component-wise bijection, $X$ is the partition $\set{\eta_X^{-1}(B)}{B\in X'}$. The rest of the proof is analogous to the last one. 
\end{proof}

\subsection{Determination of $H^0$} 
Each $0$-cochain $f[\,]\equiv f$ corresponds to a collection of functions $f_X(P_X) \in \sheaf F_\alpha(\sheaf Q_X)$, for each $X\in \Ob{\Sinf}$, that satisfy $f_Y(Y_* P_X) = f_X(P_X)$ for any arrow $X\to Y$ in $\Sinf$. Since  $\Us\in \Ob{\Sinf}$, this means that $f$ is constant. Given an arrow $X\to Y$, and a $0$-cochain $f$ such that $f_X(P) = K$, 
\begin{align*}
(\delta f)_X[Y](P) &= Y.f_X(P) - f_X(P) = \sum_{y\in {\sheaf E}_Y} P(Y=y)^\alpha f(P|_{Y=y}) - f(P) \\
& = K \left( \sum_{y\in {\sheaf E}_Y} P(Y=y)^\alpha -1 \right)=0.
\end{align*}
Thus $Z^0(\Sinf, \sheaf F_1(\sheaf Q)) = C^0(\Sinf,\sheaf F_1(\sheaf Q))\cong \Rr$ and $Z^0(\Sinf,\sheaf F_\alpha(\sheaf Q)) = \langle 0 \rangle$ when $\alpha \neq 1$ (as long as some $\sheaf Q_Y$ contains a nonatomic probability). Therefore, $H^0(\Sinf, \sheaf F_1(\sheaf Q)) \cong \Rr$, and $H^0(\Sinf, \sheaf F_\alpha(\sheaf Q))\cong \langle 0 \rangle$ when $\alpha\neq 1$.

\subsection{Local structure of 1-cocycles}\label{sec:entropy}


Now we turn to  $C^1(\Sinf,\sheaf F_\alpha(\sheaf Q))$. The $1$-cochains are families $\{f_X[Y] \mid X\in\Ob{\Sinf}\}$ such that for all $Z \to X \to Y$, the equality $f_X[Y](X_*P_Z) = f_Z[Y](P_Z)$ holds (this is the \emph{locality} in Section \ref{sec:description_cocycles}). This means that it is sufficient to know $f_Y[Y](Y_*P)$ to recover $f_X[Y](P)$, for any $X\to Y$; in this sense, we usually omit the subindex and just write $f[Y]$. It follows that there is a bijective correspondence between $1$-cochains $f$ and collections of measurable functions $\{ f[X]:\sheaf Q_X \to \Rr\}_{X\in \Ob \Sinf}$.

 The computations in the previous section imply that $\delta C^0(\Sinf,\sheaf F_1(\sheaf Q)) = \langle 0 \rangle$, whereas $\delta C^0(\Sinf,\sheaf F_\alpha(\sheaf Q))\cong\Rr$ when $\alpha \neq 1$: in this case $1$-coboundaries are  multiples of the  $1$-cochain defined by $ S_\alpha[X]$ in \eqref{eq:tsallis-entropy-def}. We write $
\delta C^0(\Sinf,\sheaf F_\alpha(\sheaf Q)) \cong \Rr \cdot S_\alpha.
$

We refer to elements of $Z^1(\Sinf, \sheaf F_\alpha(\sheaf Q))$ as $1$-cocycles of type $\alpha$. By equation \eqref{eq:n-coboundary} and commutativity of the product, every $1$-cocycle must satisfy the following symmetric equation
\begin{equation}\label{symmetry_cocycle}
f[XY] = f[Y] + Y.f[X] = f[X] + X.f[Y].
\end{equation}

\begin{prop}\label{lemma_certitude}
Let $f$ be a $1$-cocycle. 
\begin{enumerate}
\item \label{lemma_certitude_1} For every $X\in \Ob \Sinf$, if $|\sheaf E_X|=1$, then $f[X] \equiv 0$. In particular, $f[\Us] \equiv 0$.
\item \label{lemma_certitude_2} For every $X\in \Ob{\Sinf}$ and $x\in {\sheaf E}_X$,  the equality $f[X](\delta_{x}) = 0$ holds.
\end{enumerate}
\end{prop}
\begin{proof}
Statement \eqref{lemma_certitude_1} is a particular case of \eqref{lemma_certitude_2}; we prove the later.  From $f[XX] = f[X] + X.f[X]$, we conclude  that $X.f[X](P) = \sum_{x\in {\sheaf E}_X \mid P(x)\neq 0} P(x)^\alpha f[X](P|_{X=x}) = 0$, for any $P\in \sheaf Q_X$. Setting $P=\delta_x$, one obtains $f[X](\delta_x) = 0$.
\end{proof}

\begin{ex}\label{ex:computation_divergent_S_binary}
We compute  $H^1(\Sinf, \sheaf F_\alpha( \FProb))$, taking  $\Sinf$ equal to $\bot \to \Us$, and  ${\sheaf E}(\bot)=\{a,b\}$.  Proposition  \ref{lemma_certitude} implies that $f[\bot](1,0) = f[\bot](0,1) = 0$, as a consequence of $f[\bot] = f[\bot] + \bot. f[\bot]$.  All the other relations derived from the cocycle condition \eqref{symmetry_cocycle} become tautological. Therefore, $1$-cocycles are in correspondence with measurable functions $f$ on arguments $(p_a,p_b)$ such that $f(1,0)=f(0,1) = 0$. We conclude that $H^1(\Sinf, \sheaf F_\alpha( \sheaf P))$ has infinite dimension. For a more general condition under which $\dim H^1$ diverges, see Proposition \ref{irreducible_element_infty}.
\end{ex}

The functions $S_\alpha[X]$ introduced in \eqref{eq:shannon_ent_def} and \eqref{eq:tsallis-entropy-def} define a $1$-cochain, that is a $1$-cocycle according to the following proposition.

\begin{prop}[Chain rule]\label{conditional_entropy}
Let $(\Sinf, {\sheaf E})$ be a finite information structure, $\sheaf Q$ an adapted probability functor,  $X$ an element of $\Ob{\Sinf}$, and $ Y,Z$  elements of $\Smon_X$. Then, for all $\alpha >0$, the $1$-cochain  $S_\alpha\in \Hom_{\Sring}(\sheaf B_1, \sheaf F_\alpha(\sheaf Q))$ satisfies 
\begin{equation}
S_\alpha[YZ] = S_\alpha[Y] + Y.S_\alpha[Z].
\end{equation}
This means that $S_\alpha$ belongs to $Z^1(\Sinf, \sheaf F_\alpha(\sheaf Q))$.
\end{prop}
\begin{proof}
Let $P$ be a probability in $\sheaf Q_X$. We further simplify the notation, writing $P(y)$ instead of $P(Y=y)=Y_*P(y)$, and $P(z|y)$ in place of $P(Z=z|Y=y)$. We label the points in ${\sheaf E}(YZ)$ by their image under the injection $\iota:{\sheaf E}(YZ)\to {\sheaf E}(Y)\times {\sheaf E}(Z)$, writing  $w(y,z) \in {\sheaf E}(YZ) $.  
\begin{enumerate}
\item Case $\alpha = 1$: by definition
$$-S_1[YZ](P) = \sum_{w(y,z)\in {\sheaf E}(YZ)} P(y,z) \log  P(y,z)$$
and in fact we can extend this to a sum over the whole set ${\sheaf E}(X)\times {\sheaf E}(Y)$, setting $P(y,z)=0$ whenever $(y,z)\notin \im \iota$ (following the convention $0\log 0 = 0$). We rewrite the previous expression using the conditional probabilities
\begin{align*}
-S_1[YZ](P) 
& =  \sum_{\substack{y\in {\sheaf E}_Y\\P(y)>0}}  \sum_{z\in {\sheaf E}_Z}P(z|y) P(y) ( \log P(y) + \log  P(z|y) )\\
& = \sum_{\substack{y\in {\sheaf E}_Y\\P(y)>0}}  P(y) \log P(y) \sum_{z\in {\sheaf E}_Z} P(z|y)  + \sum_{\substack{y\in {\sheaf E}_Y\\P(y)>0}} P(y) \sum_{z\in {\sheaf E}_Z} P(z|y)\log P(z|y).
\end{align*}
This gives the result, because 
 $\sum_{z\in {\sheaf E}_Z} P(z|y)=1$, and $\sum_{z\in {\sheaf E}_Z} P(z|y)\log P(z|y) = S_1[Z](P|_{Y=y})$. Cf. \cite{Khinchin1957}.
\item Case $\alpha \neq 1$: The result is a consequence of $\delta^2=0$, but can be proved by a direct computation.
\begin{align*}
(1-\alpha)(S[Y]+Y.S[Z]) &= \left(\sum_{y\in {\sheaf E}_Y} P(y)^\alpha -1\right) + \sum_{\substack{y\in {\sheaf E}_Y\\P(y)>0}} P(y)^\alpha  \left( \sum_{z\in {\sheaf E}_Z} P(z|y)^\alpha - 1 \right) \\
&= \sum_{\substack{y\in {\sheaf E}_Y\\P(y)>0}} \sum_{z\in Z} P(z|y)^\alpha P(y)^\alpha -1 \\
&= (1-\alpha)S[XY].
\end{align*}
 The last equality comes from $P(z|y)P(y)=P(z,y)$, and the fact that we can restrict the sum to ${\sheaf E}(YZ)$, neglecting  terms that vanish. 
\end{enumerate}
\end{proof}

We shall see that any nontrivial $1$-cocycle of type $\alpha$ is locally a multiple of $S_\alpha$.  Proposition \ref{functional equations} presents the solution to a functional equation that comes from the cocycle condition. Then, Proposition \ref{H_for_nondegenerate_products} determines the local form of a cocycle. Finally, Theorem \ref{H1-non-degenerate} determine $H^1$ under appropriate nondegeneracy hypotheses on the information structure $\Sinf$ and the probability functor $\sheaf Q$.

For convenience, we introduce the functions
\begin{align}
s_1(p) &:= -p\log p - (1-p)\log(1-p);\\
s_\alpha(p) & := \frac{1}{1-\alpha} (p^\alpha + (1-p)^\alpha - 1) \qquad (\text{for }\alpha\neq 1), 
\end{align}
both defined for $p\in[0,1]$.

\begin{theorem}[Generalized FEITH]\label{functional equations}
Let $f_1,f_2:\Delta^2 \to \Rr$ be two unknown measurable functions satisfying
\begin{enumerate}
\item $f_i(0,1) = f_i(1,0) = 0$ for $i=1,2$.
\item for all $(p_0,p_1,p_2)\in \Delta^2$ such that $p_1>0$ and $p_2>0$,
\begin{align}\label{cocycle-condition-omega3}
(1-p_2)^\alpha f_1\left(\frac{p_0}{1-p_2}, \frac{p_1}{1-p_2}\right) &- f_1(1-p_1,p_1) \\
&=(1-p_1)^\alpha f_2\left(\frac{p_0}{1-p_1}, \frac{p_2}{1-p_1}\right) - f_2(1-p_2,p_2). \nonumber
\end{align}
\end{enumerate}
Then, $f_1 = f_2$ and there exists $\lambda\in \Rr$ such that $f_1(p) = \lambda s_\alpha(p)$.
\end{theorem}
\begin{proof}
The restriction to $p_0 = 0$ (with $p_1 = x, p_2=1-x$) implies $f_2(x,1-x) = f_1(1-x,x)$. This can be used to rewrite  \eqref{cocycle-condition-omega3} in terms of $u(x):=f_1(x,1-x)$. Setting $p_1 = x$, $p_2 = y$ and $p_0 = 1-x-y$, we obtain the functional equation
\begin{equation}\label{functional_entropy}
u(1-x)+(1-x)^\alpha u\left(\frac{y}{1-x}\right) = u(y) + (1-y)^\alpha u\left(\frac{1-x-y}{1-y}\right).
\end{equation}
This functional equation is related to the so-called ``fundamental equation of information theory'' (FEITH), which first appeared in  \cite{Tverberg1958}. Every measurable solution of \eqref{functional_entropy} with $\alpha = 1$ has the form $u(x) = \lambda s_1(x)$, with $\lambda\in \Rr$ \cite{Kannappan1973}.  Analogously, \cite{BennequinVigneaux2019} shows that the general solution in the case $\alpha \neq 1$ is $u(x) = \lambda s_\alpha(x)$, with $\lambda\in \Rr$; this is directly connected to a generalization of the fundamental equation introduced in \cite{Daroczy1970}.
\end{proof}

\begin{ex} Let $\Sinf$ be the poset represented by
\begin{equation}\label{basic_example_S}
\begin{tikzcd}
& \Us & \\
X_1 \ar[ur] & & X_2 \ar[ul] \\
& X_1X_2 \ar[ul] \ar[ur]& 
\end{tikzcd}
\end{equation}
and ${\sheaf E}$ be the functor defined at the level of objects by ${\sheaf E}(X_1) =\{{\{1\}}, {\{0,2\}}\}$, ${\sheaf E}(X_2)=\{{\{2\}}, {\{0,1\}}\}$, and ${\sheaf E}(X_1X_2) =\{{\{0\}},{\{1\}},{\{2\}}\}$; for each arrow $\pi:X\to Y$, the map $\pi_*:{\sheaf E}(X)\to {\sheaf E}(Y)$ sends $I\to J$ iff $I\subset J$. The pair $(\Sinf, {\sheaf E})$ is an information structure (it comes from a concrete one). Consider $f\in Z^1(\sheaf F_\alpha(\FProb))$: the $1$-cocyle condition means that, as functions on $\FProb(X_1X_2)$,
\begin{equation}\label{eq:1-cocycle_little_ex}
f[X_1X_2] = X_1.f[X_2]+f[X_1] \quad \text{and} \quad  
f[X_1X_2] = X_2.f[X_1]+f[X_2]. 
\end{equation}
We write $f_1$, $f_2$ and $f_{12}$ instead of $f[X_1] $, $f[X_2]$ and $f[X_1X_2]$, respectively. The determination of $f_1$ and $f_2$ such that $X_1.f_2+f_1 = X_2.f_1+f_2$ fix $f_{12}$ completely. In terms of a probability $(p_0,p_1,p_2)$ in $\FProb(X_1X_2)$ this equation is exactly \eqref{cocycle-condition-omega3}. We conclude that every cocycle is a multiple of the corresponding  $\alpha$-entropy: there exists a unique constant $\lambda\in \Rr$ such that $f[Z](P) = \lambda S_\alpha[Z](P)$, for every observable $Z\in \Ob\cat S$ and every probability law $P\in \FProb(X_1X_2)$. This establishes  that $Z^1(\Sinf, \sheaf F_\alpha( \FProb)) \cong \Rr$. Hence $H^1(\Sinf, \sheaf F_1( \FProb))\cong \Rr$, and $H^1(\Sinf, \sheaf F_\alpha( \FProb))\cong\langle 0 \rangle$. The hypotheses are minimal: on the one hand, if we remove $X_1$ or $X_2$, Proposition \ref{irreducible_element_infty} shows that $\dim H^1 = \infty$; on the other, if $ \sheaf Q_{X_1X_2}$ does not contain the interior of $\Delta^2$, the equations \eqref{eq:1-cocycle_little_ex} translate into a ``degenerate'' system of functional equations
\begin{gather*}
f_{12}(p_0,1-p_0,0) = f_1(1-p_0, p_0), \\
f_{12}(p_0,0,1-p_0) = f_2(1-p_0, p_0), \\
f_{12}(0,p_1,1-p_1) = f_1(p_1, 1-p_1) = f_2(1-p_1,p_1),
\end{gather*}
thus one function remains arbitrary. 
\end{ex}

We want to generalize the previous computations to any product $XY$. It is already clear that ${\sheaf E}_{XY}={\sheaf E}_X\times {\sheaf E}_Y$ is not necessary; at the same time, if one is ``too far'' from ${\sheaf E}_{XY}={\sheaf E}_X\times {\sheaf E}_Y$ and $\sheaf Q = \FProb$, then the system of functional equations implied by the $1$-cocycle condition degenerates. A precise sufficient condition is introduced in the following definition of \emph{nondegenerate product}, which is better understood looking at the proof of Proposition \ref{H_for_nondegenerate_products}. To determine the function $f[XY]$, for given observables $X$ and $Y$, one first obtains the recursive formulas \eqref{recurrence_X} and \eqref{recurrence_Y} for the functions $f[X]$ and $f[Y]$: given a total order of the sets ${\sheaf E}_X$ and ${\sheaf E}_Y$, the different steps of the recursion are coded by a path in $\Zz^2$. Both formulas are a simplification of the symmetric equation \eqref{eq_XY} for particular laws $\tilde P$ given by Definition \ref{def:non-degenerate-probabilistic}-\ref{probabilistic-ng:lifting} that make one of the terms trivial. The recursive formulas involve a term where $f[X]$ and $f[Y]$ have only two nonzero arguments, and both are related  by the FEITH-like functional equation  \eqref{cocycle-condition-omega3} in Proposition \ref{functional equations};  that this equation holds for any $(p_0,p_1,p_2)\in \Delta^2$ such that $p_1>0$ and $p_2>0$ is ensured by Definition \ref{def:non-degenerate-probabilistic}-\ref{probabilistic-ng:2-dim-cell-nondegenerate}.

\begin{defi}\label{def:non-degenerate-probabilistic}
Let $X$ and $Y$ be two objects of $\Sinf$, such that $|{\sheaf E}_X|=k$ and $|{\sheaf E}_Y|=l$. Let $\iota$ be the inclusion ${\sheaf E}_{XY}\hookrightarrow {\sheaf E}_X\times {\sheaf E}_Y$. We call the product $XY$ \emph{nondegenerate} if $k,l\geq 2$ and there exist enumerations  $\{x_1,...,x_k\}$ of ${\sheaf E}_X$ and  $\{y_1,...,y_l\}$ of ${\sheaf E}_Y$, together with a North-East (NE) lattice path\footnote{A North-East (NE) lattice path  on $\Zz^2$ is a sequence of points  $(\gamma_i)_{i=1}^m\subset \Zz^2$ such that $ \gamma_{i+1} -\gamma_i\in \{(1,0),(0,1)\}$ for every $i\in \{1,...,m-1\}.$}   $(\gamma_i)_{i=1}^m$  on $\Zz^2$  going from $(1,1)$ to $(k,l)$, such that:
\begin{enumerate}  
\item\label{probabilistic-ng:lifting} If $\gamma_i=(a,b)$ and $\gamma_{i+1} -\gamma_i = (1,0)$, then for every  $P\in \sheaf Q_X$ such that $\supp P \subset \set{x_j}{ a \leq j \leq k} $, there exists  $\tilde P \in \sheaf Q_{XY}$ whose support is contained in
$$
\iota^{-1}(\{(x_a,y_{b+1})\}\cup \set{(x_j,y_b)}{a+1 \leq j \leq k})$$ or in $$\iota^{-1}(\{(x_a,y_{b})\}\cup \set{(x_j,y_{b+1})}{a+1 \leq j \leq k})$$ 
and such that $P= X_*\tilde P$. (Remark that, for such values of $XY$, the value of the $X$-projection completely determine the $Y$-projection.)

Analogously, if $\gamma_{i+1} -\gamma_i = (0,1)$, then for every $P\in \sheaf Q_Y$ such that  $\supp P \subset\set{y_j}{b\leq j \leq l}$, there exists a counting function $\tilde P \in \sheaf Q_{XY}$ whose support is contained in
$$
\iota^{-1}(\{(x_{a+1},y_{b})\}\cup \set{(x_a,y_j)}{b+1 \leq j \leq l})$$ or in $$\iota^{-1}(\{(x_a,y_{b})\}\cup \set{(x_{a+1},y_j)}{b+1 \leq j \leq k})$$
and such that $P = Y_*\tilde P$. 

\item\label{probabilistic-ng:2-dim-cell-nondegenerate} For each $\gamma_i=(a,b)$ such that $a<k$ and $b<l$, there are elements $z_1,z_2,z_3$ in $$\iota^{-1}\set{(x_m,y_n)}{ a\leq m \leq {a+1} \text{ and } b \leq n \leq {b+1}} $$
such that $[\delta_{z_1},\delta_{z_2},\delta_{z_3}]$ (the convex hull of the corresponding Dirac measures) is a subset of $\sheaf Q_{XY}$ (which in turn is a subset of the simplex $\sheaf P(X)$ with vertices $\{\delta_{z}\}_{z\in \sheaf E_{XY}}$). 
\end{enumerate}
 \end{defi}
 
 In particular, the first condition implies (when $a=1$ or $b=1$) that $\sheaf Q_{XY} \to \sheaf Q_X$ is surjective, and similarly for $Y$. The product of a observable with itself is always degenerate.

\begin{prop}\label{H_for_nondegenerate_products}
Let $(\Sinf,{\sheaf E})$ be a finite information structure, $\sheaf Q$ an adapted probability functor, and $X$, $Y$ two different observables in $\Ob{\Sinf}$ such that $XY\in \Ob{\Sinf}$. Let $f$ be a $1$-cocycle of type $\alpha$, i.e. an element of $Z^1(\Sinf, \sheaf F_\alpha(\sheaf Q))$. If $XY$ is nondegenerate, there exists $\lambda\in \Rr$ such that
$$f[X]=\lambda S_\alpha[X], \quad f[Y]=\lambda S_\alpha[Y], \quad f[XY]=\lambda S_\alpha[XY].$$
\end{prop}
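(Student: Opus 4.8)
The plan is to localize the cocycle identity to the elementary blocks of the array $A_{XY}$, where it collapses to the functional equation solved in Proposition \ref{functional equations}, then to glue the local constants using the overlaps of consecutive blocks, and finally to propagate the conclusion to the whole of $Q_{XY}$ along the chain running from $a_{11}$ to $a_{kl}$. Throughout I would work with the symmetric form of the cocycle condition, $f[X]+X.f[Y]=f[Y]+Y.f[X]$ coming from \eqref{symmetry_cocycle}, and, for the last step, with the difference $g:=f-\lambda S_\alpha$, which is again a $1$-cocycle because $S_\alpha$ is one by Proposition \ref{conditional_entropy}.

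First, fix one elementary block and, after relabelling its two rows and two columns, write its three free cells as $b_1,b_2,b_3$, with $b_1,b_2$ in one row (a value $x$ of $X$) and $b_1,b_3$ in one column (a value $y$ of $Y$). Restricting to laws $P\in Q_{XY}$ supported on $\{b_1,b_2,b_3\}$, the defining condition \eqref{completeness} guarantees that $(P(b_1),P(b_2),P(b_3))$ sweeps out the whole simplex $\Delta^2$. For such $P$, joint locality makes $f[X]$ depend only on the binary marginal on the two rows and $f[Y]$ only on the binary marginal on the two columns; expanding $X.f[Y]$ and $Y.f[X]$ through \eqref{action_monoid} and discarding, by the certitude property (Proposition \ref{lemma_certitude}), the conditional terms that reduce to Dirac masses on the single-cell row and column, the identity $f[X]+X.f[Y]=f[Y]+Y.f[X]$ becomes exactly \eqref{cocycle-condition-omega3}, with the unknowns $f_1,f_2$ identified with the binary restrictions of $f[Y]$ and of $f[X]$. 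Proposition \ref{functional equations} then produces a constant $\lambda_B$ with $f[X]=\lambda_B s_\alpha$ and $f[Y]=\lambda_B s_\alpha$ on the two binary marginals attached to the block.

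Next I would glue these constants. Two consecutive blocks $B_t,B_{t+1}$ share exactly two cells, that is, a common row-pair or a common column-pair; on that shared pair the corresponding binary marginal of $f[X]$ (resp. of $f[Y]$) is computed by both blocks, so $\lambda_t s_\alpha=\lambda_{t+1}s_\alpha$, and since $s_\alpha\not\equiv 0$ this forces $\lambda_t=\lambda_{t+1}$. Because the blocks chain from $a_{11}$ to $a_{kl}$, all the local constants coincide with a single $\lambda$.

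Finally, to upgrade from the binary marginals to the full sections, I would pass to $g=f-\lambda S_\alpha$: it is a cocycle whose binary marginals on every block vanish and which vanishes on all Dirac masses. Using that the chain, starting at the top-left corner and ending at the bottom-right one while sharing strips, necessarily meets every row and every column, I would propagate the vanishing of $g[X]$ and $g[Y]$ from the low-dimensional faces to all of $X_*Q_{XY}$ and $Y_*Q_{XY}$, feeding each grouping of values into the branching relation $g[XY]=g[X]+X.g[Y]=g[Y]+Y.g[X]$ and reducing at each step to a fresh instance of Proposition \ref{functional equations}. Once $f[X]=\lambda S_\alpha[X]$ and $f[Y]=\lambda S_\alpha[Y]$, the cocycle relation and the chain rule of Proposition \ref{conditional_entropy} give $f[XY]=\lambda\big(S_\alpha[X]+X.S_\alpha[Y]\big)=\lambda S_\alpha[XY]$. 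I expect the main obstacle to be exactly this last propagation: the elementary-block argument only pins $f$ down on the edges of the marginal simplices, and showing that the branching relation imposed by the non-degenerate chain determines $f$ on the full interior of $Q_{XY}$ — rather than on those faces alone — is where the non-degeneracy hypothesis must be used to the hilt and where the bookkeeping is most delicate.
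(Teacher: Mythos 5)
Your first two steps coincide with the paper's: restricting to laws supported on the three free cells of an elementary block, the symmetric identity \eqref{eq_XY} collapses (after the Dirac-conditioned terms are killed by Proposition \ref{lemma_certitude}) to \eqref{cocycle-condition-omega3}, Proposition \ref{functional equations} yields a block constant, and the two shared cells of consecutive blocks force all these constants to agree. Up to that point your bookkeeping is sound and matches the paper.

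The genuine gap is exactly where you flag it, and your proposed substitute would not close it: Proposition \ref{functional equations} is a statement about measurable functions on $\Delta^2$ with prescribed boundary values, so no number of ``fresh instances'' of it can determine $f[X]$ on laws charging three or more atoms of $E_X$ — it only ever pins down the binary restrictions $\phi_r,\psi_c$. The paper's device for the extension is not another functional equation but an algebraic recurrence extracted from \emph{degenerate conditioning}: for the law $P_\mu$ supported on the cells $a_{r,c+1},a_{r+1,c},a_{r+2,c},\dots,a_{k,c}$ (one column below the current block together with one adjacent cell), knowledge of $X$ determines $Y$ with certainty, so $X.f[Y](P_\mu)=0$ and \eqref{proof_rec_2} gives $f[XY](P_\mu)=f[X](P_\mu)$; feeding this into \eqref{proof_rec_1} produces the recurrence \eqref{recurrence_X}, which peels off one mass at a time and writes $f[X]$ on an arbitrary law as a telescoping sum of the binary functions already identified with $\lambda s_\alpha$; the explicit summation (done separately for $\alpha=1$ and $\alpha\neq 1$) then yields $f[X]=\lambda S_\alpha[X]$ on all of $Q_X$, and symmetrically for $Y$. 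Two further points your sketch omits: the law $P_\mu$ need not belong to $Q_{XY}$, and the paper covers this with the alternative family $P_{\mu'}$, one of the two being guaranteed by non-degeneracy; and your appeal to the chain ``meeting every row and every column'' is neither established nor needed, since the recurrence itself runs through the rows. Your closing step, $f[XY]=f[X]+X.f[Y]=\lambda S_\alpha[XY]$ via Proposition \ref{conditional_entropy}, is correct once the premise is secured (and passing to $g=f-\lambda S_\alpha$ is harmless but buys nothing — the recurrence applies to $f$ directly).
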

\begin{proof}
As $f$ is a $1$-cocycle, it satisfies the two equations derived from \eqref{eq:n-coboundary} 
\begin{align}
Y.f[X] & =  f[XY] - f[Y] \label{proof_rec_1} \\
X.f[Y] & = f[XY] - f[X] \label{proof_rec_2}
\end{align}
and therefore the symmetric equation
\begin{equation}\label{eq_XY}
X.f[Y]-f[Y]=Y.f[X]-f[X].
\end{equation}
For a law $P$, we write  $$\left(\begin{array}{cccc}
s & t & u & \ldots \\
p & q & r & \ldots
\end{array}\right)$$ if $P(s) = p$, $P(t) = q$, $P(u) = r$, etc. and the probabilities of the unwritten parts are zero.

Fix enumerations $(x_1,...,x_k)$ and $(y_1,...,y_l)$ that satisfy the definition of nondegenerate product, and let $\{\gamma_{i}\}_{i=1}^m$ be the corresponding NE path. Write $\gamma_i=(a,b)$. If $\gamma_{i+1}-\gamma_i = (1,0)$, we shall show that the following recursive formula holds:

\begin{multline}\label{recurrence_X}
f[X]\left(\begin{array}{ccc}
x_{a} & \ldots & x_k \\
\mu_{a} & \ldots & \mu_k
\end{array}\right)=  (1-\mu_a)^\alpha f[X]\left(\begin{array}{ccc}
x_{a+1} & \ldots & x_k \\
\mu_{a+1}/(1-\mu_a)& \ldots & \mu_k/(1-\mu_a)
\end{array}\right) \\+ f[X]\left(\begin{array}{cc}
x_{a} & x_{a+1} \\
\mu_{a} & 1-\mu_a
\end{array}\right).
\end{multline}
Analogously, if $\gamma_{i+1}-\gamma_i = (0,1)$,
\begin{multline}\label{recurrence_Y}
f[Y]\left(\begin{array}{ccc}
y_{b} & \ldots & y_l \\
\nu_b & \ldots & \nu_l
\end{array}\right)= (1-\nu_b)^\alpha f[Y]\left(\begin{array}{ccc}
y_{b+1} & \ldots & y_l \\
\nu_{b+1}/(1-\nu_c)& \ldots & \nu_{l}/(1-\nu_b)
\end{array}\right) \\+ f[Y]\left(\begin{array}{cc}
y_b & y_{b+1} \\
\nu_b & 1-\nu_b
\end{array}\right).
\end{multline}

Suppose that $\gamma_{i+1}-\gamma_i = (1,0)$.  Let $$p=\left(\begin{array}{cccc}
x_a & \ldots & x_k  \\
\mu_a & \ldots & \mu_k 
\end{array}\right)$$ be a probability in $\sheaf Q_X$ supported on $\set{x_i}{a\leq i \leq k}$. We know it has a preimage $\tilde p$ under marginalization $X_*$ as in Definition \ref{def:non-degenerate-probabilistic}-\eqref{probabilistic-ng:lifting}. For such law,   knowledge of $X$ implies knowledge of $Y$ with certainty, therefore $X.f[Y](\tilde p)=0$; by equation \eqref{proof_rec_2}, $f[XY](\tilde p) = f[X](X_*\tilde p)=f[X](p)$. Equation \eqref{proof_rec_1} then reads 
\begin{multline}\label{auxiliary1_proof_recurrence_prob}
(1-\mu_a)^\alpha f[X]\left(\begin{array}{ccc}
x_{a+1} & \ldots & x_k \\
\mu_{a+1}/(1-\mu_a)& \ldots & \mu_k/(1-\mu_a)
\end{array}\right) =\\ 
f[X]\left(\begin{array}{ccc}
x_{a} & \ldots & x_k \\
\mu_{a} & \ldots & \mu_k
\end{array}\right)
-f[Y]\circ \tau\left(\begin{array}{cc}
y_b & y_{b+1} \\
1-\mu_a &  \mu_a
\end{array}\right),
\end{multline}
where $\tau$ is the identity or the transposition of the arguments. In any case, we can set  $\mu_{a+1} = 1-\mu_a$ and $\mu_{a+2}=\ldots = \mu_k = 0$ (because this probability appears as the image of some of the probabilities on $XY$ given by Definition \ref{def:non-degenerate-probabilistic}-\eqref{probabilistic-ng:2-dim-cell-nondegenerate}), to conclude that
\begin{equation}
f[X]\left(\begin{array}{cc}
x_{a} & x_{a+1} \\
\mu_{a} & 1-\mu_a
\end{array}\right)=f[Y]\circ\tau\left(\begin{array}{cc}
y_b & y_{b+1} \\
1-\mu_a &  \mu_a
\end{array}\right),
\end{equation}
which combined with \eqref{auxiliary1_proof_recurrence_prob} implies \eqref{recurrence_X}. The identity \eqref{recurrence_Y} can be obtained analogously.

We proceed to the determination of 
$$\phi_a(z):= f[X]\left(\begin{array}{cc}
x_{a} & x_{a+1} \\
z & 1-z
\end{array}\right) \quad \text{and} \quad \psi_b(z):= f[Y]\left(\begin{array}{cc}
y_b & y_{b+1} \\
z & 1-z
\end{array}\right), \quad \text{for } z\in[0,1].$$
Let $z_1,z_2,z_3$ be the three elements of ${\sheaf E}_{XY}\subset {\sheaf E}_X\times {\sheaf E}_Y$ such that $[\delta_{z_1},\delta_{z_2},\delta_{z_3}]$ is the subset of $\sheaf Q_{XY}$ given by Definition \ref{def:non-degenerate-probabilistic}-\eqref{probabilistic-ng:2-dim-cell-nondegenerate} when $\gamma_i = (a,b)$. For any $\mu=(\mu_0,\mu_X,\mu_Y)\in \Delta^2$, set $P(z_i):=\mu_Y$, where $z_i$ is the component that differs from the others w.r.t. the $Y$-projection, in such a way that $Y_*\mu$ is $(\mu_Y,1-\mu_Y)$. Similarly, set $P(z_j):=\mu_X$, where $z_j$ is the component that differs from the others w.r.t. the $X$-projection. With this assignment,  \eqref{eq_XY} reads
\begin{multline}
(1-\mu_X)^\alpha f[Y]\circ \sigma\left(\begin{array}{cc}
y_b & y_{b+1} \\
\mu_0/(1-\mu_X) & \mu_Y/(1-\mu_X)
\end{array}\right) - f[Y]\circ \sigma\left(\begin{array}{cc}
y_b & y_{b+1} \\
1-\mu_Y & \mu_Y
\end{array}\right) \\= (1-\mu_Y)^\alpha f[X]\circ\tau\left(\begin{array}{cc}
x_a & x_{a+1} \\
\mu_0/(1-\mu_Y) & \mu_Y/(1-\mu_Y)
\end{array}\right)-f[X]\circ \tau \left(\begin{array}{cc}
x_a & x_{a+1} \\
1-\mu_X & \mu_X
\end{array}\right),
\end{multline}
\normalsize
where $\sigma$, $\tau$ are the identity or the transposition of both nontrivial arguments. In any case, this leads to the  functional equation in Proposition \ref{functional equations}, hence $\phi_a(z) = \psi_b(z) = \lambda s_\alpha(z)$ for certain $\lambda\in \Rr$ (the solution is symmetric in the arguments).

When considering $\gamma_{i+1}$, one finds the functions $\phi_a$ and $\psi_{b+1}$, or the functions $\phi_{a+1}$ and $\psi_b$, since the difference $\gamma_{i+1}-\gamma_i$ is either   $(0,1)$ or $(1,0)$. This ensures that the constant $\lambda$ that appears for each $\gamma_i$ is always the same.

Repeat the process above with every $\gamma_i$ ($1\leq i \leq m$). The system of equations \eqref{recurrence_X} obtained in this way, together with the functions already determined $$f[X]\left(\begin{array}{cc}
x_{a} & x_{a+1} \\
z & 1-z
\end{array}\right) = \lambda s_{\alpha}(z,1-z)\quad \text{ for }1\leq a\leq k-1,$$ entails that
\begin{equation}
f[X](\mu_1,\ldots, \mu_k)= \lambda  \sum_{i=0}^{k-1} \left(1-\sum_{j=1}^i \mu_j\right)^\alpha s_\alpha \left(\frac{\mu_{i+1}}{ \left(1-\sum_{j=1}^i \mu_j\right)}\right).
\end{equation}
Set $T_i := 1- \sum_{j=1}^i \mu_j$. An elementary computation shows that, when $\alpha = 1$,
\begin{equation}
\sum_{i=0}^{k-1}  \left(1-T_i\right)  s_1\left(\frac{\mu_{i+1}}{ \left(1-T_i\right)}\right) = \sum_{i=1}^{k} \mu_i \log \mu_i,
\end{equation}
\normalsize
and when $\alpha \neq 1$,
\begin{equation}
\sum_{i=0}^{k-1} \left(1-T_i\right)^\alpha  s_\alpha \left(\frac{\mu_{i+1}}{ \left(1-T_i\right)}\right) 
=\sum_{i=1}^k \mu_i^\alpha - 1.
\end{equation}
Therefore, for any $\alpha > 0$, we have $f[X] = \lambda S_\alpha[X]$. Analogously, $f[Y]=\lambda S_\alpha[Y]$.
\end{proof}

\subsection{Determination of $H^1$}\label{sec:determination_H1}

In this section, we prove Theorem \ref{H1-non-degenerate}, that determines completely $H^1$ in the probabilistic case under suitable hypotheses. We also discuss some pathological cases.

We call a observable $Z$ \keyt{reducible} if there exist $X,Y\in \Ob\cat S \sm \{1,Z\}$ such that $Z=XY$, and \keyt{irreducible} otherwise.  It is \keyt{nontrivialy reducible} if the product $XY$ is nondegenerate.  


\begin{proof}[of Theorem \ref{H1-non-degenerate}]
Let $f$ be an element of $Z^1(\Sinf,\sheaf F_\alpha(\sheaf Q))$. We are going to determine $f_X[Y]$ for every $X\in \Ob \Sinf$ and $Y\in \Smon_X$. 

If $X$ satisfies $|\sheaf E_X|=1$, then $|\sheaf E_Y|=1$ for any $Y\in \Smon_X$. We conclude that $f_X[Y] = 0$ in virtue of Proposition \ref{lemma_certitude}. 

If $X$ satisfies $|\sheaf E_X| > 1$, then it belongs to $\Sinf^*$. Let $\cat C(X)$ denote the connected component of $\Sinf^*$ that contains $X$. By hypothesis, there exists a nontrivially reducible object $Z$ such that $Z\to X\to Y$. Then,
\begin{equation}\label{eq:fY_from_fZ}
f[Y](Z_*P)= f_Z[Y](P) = f_Z[Z](P) - Y.f_Z[Z](P) = \begin{cases} 0 & \text{if } |\sheaf E_Y|=1 \\ 
\lambda_Z S_\alpha[Y](Z_*P) &  \text{if } |\sheaf E_Y|>1 
\end{cases},
\end{equation}
because $f_Z[Z]$ equals $\lambda_Z S_\alpha[Z]$, for some $\lambda_Z\in \Rr$, according to Proposition \ref{H_for_nondegenerate_products}. Since $Z_*$ is supposed to be surjective, this fully determines $f[Y]$. By naturality, $f_X[Y]$ is the map $P\mapsto f[Y](X_*P)$. Remark that $Y$ also belongs to $\cat C(X)$; if $Z'$ is another nontrivially reducible observable such that $Z'\to Y$ then $Z\in \cat C(X)$ too. Repeating the previous argument for $Z'$, we conclude that $f[Y]=\lambda_Z S_\alpha[Y] = \lambda_{Z'}S_\alpha[Y]$, therefore $\lambda_Z = \lambda_{Z'} =: \lambda_{\cat C(X)}$. 

More generally, if $Z$, $Z'$ are any two nontrivially reducible objects in the same component $\cat C$ of $\Sinf^*$, then there is a zig-zag diagram in $\Sinf^*$ of the form 
$$Z \rightarrow Y_1 \leftarrow X_1 \rightarrow Y_2 \leftarrow X_2 \rightarrow\cdots \leftarrow X_k \rightarrow Y_{k+1} \leftarrow Z'$$
for certain $k\in \Nn$, where $Y_i, X_i\in \Ob \Sinf^*$. By hypothesis, each $X_i$ is refined by a nontrivially reducible object $Z_i$, in such a way that we also have a diagram 
$$Z \rightarrow Y_1 \leftarrow Z_1 \rightarrow Y_2 \leftarrow Z_2 \rightarrow\cdots \leftarrow Z_k \rightarrow Y_{k+1} \leftarrow Z'$$  
in $\Sinf^*$.   The repeated application of the argument in the previous paragraph implies that $\lambda_Z = \lambda_{Z_1} = \cdots = \lambda_{Z'} =: \lambda_{\cat C}$. 

Summarizing, $f$ is such that $f_X[Y]=0$ if $|\sheaf E_Y|=1$ and $f_X[Y](P) = \lambda_{\cat C} S_\alpha[Y](X_*P)$ if $Y$ belongs to a connected component $\cat C$ of $\Sinf^*$ (hence $X$ too). Therefore, there is a linear bijective correspondence between cocycles $f$ and sequences of $(\lambda_{\cat C})_{\cat C} \in \Rr^{\pi_0(\Sinf^*)}$. (The set $\Hom_{\Sring}(\sheaf B_1, \sheaf F_\alpha)$ has a vector space structure given by ``object-wise'' addition: for every $\phi, \psi\in \Hom_{\Sring}(\sheaf B_1, \sheaf F_\alpha)$, $\alpha,\beta\in \Rr$, $X\in \Ob \Sinf$, and $Y\in \Smon_X$, one has  $(\alpha\phi+\beta\psi)_X[Y] =\alpha \phi_X[Y] + \beta \psi_X[Y]$.)

We saw in Section \ref{sec:entropy} that $\delta C^0(\Sinf, \sheaf F_1(\sheaf Q)) \cong \langle 0\rangle$ and $\delta C^0(\Sinf,\sheaf F_\alpha(\sheaf Q)) =\set{\lambda  S_\alpha}{\lambda\in \Rr}$. 
\end{proof}
%

As a byproduct of the previous  proof, we also obtain the following proposition.
\begin{prop}
Let $\{(\Sinf_i, {\sheaf E}_i, \sheaf Q_i)\}_{i\in I}$ be a collection of triples that satisfy separately the hypotheses stated in Theorem \ref{H1-non-degenerate}. Let $\bigsqcup_{i\in I}  \sheaf Q$ denote the adapted functor of probabilities on $\bigsqcup_{i\in I} \Sinf$ that coincides with $\sheaf Q_i$ on $\Sinf_i$. Then,
$$ Z^1\left(\bigsqcup_{i\in I} \Sinf,\sheaf F(\bigsqcup_{i\in I}  \sheaf Q)\right) \cong \prod_{i\in I} Z^1(\Sinf_i,\sheaf F(\sheaf Q_i)). $$
\end{prop}
\begin{proof}
The category $(\bigsqcup_{i=1}^n \Sinf)^*$ is the disjoint union of the categories $\Sinf_i^*$, for $i\in I$. 
\end{proof}

\subsubsection{Degenerate cases} 
Let us focus now on the particular case of bounded, finite information structures $(\Sinf,\sheaf E)$, with $\sheaf E$ conservative (e.g. obtained from a concrete structure) and $\sheaf Q= \sheaf P$. 
The hypothesis of Theorem \ref{H1-non-degenerate} is satisfied if and only if every minimal object of the poset $\Sinf$ is nontrivially reducible. Hence there are two kind of cases not covered by it: 
\begin{enumerate}
\item there is an irreducible minimal object;
\item all minimal objects are reducible, but some of them cannot be written as nondegenerate products.
\end{enumerate}
In the latter, all kinds of behaviors are possible, as the examples at the end this section show. 
 
In the Example \ref{ex:computation_divergent_S_binary}, we proved that $H^1(\Sinf, \sheaf F_\alpha(\FProb))$ has infinite dimension when $\Sinf \cong \ObsFin(\{0,1\})$; in this case, there is only one nontrivial observable and it is obviously irreducible. Now we proceed to the generalization of this result. 


\begin{prop}\label{irreducible_element_infty}
Let $(\Sinf,{\sheaf E})$ be a bounded, finite information structure such that ${\sheaf E}$ is conservative. Suppose $\Sinf$ has an irreducible minimal object. Then, $\dim H^1(\Sinf,\sheaf F_\alpha(\FProb)) = \infty$.
\end{prop} 
\begin{proof}
Let $M$ be an irreducible minimal object. 
Remark that $$T:=\set{X}{M\to X \text{ and } X\neq M}$$ has a unique minimal element $\tilde X$ for the partial order $\Sinf$ (if $X_1$ and $X_2$ were two different minimal elements, the diagram $X_1 \leftarrow M \rightarrow X_2$  would imply that $M = X_1 X_2$).

Recall that a $1$-cochain $f$ is uniquely determined by a collection of measurable  functions $\{f[X]:\sheaf Q_X\to \Rr\}_{X\in \Ob \Sinf}$. Let us set  $f[X] = 0$ for every $X\in \Ob \Sinf\sm\{M\}$ and show that $f[M]$ can be chosen arbitrarily so that $f$ is a $1$-cocycle.  

The $1$-cocycle condition  implies in particular that 
\begin{equation}\label{eq:conditioned_minimal_object}
f[M](P) = \tilde X.f[M](P):=\sum_{x\in {\sheaf E}_{\tilde X}} \tilde X_*P(x)^\alpha f[M](P|_{\tilde X=x}).
\end{equation}
 Given this functional equation, the others  become redundant, since for any $X$ coarser than $M$ (hence coarser than $\tilde X$),
$$X.f[M]=X.(\tilde X.f[M]) = (X\tilde X).f[M] = \tilde X.f[M].$$

Since ${\sheaf E}$ is conservative, the induced surjection ${\sheaf E}\pi:{\sheaf E}_{M}\to {\sheaf E}_{\tilde X}$ is not a bijection. Therefore, the set ${\sheaf E}^*:= \set{x\in {\sheaf E}_{\tilde X}}{|({\sheaf E}\pi_{\tilde XM})^{-1}(x)|\geq 2}$ is nonempty. For each element of $x\in {\sheaf E}^*$, one can introduce an arbitrary function $g_x$ on $\Delta^{|({\sheaf E}\pi_{\tilde XM})^{-1}(x)|-1}$ that vanishes on the vertices, so that $f[M](P|_{X=x}) = g_x(P|_{X=x})$ and the value of $M$ on an arbitrary $P\in \FProb_X$ is given by \eqref{eq:conditioned_minimal_object}.
\end{proof}

To close this section, we make some remarks about the case of reducible minimal objects that cannot be written as nondegenerate products. If the product is degenerate, multiple constants can appear or the dimension of $H^1(\Sinf, \sheaf F_\alpha(\FProb))$ can become infinite, as the following examples show.

\begin{ex}
Consider the information structure $(\Sinf,{\sheaf E})$ given by the poset $\Sinf$ represented by
$$
\begin{tikzcd}
 & \Us & \\
 X \ar[ur] & & Y   \ar[ul] \\
 & XY \ar[ul]\ar[ur] &
\end{tikzcd}
$$
and the assignment ${\sheaf E}(X)=\{x_1,x_2,x_3,x_4\}$, ${\sheaf E}(Y)=\{y_1,y_2,y_3,y_4\}$, and $${\sheaf E}(XY)=(\{x_1,x_2\}\times \{y_1,y_2\})\cup (\{x_3,x_4\}\times \{y_3,y_4\});$$ the surjections are the terminal maps and the restrictions of the canonical projectors.  To determine a $1$-cochain $f\in C^1(\Sinf, \sheaf F(\FProb))$ it suffices to specify 
\begin{equation*}
f[X]\left(\begin{array}{cc}
x_1 & x_2 \\
p & 1-p
\end{array}\right),  f[X]\left(\begin{array}{cc}
x_3 & x_4 \\
p & 1-p
\end{array}\right),  f[Y]\left(\begin{array}{cc}
y_1 & y_2 \\
p & 1-p
\end{array}\right) \text{ and } f[Y]\left(\begin{array}{cc}
y_3 & y_4 \\
p & 1-p
\end{array}\right),
\end{equation*}
for arbitrary $p\in [0,1]$.  Proposition \ref{functional equations} allows us to conclude that 
\begin{equation}
f[X]\left(\begin{array}{cc}
x_1 & x_2 \\
p & 1-p
\end{array}\right)=\lambda_1 s_\alpha(p), \quad
f[Y]\left(\begin{array}{cc}
y_1 & y_2 \\
p & 1-p
\end{array}\right)=\lambda_1 s_\alpha(p).
\end{equation}
We can use this proposition a second time to show that 
\begin{equation}
f[X]\left(\begin{array}{cc}
x_3 & x_4 \\
p & 1-p
\end{array}\right)=\lambda_2 s_\alpha(p), \quad
f[Y]\left(\begin{array}{cc}
y_3 & y_4 \\
p & 1-p
\end{array}\right)=\lambda_2 s_\alpha(p).
\end{equation}
However, it is impossible to find a relation between $\lambda_1$ and $\lambda_2$  using \eqref{proof_rec_1} and \eqref{proof_rec_2}. So $Z^1( \sheaf F_\alpha(\sheaf Q))\cong \Rr^2$.
\end{ex}

\begin{ex}
If in the previous example
\begin{equation}
{\sheaf E}_{XY}=(\{x_1,x_2\}\times \{y_1,y_2\})\cup\{(x_3,y_3),(x_4,y_4)\},
\end{equation}
it is still true that
\begin{equation}
f[X]\left(\begin{array}{cc}
x_1 & x_2 \\
p & 1-p
\end{array}\right)=\lambda_1 s_\alpha(p), \quad
f[Y]\left(\begin{array}{cc}
y_1 & y_2 \\
p & 1-p
\end{array}\right)=\lambda_1 s_\alpha(p),
\end{equation}
whereas \eqref{proof_rec_1} and \eqref{proof_rec_2} imply that
\begin{equation}
f[XY] \left(\begin{array}{cc}
(x_3,y_3) & (x_4,y_4) \\
p & 1-p
\end{array}\right) = f[X]\left(\begin{array}{cc}
x_3 & y_3 \\
p & 1-p
\end{array}\right) = f[Y]\left(\begin{array}{cc}
y_3 & y_4 \\
p & 1-p
\end{array}\right).
\end{equation}
We conclude that any measurable  $g:\Delta^1\to \Rr$ such that $g(0,1)=g(1,0)=0$ defines a solution, therefore  $\dim Z^1(\sheaf F_\alpha(\sheaf Q)) = \infty$.
\end{ex}

\subsection{Interpretation: Crossed homomorphisms}
 The extension of the results in \cite[Ch.~X]{MacLane1994} concerning the Hochschild cohomology of algebras to the case of presheaves of algebras is straightforward, taking into account that the relative bar construction in the appendix gives the explicit description of both.  
 
 For instance, $H^0(\Sinf,\sheaf N)$ corresponds to global sections of $\sheaf N$ that are invariant under the action of $\sheaf A$. When $\sheaf N=\sheaf F_\alpha(\sheaf Q)$ the only sections are constant functionals; they are all invariant when $\alpha = 1$, and none of them is invariant when $\alpha\neq 1$.  
 
 A \emph{crossed homomorphism} of $\Sring$ to $\sheaf N$ is a morphism of presheaves of $\Rr$-modules $f:\Sring \to \sheaf N$ satisfying the identity 
 \begin{equation}
 f[X_1X_2] = X_1 f[X_2] + f[X_1].
 \end{equation}
 The principal crossed homomorphisms have the form $f_n[X] = Xn - nX = Xn - n$ for some section $n$ of $\sheaf N$. Therefore $H^1(\Sinf, \sheaf N)$ is the $\Rr$-module of crossed homomorphisms modulo the principal ones. 
 
 When $\sheaf N=\sheaf F_\alpha(\sheaf Q)$ and the hypotheses in Theorem \ref{H1-non-degenerate} are satisfied, the \emph{only} crossed homomorphism is the corresponding $\alpha$-entropy (up to multiplicative constants). Thus each entropy represents the unique nontrivial way of transforming a multiplicative operation involving partitions or $\sigma$-algebras into an additive operation on probabilistic functionals, introducing an appropriate ``twist''. In the degenerate case of Proposition \ref{irreducible_element_infty}, the algebra $\Sring_M$ is ``too poor'' to determine a unique crossed homomorphism.  Moreover, the naturality of $S_\alpha$ implies that $S_\alpha[X](P)$ depends only on $X_*P$.  This turns out to be the appropriate notion of locality and justifies the introduction of presheaves, and in this sense one could consider information cohomology as a ``localized'' version  of Hochschild cohomology. 
 
 Finally, $H^2(\Sinf,\sheaf N)$ classifies $\Rr$-split singular algebra extensions of $\sheaf N$ by $\Sring$, see \cite[Thm.~3.1]{MacLane1994}.

\section{Final remarks}

We have seen that Shannon entropy can be identified with a cohomology class in information cohomology, a topological invariant associated to a finite statistical system. 
It determines the ``natural'' way to turn the product of partitions/$\sigma$-algebras into an additive operation of probabilistic functionals that is compatible with variations of the probability laws.
 In this framework, each entropy $S_\alpha$ spontaneously appears indexed by the observables---via the bar construction---, and it is ``local'' in the sense that $S_\alpha[X](P)$ only depends on $X_*P$, the marginalization of the probability  $P$ on ${\sheaf E}_X$.  This good notion of locality--- encoded by the naturality of the involved functors---and the 1-cocycle condition replace completely the axioms invoked by the usual algebraic approaches (one does not need to suppose symmetry, convexity/concavity or a given asymptotic behavior). 

Whereas the fact that Shannon or Tsallis entropies are 1-cocycles is related to well-known facts (because the 1-cocycle condition is the corresponding chain rule), the cohomological restatement gives new insights: it is possible to change the coefficients of the cohomology and interpret the same 1-cocycle condition for completely different objects. In \cite{Vigneaux2019-thesis}, it is showed that when the coefficients are certain ``combinatorial'' functionals ($\Rr_{>0}$-valued functions of statistical frequencies), the 1-cocycles are generalized multinomial coefficients and the chain rule corresponds to the multiplicative relations between them. The combinatorial and probabilistic cocycles are  asymptotically related. The same reference introduces information structures of continuous observables with gaussian laws (cf. Example \ref{ex:homogeneous}): in this setting, the dimension of the support appears as a 1-cocycle,  the 1-cocycle condition being the nullity-rank theorem; similarly, the log of the determinant of the covariance matrix is a 1-cocycle, and the 1-cocycle condition is Schur's determinantal formula. It would be difficult to recognize all these identities as manifestations of the same thing without the general framework. The isolation of the common combinatorial structure in all these examples (the conditional meet semilattices) is instrumental to develop a definition of information cohomology that works in general.

It seems to us that the \emph{connections} between the different points of view on entropy---algebraic, probabilistic, combinatorial, dynamical---are not yet systematically understood. We are convinced that the language of (pre)sheaves and their homological invariants opens many new directions of research, being rich enough to integrate constructions coming from these diverse domains. Further applications of our approach are related to several open problems, for instance:
\begin{enumerate}
\item the computation of cocycles of higher degrees, conjectured in \cite{Baudot2015} to be new measures of mutual information of all orders;
\item the possible reformulation of Shannon's coding theorems as cohomological obstruction problems;
\item  the computation of cohomology for categories of symplectic manifolds and reductions, giving adapted measures of information;
\item a functorial relation between classical and quantum information cohomology (and the corresponding concentration theorems) through geometric quantization, and
\item a categorical formulation of \emph{Ruzsa's dictionary} \cite{Ruzsa2009}, that relates inequalities for cardinalities and entropies.
\end{enumerate} 
 Only the first steps are made: we hope that other authors, with new ideas, will be able to go further.

\appendix

\section{Relative bar resolution}\label{sec:relative}

\subsection{General results} In this subsection, we summarize the construction of a relative bar resolution from \cite[Ch.~IX]{MacLane1994}. The purpose is to find the analogue of a free resolution of modules, but in the general context of abelian categories. Capital Latin letters $A,B,C...$ denote objects and Greek letters $\alpha, \beta...$ morphisms.

A \keyt{relative abelian category} is a pair of abelian categories $\cat A$ and $\cat M$ and a covariant functor $\square: \cat A \to \cat M$ which is additive, exact and faithful  (we write $\square (X) = X_\square$, for objects and morphisms). 

\begin{ex}\label{relative_modules}
The simple example to have in mind are $R$-modules and $S$-modules, when $S$ is a subring of $R$ with the same unit (write $\iota:S\to R$ for the injection). Every $R$-module $A$ can be seen as an $S$-module ${}_\iota A$ by \emph{restriction of scalars}; every $R$-module morphism $\alpha :A\to B$ is also a $S$-module morphism ${}_\iota \alpha:{}_\iota A\to {}_\iota B$. The assignment $\square A := {}_\iota A$ and $\square \alpha := {}_\iota \alpha$ defines  a relative abelian category.
\end{ex}

A short exact sequence $\chi\| \sigma$ in $\cat A$ is relatively split ($\square$-split) if $\chi_\square\| \sigma_\square$  splits in $\cat M$. A monomorphism $\chi$ is called allowable if $\chi \|\sigma$ is $\square$-split for some $\sigma$; this is the case if and only if $\chi \|(\coker \chi)$ is $\square$-split. 
Dually, an epimorphism is called allowable if $(\ker \sigma) \| \sigma$ is $\square$-split.
%
The following conditions on a morphism $\alpha$ are equivalent:
\begin{enumerate}
\item $\im \alpha$ is an allowable monomorphism and $\coim \alpha$ is an allowable epimorphism;
\item $\ker \alpha$ is an allowable monomorphism and $\coker \alpha$ is an allowable epimorphism;
\end{enumerate}
A morphism is called allowable when it satisfies any of these conditions (see \cite[p.~264]{MacLane1994}).


A \keyt{relative projective object} $P$ is any object of $\cat A$ such that, for every \textit{allowable} epimorphism $\sigma:B\to C$, each morphism $\epsilon :P \to C$ of $\cat A$ can be factored through $\sigma$ as $\epsilon = \sigma \epsilon'$ for some $\epsilon' :P \to A$.  

A \keyt{resolvent pair} is a relative abelian category $\square: \cat A \to \cat M$ together with a covariant functor $F:\cat M \to \cat A$ left adjoint to $\square$.

\begin{prop}\label{resolvent_pair}
 Let  $\square: \cat A \to \cat M$ be a relative abelian category. The following conditions are equivalent:
\begin{enumerate}
\item\label{resolvent_pair:cond1} there exists a covariant functor $F:\cat M \to \cat A$ left adjoint to $\square$;
\item\label{resolvent_pair:cond2}  there exist a covariant functor $F:\cat M \to \cat A$, and a natural transformation $e:1_{\cat M} \to \square F$ (where $1_{\cat M}$ is the identity functor), such that every $u:M\to A_\square$ in $\cat M$ has a factorization $u= \alpha_\square e_M$, with $\alpha:F(M) \to A$ unique.
\end{enumerate} 
\end{prop}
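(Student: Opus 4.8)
The statement is the classical characterization of a left adjoint through its unit, specialized here to the forgetful functor $\square$; the plan is to produce, from each of the two data, the ingredients of the other and to verify that the two correspondences are mutually inverse. The only genuine content is the bookkeeping of the naturality of $\varphi$ in its two arguments (covariant in $A$, contravariant in $M$ through $F$), so I will organize everything around the single formula $\varphi(\alpha) = \alpha_\square \circ e_M$.

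For the implication \eqref{resolvent_pair:cond1} $\Rightarrow$ \eqref{resolvent_pair:cond2}, I would define the candidate unit by $e_M := \varphi(\id_{FM}) \in \Hom_{\mathcal M}(M,(FM)_\square)$. Naturality of $\varphi$ in the second argument, applied to a morphism $\alpha:FM\to A$ together with $h=\id_{FM}$, yields the key identity $\varphi(\alpha) = \alpha_\square \circ e_M$. Given any $u:M\to A_\square$, setting $\alpha := \varphi^{-1}(u)$ then produces the required factorization $u = \alpha_\square e_M$, and uniqueness of $\alpha$ is immediate from the bijectivity of $\varphi$. Finally I would check that $e$ is natural: naturality of $\varphi$ in the first argument, evaluated at $\id_{FM'}$ against a morphism $f:M\to M'$, gives $\varphi(Ff) = e_{M'} \circ f$, while the key identity gives $\varphi(Ff) = (Ff)_\square \circ e_M$; comparing the two expresses exactly the naturality square $\square(Ff)\circ e_M = e_{M'}\circ f$.

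For the converse \eqref{resolvent_pair:cond2} $\Rightarrow$ \eqref{resolvent_pair:cond1}, I would run the same formula backwards and define
\begin{equation*}
\varphi: \Hom_{\mathcal A}(FM, A) \to \Hom_{\mathcal M}(M, A_\square), \qquad \varphi(\alpha) = \alpha_\square \circ e_M.
\end{equation*}
The hypothesized factorization property says precisely that every $u$ admits a unique $\alpha$ with $\varphi(\alpha)=u$, that is, that $\varphi$ is a bijection. Naturality in $A$ is then just functoriality of $\square$, since $\varphi(g\alpha) = (g\alpha)_\square e_M = g_\square(\alpha_\square e_M) = g_\square\, \varphi(\alpha)$ for $g:A\to A'$; and naturality in $M$ reduces, after expanding both sides of $\varphi(\alpha\circ Ff) = \varphi(\alpha)\circ f$, to the identity $\square(Ff)\circ e_M = e_{M'}\circ f$, which now holds because $e$ is assumed to be a natural transformation. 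Hence $\varphi$ is the desired adjunction isomorphism of \eqref{natural_iso}.

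I expect no serious obstacle: the argument is entirely formal and coincides with the standard proof that the unit determines an adjunction, so the relative abelian structure ($\square$ exact, additive, faithful) plays no role whatsoever. The one point demanding care is to keep the variance straight when invoking naturality of $\varphi$ — covariant in $A$ but contravariant in $M$ via $F$ — and to record that the two passages $\varphi\mapsto e$ and $e\mapsto \varphi$ constructed above are inverse to each other, which follows at once from the shared formula $\varphi(\alpha)=\alpha_\square e_M$.
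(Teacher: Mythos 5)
Your proposal is correct and follows essentially the same route as the paper: the paper's (terse) proof likewise defines $e_M := \varphi(1_{FM})$, obtains the factorization via $\alpha := \varphi^{-1}(u)$ and naturality of $\varphi$, and reverses the formula $\varphi(\alpha)=\alpha_\square e_M$ for the converse, deferring the remaining verifications to Mac Lane. You have merely written out in full the naturality checks (of $e$ in one direction, of $\varphi$ in both arguments in the other) that the paper delegates to the citation, and your observation that the relative abelian structure is never used is accurate.
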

Remark that $e$ is the unit of the adjunction.

\begin{ex}[continuation of \ref{relative_modules}]
Take $F(M) = R\otimes_S M$ and $e_M = 1 \otimes m \in F(M)$. Given a map of $S$-modules $u:M\to A_\square$ , define $\alpha:FM \to A$ by $\alpha(1\otimes m) = u(m)$.
\end{ex}

 

A complex $\epsilon:X\to A$ over $A$ (in $\cat A$) is a sequence of $\cat A$-objects and $\cat A$-morphisms
$...X_n \to X_{n-1} \to ... \to X_1 \to X_0 \xrightarrow{\epsilon} A \to 0,$
such that the composite of any two successive morphisms is zero.  This complex is called a \emph{resolution} of $C$ if the sequence is exact, \emph{relatively free} if each $X_n$ has the form $F(M_n)$ for certain $M_n$ in $\cat M$, and \emph{allowable} if all its morphisms are allowable.
 
Each object $C$ of $\cat A$ has a canonical relatively free resolution. Writing $\tilde F C$ for $F\square C$, and $\tilde F^n$ for its $n$-fold iteration, construct the objects $
B_n(C) = \tilde F^{n+1} C$, for each $n\in \Nn.$ Define $\cat M$-morphisms $s_\bullet$ between the corresponding objects 
\begin{equation}
\begin{tikzcd}
\square C \ar[r, "s_{-1}"] 
& \square B_0(C) \ar[r, "s_{0}"]
& \square B_1(C) \ar[r, "s_{1}"]
& \square B_2(C) \ar[r, "s_{2}"]
& \ldots
\end{tikzcd}
\end{equation}
as $s_{-1}:= e(\square C)$ and $s_n := e(\square B_n(C))$ (here $e$ is the natural transformation in Proposition \ref{resolvent_pair}).

\begin{prop}[see \protect{\cite[p.~268]{MacLane1994}}]\label{general_bar_resolution}
There are unique $\cat A$-morphisms 
$$\epsilon: B_0(C) \to C, \quad \partial_{n+1}:B_{n+1}(C) \to B_{n}(C) \quad \text{for }n\in \Nn,$$
which make $B(C):=\{B_n(C)\}_n$ a relatively free allowable resolution of $C$ with $s$ as contracting homotopy in $\cat M$. This resolution, with its contracting homotopy, is a covariant functor of $C$.
\end{prop}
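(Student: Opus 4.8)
The plan is to build $\epsilon$ and the $\partial_{n+1}$ by iterating the adjunction, used in the form of Proposition \ref{resolvent_pair}\eqref{resolvent_pair:cond2}: every $\mathcal{M}$-morphism $u\colon M\to \square A$ factors uniquely as $u=\alpha_\square\, e_M$ for a single $\alpha\colon F(M)\to A$. The crucial structural remark is that $B_0(C)=F(\square C)$ and $B_{n+1}(C)=\tilde F^{\,n+1}(\tilde F C)=F(\square B_n(C))$, so $B(C)$ is relatively free on $M_0=\square C$, $M_{n+1}=\square B_n(C)$, with unit maps $e_{M_0}=s_{-1}$ and $e_{M_{n+1}}=s_n$. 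Consequently an $\mathcal{A}$-morphism out of $B_0(C)$ (resp. $B_{n+1}(C)$) is completely and \emph{uniquely} determined by its composite after $\square$ with $s_{-1}$ (resp. $s_n$). First I would set $\epsilon\colon B_0(C)\to C$ to be the unique morphism with $\epsilon_\square\, s_{-1}=1_{\square C}$ (the counit), and then define $\partial_{n+1}$ inductively as the unique morphism satisfying
\[
(\partial_{n+1})_\square\, s_n \;=\; 1_{\square B_n(C)}-s_{n-1}\,(\partial_n)_\square ,
\]
writing $\partial_0:=\epsilon$. These equations are by design the contracting-homotopy identities for $s$, and the uniqueness clause of the proposition is precisely the uniqueness in the adjunction.

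Next I would check that $B(C)$ is a chain complex. Since $\square$ is faithful and a map out of $B_{n+1}(C)=F(\square B_n(C))$ is pinned down by its composite with $s_n$, it suffices to show $(\partial_n\partial_{n+1})_\square\, s_n=0$. Substituting the defining identity for $\partial_{n+1}$, then the identity one level lower for $(\partial_n)_\square s_{n-1}$, the cross terms telescope and the expression collapses to $s_{n-2}\,(\partial_{n-1}\partial_n)_\square$; this vanishes by the inductive hypothesis, the base case $\epsilon\partial_1=0$ being the same manipulation fed by $\epsilon_\square s_{-1}=1_{\square C}$.

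Then I would argue exactness, relative freeness, and allowability. The displayed identities say exactly that $s$ contracts the augmented complex $\square B(C)\to\square C$ in $\mathcal{M}$, so that complex is split exact there; because $\square$ is exact and faithful it reflects exactness (if $H$ denotes a homology object then $\square H$ is the corresponding homology of $\square B(C)$, hence $\square H=0$ and faithfulness forces $H=0$), whence $B(C)$ is a resolution in $\mathcal{A}$. Relative freeness is immediate from $B_n(C)=F(M_n)$. For allowability I would cut the resolution into short exact sequences $0\to Z_n\to B_n(C)\to Z_{n-1}\to 0$ and observe that restricting $s_{n-1}$ to $\square Z_{n-1}\subset\square B_{n-1}(C)$ yields, via the homotopy identity applied on $\ker(\partial_{n-1})_\square$, an $\mathcal{M}$-section of each quotient map; this is exactly $\square$-splitness, so every $\epsilon,\partial_n$ is allowable.

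Finally, for functoriality I would set $B_n(f)=\tilde F^{\,n+1}(f)$ for $f\colon C\to C'$; naturality of $e$ gives $\square\!\big(B_n(f)\big)\,s_{\bullet}=s'_{\bullet}\,\square(f)$ (with $s'$ the homotopy of $C'$), making $B(f)$ compatible with all the $s$'s, and then the same determined-by-$s$ uniqueness forces $B(f)$ to commute with $\epsilon$ and the $\partial$'s and shows $s$ itself is natural in $C$. I expect the only genuinely delicate point to be the inductive bookkeeping behind $\partial^2=0$: one must fix signs and indices in the homotopy identities so that the telescoping substitution leaves precisely the residue $s_{n-2}(\partial_{n-1}\partial_n)_\square$. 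Once that is in place, exactness (faithful-exact reflection), allowability (as $\square$-splitness supplied by $s$), and functoriality (naturality of $e$ plus adjunction uniqueness) are comparatively formal.
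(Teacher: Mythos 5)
Your construction is exactly the one the paper quotes from Mac Lane: since $B_0(C)=F(\square C)$ and $B_{n+1}(C)=F(\square B_n(C))$, morphisms out of these objects are uniquely determined by their $\square$-composites with $s_{-1}$ and $s_n$, and $\epsilon$, $\partial_{n+1}$ are defined as the unique morphisms satisfying $\epsilon_\square s_{-1}=1_{\square C}$ and $(\partial_{n+1})_\square s_n = 1_{\square B_n}-s_{n-1}(\partial_n)_\square$. The verifications you supply on top of this — the telescoping computation $(\partial_n\partial_{n+1})_\square s_n = s_{n-2}(\partial_{n-1}\partial_n)_\square$ giving $\partial^2=0$ by induction, exactness reflected through the exact faithful $\square$, allowability from the sections obtained by restricting $s_{n-1}$ to $\square Z_{n-1}=\ker(\partial_{n-1})_\square$, and functoriality via naturality of $e$ plus adjunction uniqueness — are all correct and are precisely the details the paper itself omits and defers to \cite[p.~268]{maclane_homology}.
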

\begin{proof}
We simply quote here the construction of $\epsilon$ and $\partial_n$. They form the following diagram (solid arrows belong to $\cat A$, and dashed arrows belong to $\cat M$):
\begin{equation}
\begin{tikzcd}
0 
& C      \ar[l] \ar[r, "s_{-1}" below, shift right, dashed] 
& B_0(C) \ar[l, "\epsilon" above, shift right]  \ar[r, "s_{0}" below, shift right, dashed] 
& B_1(C) \ar[l, "\partial_1" above, shift right]  \ar[r, "s_{1}" below, shift right, dashed] 
& B_2(C) \ar[l, "\partial_2" above, shift right]  \ar[r, "s_{2}" below, shift right, dashed] 
& {\ldots} \ar[l, "\partial" above, shift right]
\end{tikzcd}
\end{equation}
By Proposition \ref{resolvent_pair}, $1_{\square C}$ factors through a unique $\epsilon:B_0(C) \to C$; the formula $1_{\square C} = \epsilon_\square e_C$ shows that $\epsilon$ is allowable (note that $\epsilon$ is an epimorphism). Boundary operators are defined by recursion so that $s$ will be a contracting homotopy. Given $\epsilon$, the morphism $ 1_{\square B_0}-s_{-1}\epsilon_\square$ factors uniquely as ${\partial_1}_\square s_0$, for some $\partial_1:B_1(C) \to B_0(C)$. Similarly, $  1_{\square B_n}-s_{n-1}{\partial_n}_\square : \square B_n(C) \to \square B_n(C)$ determines $\partial_{n+1}$ given $\partial_n$, as the unique $\cat A$-morphism such that ${\partial_{n+1}}_\square s_n =  1_{\square B_n}-s_{n-1}{\partial_n}_\square$.
\begin{equation}
\begin{tikzcd}
\square B_{n+1} \ar[r, "{\partial_1}_\square"] & \square B_{n} \\
\square B_{n} \ar[u, "s_n"] \ar[ur, "1_{\square B_{n}}" below, outer sep=5pt] &
\end{tikzcd}
\end{equation}
\end{proof}

The resolution $B(C)$ is called the (unnormalized) bar resolution. 

\subsection{Example: Presheaves of modules}\label{sec:exo_relative}
We develop now the particular case relevant to our theory. Let $\Sinf$ be a category, and  $\sheaf R,\sheaf T:\Sinf^{\text{op}}\to \cat{Rings}$ presheaves, such that $\sheaf T_X$ is a subring of $\sheaf R_X$ with the same unit, for every $X\in \Ob\cat \sheaf T$. Take $\cat A=\Mod(\sheaf R)$, the category of presheaves of $\sheaf R$-modules, and $\cat M=\Mod(\sheaf T)$, the category of presheaves of $\sheaf T$-modules. A relative abelian category is obtained when $\square: \cat A \to \cat M$ is the forgetful functor over each $X$, as defined in Example \ref{relative_modules}. The functor $F:\cat M \to \cat A$ sends a presheaf $\sheaf P$ to the new presheaf $X\mapsto  \sheaf R_X \otimes_{\sheaf T_X} \sheaf P_X $,\footnote{This is a left $\sheaf R$-module with action defined by $r(r'\otimes p)=(rr')\otimes g$. For iterated tensor products, this definition is not canonical; for example, when considering $\sheaf R_X \otimes \sheaf R_X \otimes \sheaf P_X$, the element $(sr)\otimes r' \otimes g$ does not equal $r\otimes (sr') \otimes g$ (for $s\in \sheaf T$), unless $\sheaf T$ is in the center of $\sheaf R$. As in this work we only use commutative rings and algebras, these differences do not pose any problem.} and each morphism of $\sheaf T$-presheaves (in short, $\sheaf T$-morphism)  $f:M\to N$ to the $\sheaf R$-morphism defined by 
\begin{equation}
\forall X\in \Ob \Sinf, \forall m\in M(X), \quad Ff(X)(1\otimes m) = 1 \otimes f(m), \quad \text{for }X\in \Sinf.
\end{equation}
 The natural transformation $e$ mentioned in Proposition \ref{resolvent_pair} corresponds to a collection of $\sheaf T$-morphisms $e_\sheaf P:\sheaf P \to \square F (\sheaf P)$, one for each presheaf $\sheaf P$ of $\sheaf T$-modules; given $X$ in $\Sinf$, we define $e_\sheaf P(X)(m) = 1\otimes m$ for each $m\in M(X)$. \footnote{Of course, one has to prove that $e$ is in fact a natural transformation and satisfies the properties required by Proposition \ref{resolvent_pair}. This proof is rather trivial but complicated to write, and we omit it.}

Fix now a presheaf $\sheaf C$ in $\Mod(\sheaf R)$. We denote by $X$ a generic element in $\Ob\cat S$. Then, $B_0 \sheaf C(X) :=F\square \sheaf C(X) = \sheaf R_X \otimes_{\sheaf T_X} (\square \sheaf C(X))$; this $\sheaf R_X$-module is formed by finite $\sheaf R_X$-linear combinations of tensors $1\otimes c$, with $c\in \sheaf C$. Generally, an element of $B_n \sheaf C(X) = \sheaf R_X \otimes \square B_{n-1}\sheaf C(X)$, for $n\geq 1$, is a finite $\sheaf R_X$-linear combination of tensors $1\otimes r_1\otimes r_2 \otimes ...\otimes r_n \otimes c$. The ring $\sheaf R_X$ acts on $B_n \sheaf C(X)$ by multiplication on the first factor of the tensor product; to highlight this fact, people usually write  $r[ r_1 | r_2 | ...| r_n | c]$ instead of $r\otimes r_1\otimes r_2 \otimes ...\otimes r_n \otimes c$. This notation explains the name ``bar resolution'' adopted above. The definition of $e$ implies that
\begin{equation}
s_{-1}^X: \square \sheaf C(X) \to \square B_0 \sheaf C(X), \quad c \mapsto 1 \otimes c = [c],
\end{equation}
and
\begin{equation}
s_{n}^X: \square B_n \sheaf C(X) \to \square B_{n+1}\sheaf C(X), \quad r[r_1| r_2 | ...| r_n | c] \mapsto [r|r_1| r_2 | ...| r_n | c] \quad \text{for } n\in \Nn.
\end{equation}
These equalities determine $s_\bullet$, since these functions are $\sheaf T_X$-linear.

Now $\epsilon$ is the \emph{unique} $\sheaf R$-morphism such that $1_{\square \sheaf C} = \epsilon_\square e_\sheaf C$; this is clearly the case if $\epsilon^X([c]) = c$. Similarly, $\partial_1$ is the unique $\sheaf R$-morphism from $B_1 \sheaf C$ to $B_0 \sheaf C$ that satisfies 
\begin{equation}\label{recurrence_bord_1}
 {\partial_1}_\square s_0 = 1-s_{-1}^X\epsilon_\square
\end{equation}
Since $B_1 \sheaf C(X)$ is generated as a $\sheaf R_X$-module by the elements $[r|c]$, and $s_0^X(r[c]) = [r|c]$, the equation \eqref{recurrence_bord_1} defines $\partial_1$ completely. Just remark that $\epsilon(r[c])= r\epsilon([c])=r\epsilon(1\otimes c) = rc$ and $s_1(rc)=[rc]$. We conclude that 
\begin{equation}
\partial_1([r|c]) = r[c] - [rc].
\end{equation}
It can be proved by recursion that (cf. \cite[p.~281]{MacLane1994})
\begin{equation}
\partial [r_1|...|r_n|c] = r_1[r_2|...|r_n|c] + \sum_{k=1}^{n-1} (-1)^k [r_1|...|r_kr_{k+1}|...|r_n|c] +(-1)^n[r_1|...|r_{n-1}|r_nc].
\end{equation}
In virtue of Proposition \ref{general_bar_resolution}, we obtain in this way a free allowable resolution of $\sheaf C$.


\begin{thebibliography}{50}
\providecommand{\natexlab}[1]{#1}
\providecommand{\url}[1]{\texttt{#1}}
\expandafter\ifx\csname urlstyle\endcsname\relax
  \providecommand{\doi}[1]{doi: #1}\else
  \providecommand{\doi}{doi: \begingroup \urlstyle{rm}\Url}\fi

\bibitem[Abramsky and Brandenburger(2011)]{Abramsky2011sheaf}
S.~Abramsky and A.~Brandenburger.
\newblock The sheaf-theoretic structure of non-locality and contextuality.
\newblock \emph{New Journal of Physics}, 13\penalty0 (11):\penalty0 113036,
  2011.

\bibitem[Abramsky and Hardy(2012)]{Abramsky2012}
S.~Abramsky and L.~Hardy.
\newblock Logical {B}ell inequalities.
\newblock \emph{Phys. Rev. A}, 85:\penalty0 062114, Jun 2012.

\bibitem[Abramsky et~al.(2015)Abramsky, Barbosa, Kishida, Lal, and
  Mansfield]{Abramsky2015}
S.~Abramsky, R.~S. Barbosa, K.~Kishida, R.~Lal, and S.~Mansfield.
\newblock {Contextuality, Cohomology and Paradox}.
\newblock In S.~Kreutzer, editor, \emph{24th EACSL Annual Conference on
  Computer Science Logic (CSL 2015)}, volume~41 of \emph{Leibniz International
  Proceedings in Informatics (LIPIcs)}, pages 211--228, Dagstuhl, Germany,
  2015. Schloss Dagstuhl--Leibniz-Zentrum fuer Informatik.

\bibitem[Acz{\'e}l and Dar{\'o}czy(1975)]{Aczel1975}
J.~Acz{\'e}l and Z.~Dar{\'o}czy.
\newblock \emph{On Measures of Information and Their Characterizations}.
\newblock Mathematics in Science and Engineering. Academic Press, 1975.
\newblock ISBN 9780120437603.

\bibitem[Artin et~al.(1972{\natexlab{a}})Artin, Grothendieck, and
  Verdier]{Artin1972}
M.~Artin, A.~Grothendieck, and J.-L. Verdier.
\newblock \emph{Th{\'e}orie des Topos et Cohomologie {\'E}tale des Sch{\'e}mas:
  {S\'e}minaire de G{\'e}om{\'e}trie Alg{\'e}brique du Bois-Marie 1963/64 - SGA
  4. {T}ome 1}.
\newblock Lecture notes in mathematics. Springer-Verlag, 1972{\natexlab{a}}.

\bibitem[Artin et~al.(1972{\natexlab{b}})Artin, Grothendieck, and
  Verdier]{Artin1972-2}
M.~Artin, A.~Grothendieck, and J.-L. Verdier.
\newblock \emph{Th{\'e}orie des Topos et Cohomologie {\'E}tale des Sch{\'e}mas:
  {S\'e}minaire de G{\'e}om{\'e}trie Alg{\'e}brique du Bois-Marie 1963/64 - SGA
  4. {T}ome 2}.
\newblock Lecture notes in mathematics. Springer-Verlag, 1972{\natexlab{b}}.

\bibitem[Baez et~al.(2011)Baez, Fritz, and Leinster]{Baez2011}
J.~C. Baez, T.~Fritz, and T.~Leinster.
\newblock A characterization of entropy in terms of information loss.
\newblock \emph{Entropy}, 13\penalty0 (11):\penalty0 1945--1957, 2011.

\bibitem[Baudot and Bennequin(2015)]{Baudot2015}
P.~Baudot and D.~Bennequin.
\newblock The homological nature of entropy.
\newblock \emph{Entropy}, 17\penalty0 (5):\penalty0 3253--3318, 2015.

\bibitem[Bennequin and Vigneaux(2020)]{BennequinVigneaux2019}
D.~Bennequin and J.~P. Vigneaux.
\newblock A functional equation related to generalized entropies and the
  modular group.
\newblock \emph{Aequationes Mathematicae}, 2020.

\bibitem[Bloch and Esnault(2003)]{Bloch2003}
S.~Bloch and H.~Esnault.
\newblock The additive dilogarithm.
\newblock \emph{Kazuya Kato}, page 131, 2003.

\bibitem[Cathelineau(1988)]{Cathelineau1988}
J.-L. Cathelineau.
\newblock Sur l'homologie de {$SL_2$} {\`a} coefficients dans l'action
  adjointe.
\newblock \emph{Mathematica Scandinavica}, pages 51--86, 1988.

\bibitem[Cathelineau(1996)]{Cathelineau1996}
J.-L. Cathelineau.
\newblock Remarques sur les diff{\'e}rentielles des polylogarithmes uniformes.
\newblock \emph{Annales de l'Institut Fourier}, 46\penalty0 (5):\penalty0
  1327--1347, 1996.

\bibitem[Cohn(2013)]{Cohn2013}
D.~Cohn.
\newblock \emph{Measure Theory: Second Edition}.
\newblock Birkh{\"a}user Advanced Texts Basler Lehrb{\"u}cher. Springer New
  York, 2013.

\bibitem[Connes and Consani(2019)]{Connes2017}
A.~Connes and C.~Consani.
\newblock Homological algebra in characteristic one.
\newblock \emph{Higher Structures}, 3\penalty0 (1), 2019.

\bibitem[Constantin and D{\"o}ring(2012)]{Constantin2012}
C.~M. Constantin and A.~D{\"o}ring.
\newblock Contextual entropy and reconstruction of quantum states.
\newblock \emph{arXiv preprint arXiv:1208.2046}, 2012.

\bibitem[Cover and Thomas(2006)]{Cover2006}
T.~Cover and J.~Thomas.
\newblock \emph{Elements of Information Theory}.
\newblock A Wiley-Interscience publication. Wiley, 2006.
\newblock ISBN 9780471748816.

\bibitem[Csisz{\'a}r(2008)]{Csiszar2008}
I.~Csisz{\'a}r.
\newblock Axiomatic characterizations of information measures.
\newblock \emph{Entropy}, 10\penalty0 (3):\penalty0 261--273, 2008.

\bibitem[Curry(2013)]{Curry2013}
J.~Curry.
\newblock \emph{Sheaves, cosheaves and applications}.
\newblock PhD thesis, {T}he {U}niversity of {P}ennsylvania, 2013.
\newblock arXiv:1303.3255.

\bibitem[Dar{\'o}czy(1970)]{Daroczy1970}
Z.~Dar{\'o}czy.
\newblock Generalized information functions.
\newblock \emph{Information and control}, 16\penalty0 (1):\penalty0 36--51,
  1970.

\bibitem[De~Silva and Barbosa(2019)]{DeSilva2019}
N.~De~Silva and R.~S. Barbosa.
\newblock Contextuality and noncommutative geometry in quantum mechanics.
\newblock \emph{Communications in Mathematical Physics}, 365\penalty0
  (2):\penalty0 375--429, 2019.

\bibitem[Elbaz-Vincent and Gangl(2002)]{Elbaz2002}
P.~Elbaz-Vincent and H.~Gangl.
\newblock {On poly(ana)logs I}.
\newblock \emph{Compositio Mathematica}, 130\penalty0 (2):\penalty0 161--214,
  2002.

\bibitem[Elbaz-Vincent and Gangl(2015)]{Elbaz2015}
P.~Elbaz-Vincent and H.~Gangl.
\newblock Finite polylogarithms, their multiple analogues and the {Shannon}
  entropy.
\newblock In \emph{International Conference on Geometric Science of
  Information}, pages 277--285. Springer, 2015.

\bibitem[Fritz and Chaves(2013)]{Fritz2013}
T.~Fritz and R.~Chaves.
\newblock Entropic inequalities and marginal problems.
\newblock \emph{IEEE transactions on Information Theory}, 59\penalty0
  (2):\penalty0 803--817, 2013.

\bibitem[Gromov(2012)]{Gromov2012}
M.~Gromov.
\newblock In a search for a structure, part 1: On entropy.
\newblock \emph{Preprint available at http://www. ihes. fr/gromov}, 2012.

\bibitem[Grothendieck(1957)]{Grothendieck1957}
A.~Grothendieck.
\newblock Sur quelques points d'algèbre homologique, {I}.
\newblock \emph{Tohoku Mathematical Journal}, 9\penalty0 (2):\penalty0
  119--221, 1957.

\bibitem[Halmos(1958)]{Halmos1958}
P.~R. Halmos.
\newblock \emph{Finite-Dimensional Vector Spaces}.
\newblock Undergraduate Texts in Mathematics. Springer-Verlag, New York, USA,
  reprint of 2nd edition, 1958.

\bibitem[Hatori(1958)]{Hatori1958}
H.~Hatori.
\newblock A note on the entropy of a continuous distribution.
\newblock \emph{Kodai Mathematical Seminar Reports}, 10\penalty0 (4):\penalty0
  172--176, 1958.

\bibitem[Havrda and Charv{\'a}t(1967)]{Havrda1967}
J.~Havrda and F.~Charv{\'a}t.
\newblock Quantification method of classification processes. {Concept} of
  structural $ a $-entropy.
\newblock \emph{Kybernetika}, 3\penalty0 (1):\penalty0 30--35, 1967.

\bibitem[Kannappan and Ng(1973)]{Kannappan1973}
P.~Kannappan and C.~T. Ng.
\newblock Measurable solutions of functional equations related to information
  theory.
\newblock \emph{Proceedings of the American Mathematical Society}, 38\penalty0
  (2):\penalty0 pp. 303--310, 1973.

\bibitem[Khinchin(1957)]{Khinchin1957}
A.~Khinchin.
\newblock \emph{Mathematical Foundations of Information Theory}.
\newblock Dover Books on Mathematics Series. Dover Publications, 1957.

\bibitem[Lee(1964)]{Lee1964}
P.~M. Lee.
\newblock On the axioms of information theory.
\newblock \emph{The Annals of Mathematical Statistics}, 35\penalty0
  (1):\penalty0 415--418, 03 1964.

\bibitem[Litvinov(2007)]{Litvinov2007}
G.~L. Litvinov.
\newblock Maslov dequantization, idempotent and tropical mathematics: A brief
  introduction.
\newblock \emph{Journal of Mathematical Sciences}, 140\penalty0 (3):\penalty0
  426--444, 2007.

\bibitem[Mac~Lane(1994)]{MacLane1994}
S.~Mac~Lane.
\newblock \emph{Homology}.
\newblock Springer-Verlag, Germany, reprint of the 1975 edition, 1994.

\bibitem[Mac~Lane(1998)]{MacLane1998}
S.~Mac~Lane.
\newblock \emph{Categories for the Working Mathematician}.
\newblock Graduate Texts in Mathematics. Springer-Verlag, New York, USA, 1998.

\bibitem[Matsuda(2001)]{Matsuda2001}
H.~Matsuda.
\newblock Information theoretic characterization of frustrated systems.
\newblock \emph{Physica A: Statistical Mechanics and its Applications},
  294\penalty0 (1-2):\penalty0 180--190, 2001.

\bibitem[Matveev and Portegies(2018)]{Matveev2018}
R.~Matveev and J.~W. Portegies.
\newblock Asymptotic dependency structure of multiple signals.
\newblock \emph{Information Geometry}, 1\penalty0 (2):\penalty0 237--285, 2018.

\bibitem[M{\'e}zard and Montanari(2009)]{Mezard2009}
M.~M{\'e}zard and A.~Montanari.
\newblock \emph{Information, Physics, and Computation}.
\newblock Oxford Graduate Texts. OUP Oxford, 2009.
\newblock ISBN 9780198570837.

\bibitem[Ot{\'a}hal(1994)]{Otahal1994}
A.~Ot{\'a}hal.
\newblock Finiteness and continuity of differential entropy.
\newblock In \emph{Asymptotic Statistics}, pages 415--419. Springer, 1994.

\bibitem[Pelizzola(2005)]{Pelizzola2005}
A.~Pelizzola.
\newblock Cluster variation method in statistical physics and probabilistic
  graphical models.
\newblock \emph{Journal of Physics A: Mathematical and General}, 38\penalty0
  (33):\penalty0 R309, 2005.

\bibitem[Ruzsa(2009)]{Ruzsa2009}
I.~Z. Ruzsa.
\newblock Sumsets and entropy.
\newblock \emph{Random Structures \& Algorithms}, 34\penalty0 (1):\penalty0
  1--10, 2009.

\bibitem[Schapira(2008)]{Schapira2008}
P.~Schapira.
\newblock Algebra and topology.
\newblock \emph{Lecture Notes from Paris VI}, 2008.

\bibitem[Shannon(1948)]{Shannon1948}
C.~Shannon.
\newblock A mathematical theory of communication.
\newblock \emph{Bell System Technical Journal}, 27:\penalty0 379--423,
  623--656, 1948.

\bibitem[{Stacks Project Authors}(2018)]{stacks-project}
T.~{Stacks Project Authors}.
\newblock \textit{Stacks Project}.
\newblock \url{https://stacks.math.columbia.edu}, 2018.

\bibitem[Tao(2012)]{Tao2012}
T.~Tao.
\newblock \emph{Topics in Random Matrix Theory}.
\newblock American Mathematical Society, 2012.

\bibitem[Tsallis(1988)]{Tsallis1988}
C.~Tsallis.
\newblock Possible generalization of {Boltzmann-Gibbs} statistics.
\newblock \emph{Journal of statistical physics}, 52\penalty0 (1-2):\penalty0
  479--487, 1988.

\bibitem[Tverberg(1958)]{Tverberg1958}
H.~Tverberg.
\newblock A new derivation of the information function.
\newblock \emph{Mathematica Scandinavica}, 6:\penalty0 297--298, 1958.

\bibitem[Varadhan(2003)]{Varadhan2003}
S.~Varadhan.
\newblock Large deviations and entropy.
\newblock In A.~Greven, G.~Keller, and G.~Warnecke, editors, \emph{Entropy},
  chapter~9, pages 199--214. Princeton University Press, Princeton and Oxford,
  2003.

\bibitem[Vigneaux(2019)]{Vigneaux2019-thesis}
J.~P. Vigneaux.
\newblock \emph{Topology of Statistical Systems: A Cohomological Approach to
  Information Theory}.
\newblock PhD thesis, Universit\'e de Paris, 2019.

\bibitem[Vontobel(2013)]{Vontobel2013}
P.~O. Vontobel.
\newblock Counting in graph covers: {A} combinatorial characterization of the
  {Bethe} entropy function.
\newblock \emph{IEEE Transactions on Information Theory}, 59\penalty0
  (9):\penalty0 6018--6048, 2013.

\bibitem[Weibel(1994)]{Weibel1994}
C.~A. Weibel.
\newblock \emph{An introduction to homological algebra}.
\newblock Cambridge University Press, Printed in USA, 1994.

\end{thebibliography}
\end{document}